\newtheorem{theorem}{Theorem}[section]
\newenvironment{definition}[1][Definition]{\begin{trivlist}
		\item[\hskip \labelsep {\bfseries #1}]}{\end{trivlist}}
\begin{document}

\newcommand{\NAMNECACHE}{K-Way}
\newcommand{\LOCKSET}{KW-LS}
\newcommand{\FREEAR}{KW-WFA}
\newcommand{\COUNTER}{KW-WFSC}

\newtoggle{MW2020}
\toggletrue{MW2020}

\newtoggle{SMALL}
\togglefalse{SMALL}

\newtoggle{MEDIUM}
\toggletrue{MEDIUM}

\newcommand*{\ARXIV}{}
\newcommand*{\ICPP}{}
	
	
\title{Limited Associativity Makes Concurrent Software Caches a Breeze}
	
\author{
	Dolev Adas\\
	Computer Science\\
	Technion\\
	\texttt{sdolevfe@cs.technion.ac.il}
\and
	Gil Einziger\\
	Computer Science\\
	Ben-Gurion University of the Negev\\
	\texttt{gilein@bgu.ac.il}
\and
	Roy Friedman\\
	Computer Science\\
	Technion\\
	\texttt{roy@cs.technion.ac.il}
} 

\maketitle

\begin{abstract}
Software caches optimize the performance of diverse storage systems, databases and other software systems.
Existing works on software caches automatically resort to fully associative cache designs. 
Our work shows that limited associativity caches are a promising direction for concurrent software caches.
Specifically, we demonstrate that limited associativity enables simple yet efficient realizations of multiple cache management schemes that can be trivially parallelized.
We show that the obtained hit ratio is usually similar to fully associative caches of the same management policy, but the throughput is improved by up to x$5$ compared to production-grade caching libraries, especially in multi-threaded executions.
\end{abstract}

\section{Introduction}
\emph{Caching} is a fundamental design pattern for boosting systems' performance.
Caches store part of the data in a fast and close memory to the program's execution. 
Accessing data from such a memory, called a \emph{cache}, reduces waiting times, thereby improving running times. 
A \emph{cache hit} means that an accessed data item is already located in the cache; otherwise, it is a \emph{cache miss}. 
Usually, only a small fraction of the data can fit inside a given cache. 
Hence, a \emph{cache management scheme} decides which data items should be placed in the cache to maximize the \emph{cache hit ratio}, i.e., the ratio of cache hits to all accesses. 
More so, such a management scheme should be lightweight; otherwise, its costs would outweigh its benefits.


At coarse granularity, we can distinguish between \emph{hardware caches} and \emph{software caches}. A CPU's hardware cache uses fast SRAM memory as a cache for the slower main memory, which is made of DRAM~\cite{LRU}. 
In addition, the main memory (DRAM) is often utilized as a cache for secondary storage (disks and SSDs) or network/remote storage.
Similarly, most storage systems include a software-managed cache, e.g., the Velocix media platform stores video chunks in a multi-layer cache~\cite{Hifi}.
Additional popular examples include storage systems like Redis~\cite{redis-lru}, Cassandra~\cite{cassandra}, HBase~\cite{hbase} and Accumulo~\cite{accumulo}, graph databases like DGraph~\cite{dgraph} and neo4j~\cite{neo4j}, etc.
Web caching is another popular example.
\nottoggle{MW2020}{
In addition, caches are often hierarchical, e.g., L1, L2, and L3 hardware caches and multiple levels of \emph{proxy caches} in the case of network software cache.
}

An important aspect of the cache management scheme is to select a \emph{victim} for eviction whenever there is not enough room in the cache to admit a newly accessed data item.
In \emph{fully associative} caches, the cache management schemes can evict any of the cached items, as illustrated in Figure~\ref{fig:motivation}. 
In contrast, \emph{limited associativity} caches restrict the selection of data items that may be evicted.
Specifically, limited associativity caches are portioned into independent \emph{sets}.
Each set consists of $k$ \emph{ways} (places), each of which can store a data item. 
When admitting an item to a limited associativity cache, one must select the cache victim from the same set as the newly admitted item.
In such caches, known as \emph{k-way set associative caches}, items' IDs are mapped into sets using a hash function and each set is an \emph{independent} sub-cache of size $k$.

In hardware, limited associativity caches are simpler to implement, consume less power to operate and are often significantly faster than fully associative designs.
Thus, the vast majority of hardware-based caches have limited associativity. 
In contrast, non-distributed software-based caches are almost always fully associative. 
Specifically, cache implementations usually employ hash tables as building blocks. 
This \nottoggle{MW2020}{design choice} is often so obvious that papers describing software caches do not even explicitly mention that their algorithms are fully associative~\cite{Hyperbolic,EFM17,ARC,AdaptiveCacheReplacement}. 
A possible reason for this lack of previous consideration stems from the understanding that fully associative caches should yield higher hit ratios than the same sized limited associativity caches.
Further, it is relatively easy to implement such caches from fully associative building blocks such as concurrent hash tables.

\begin{figure}[t]
	\center{
\ifdefined\ARXIV
		\subfigure[Fully Associative Cache]{\includegraphics[width=0.35\columnwidth]{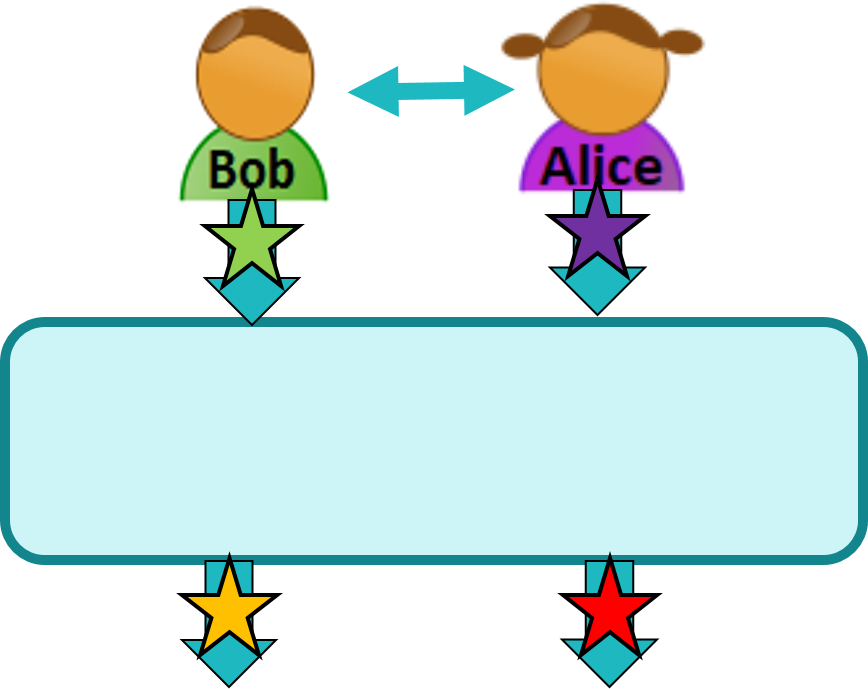}}\hfill
		\subfigure[\mbox{Limited Associativity Cache}]{\includegraphics[width=0.35\columnwidth]{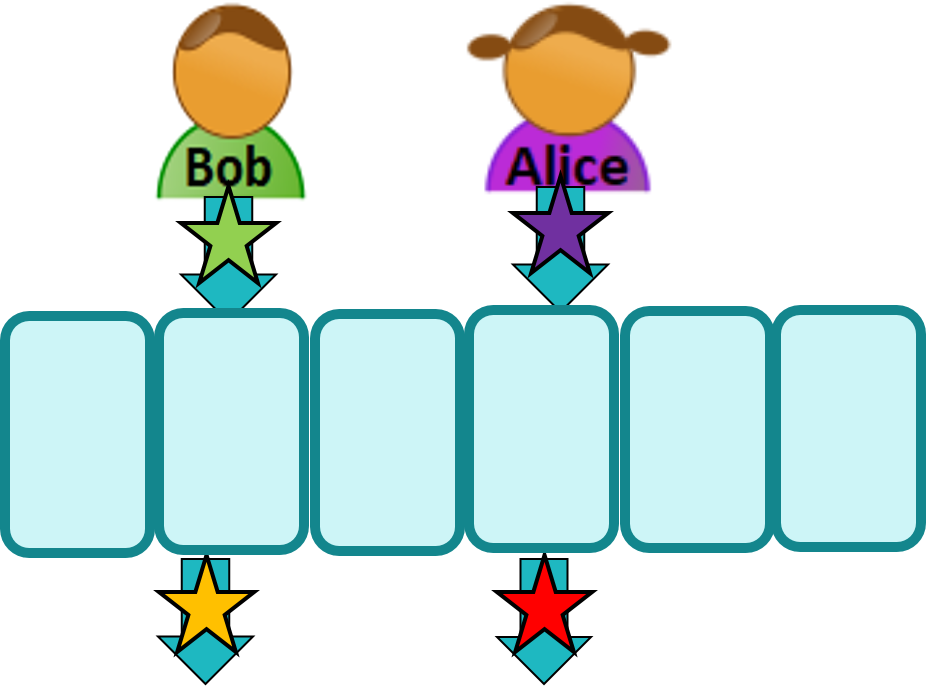}}
\else
		\subfigure[Fully Associative Cache]{\includegraphics[width=0.45\columnwidth]{FullyAssociativeCache}}\hfill
		\subfigure[\mbox{Limited Associativity Cache}]{\includegraphics[width=0.45\columnwidth]{K-WayCache}}
\fi
	}
	\caption{A schematic overview of a fully associative cache (a) and a limited associativity cache (b). In a fully associative cache, Alice and Bob operate on a shared resource and require some form of synchronization between them. In the limited associativity case, the cache is partitioned into many independent sub-caches, also known as sets, and updates are assigned to sets through hashing. Most times, Alice and Bob operate on different sets requiring no synchronization at~all.}
	\label{fig:motivation}	
\end{figure}

However, our work argues that software caches significantly benefit from a limited associativity design due to the following advantages:
First, limited associativity caches are embarrassingly parallel and allow for concurrent operations on different sets without any form of synchronization (as illustrated in Figure~\ref{fig:motivation}).
In contrast, it is far from trivial to introduce concurrency to most fully associative cache policies as often these cache management policies rank cached data items and then choose the ``worst'' item for eviction, c.f.,~\cite{SLRU,SurveyOfCacheReplecmentStrategies}. 
Thus, parallel actions may select the same cache victim or update the ranks of cached data items.
For example, in LRU, we get contention on the head of the LRU list~\cite{FM2020-fast}.
Aside from parallelism, cache implementations rely on elaborate data structures that often require an excessive number of pointers.
Also, hash tables almost always have a constant fraction of unused space. 
In comparison, limited associativity caches offer a denser representation of data. When $k$ is reasonably small, we can scan all the items in the set without requiring any auxiliary data structures.  

The above limitations of full associativity have motivated the development of reduced accuracy management policies such as Clock~\cite{clock,clock-pro}, sampled LRU~\cite{redis-lru} and sampled dynamic priority based policies~\cite{Hyperbolic,LHD}.
As elaborated in this work, we claim that limited associativity is a beneficial design alternative compared to Clock and sampling, especially for parallel systems.

\subsection{Contributions}

In this work, we explore and promote limited associativity designs for software caches.
Specifically, our first contribution is the implementation of several leading cache replacement policies in a limited associativity manner.
As we show, limited associativity leads to simplified data structures and algorithms.
These can be trivially implemented in static memory with minimal overhead and operate in constant time.
Further, these realizations trivially support wait-free concurrent manipulation and avoid hot-spots.

Our second contribution is an evaluation of the hit-ratio obtained by multiple cache management policies on real traces for $k$-way set associative caches and for fully associative caches.
Our findings indicate that, even for relatively small values of $k$, the difference between the hit-ratio obtained by a $k$-way associative scheme and the corresponding fully associative scheme is marginal.

Since the runtime of each operation depends on the size of each set, low associativity is preferred for speed. 
Low associativity also implies more independent sets which reduces the contention on each set, and allows for better parallelism. 
However, this is a tradeoff as associativity also impacts the hit-ratio and if we set the associativity too low then we would suffer a non-negligible drop in hit-ratio.  
Our work demonstrates that a reasonable associativity value of $8$ provides the best of both worlds as it yields a very similar hit-ratio to fully associative caches, while being considerably faster. 

These results echo previous studies from hardware caching~\cite{LRU,HS89}, where limited associativity was found to be nearly as effective as full associativity.
However, those studies were mostly conducted using hardware workloads, which tend to exhibit higher locality than software workloads.

We stress that limited associativity is a design principle rather than a cache management policy.
Hence, we can adjust most existing policies to their respective limited associativity version with small changes. 
This assertion is especially straightforward for sampling based policies such as sampled LRU~\cite{redis-lru}, Hyperbolic~\cite{Hyperbolic} and LHD~\cite{LHD} that randomly select a small number of cached items and choose the victim only from these items.  Notice that limited associativity only requires computing a single hash function per cache miss compared to invoking the PRNG multiple times in sampled approaches.
The sampled approach also requires accessing multiple random memory locations, which is not friendly for the hardware cache.
In contrast, in limited associativity, we access a short continuous region of memory.

Further, contemporary cache management schemes, including ARC~\cite{ARC}, LIRS~\cite{LIRS}, FRD~\cite{FRD} and W-TinyLFU~\cite{EFM17,EEFM18} maintain two or more cache regions, each of which handled in a fully associative manner.
We argue that each cache region could be treated as a corresponding limited associativity region for these schemes.

Last, we evaluate the speedups obtained by our concurrent implementations compared to leading alternatives.
These demonstrate the performance benefit of limited associativity designs for software caches in modern multi-core computers.

\paragraph*{Paper Organization}
Section~\ref{sec:related} surveys related work and provides the background necessary to position our work within the broader context of relevant designs. 
Section~\ref{sec:tiny} suggests several ways to implement $k$-way set associative caches efficiently in software. 
\ifdefined\ICPP
\else
Section~\ref{sec:analysis} provides a brief analysis that establishes that limited associativity caches can mimic (smaller) fully associative caches. 
\fi
Section~\ref{sec:eval} shows an evaluation of the hit ratio obtained by limited associativity caches for a wide selection of caching algorithms and of their throughput compared to the Guava and Caffeine Java caching libraries. 
We conclude with a discussion in Section~\ref{sec:discussion}. 

\section{Related Work}
\label{sec:related}

\subsection{Cache Management Policies}
A cache management scheme's design often involves certain (possibly implicit) assumptions on the characteristics of the ``typical'' workload. 
For example, the \emph{Least Recently Used (LRU)} policy always admits new items to the cache and evicts the least recently used items~\cite{LRU}. 
LRU works well for \emph{recency biased workloads}, where recently accessed items are most likely to be accessed again.

Alternatively, the \emph{Least Frequently Used (LFU)} policy, also called \emph{Perfect LFU}, assumes that the access distribution is fixed over time, meaning that frequently accessed data items are more likely to be reaccessed.
Hence, Perfect LFU evicts the least frequently used item from the cache and admits a new one if it is more frequent than the cache victim.
For synthetic workloads with static access distributions, Perfect LFU is the optimal cache management policy~\cite{WLFU}.
Real workloads are dynamic, so practical LFU policies often apply aging mechanisms to the items' frequency. 
Such mechanisms can be calculating frequency on sliding windows~\cite{WLFU}, using exponential decay~\cite{LFUAGING,LFUDA}, or periodic aging~\cite{EFM17}.

Operation complexity is another essential factor in cache design.
Both LRU and LFU need to maintain a priority queue or a heap, and can be implemented in constant time~\cite{LFUIMPl,SpaceSavings,WCSS}. LRU and variants of LFU also maintain a hash table for fast access to items and their metadata.
Another limitation of priority queue based LRU implementations is the high contention on the head of queue due to the need to update it on each cache access (both hits and misses).

The realization that a good cache management mechanism needs to combine recency and frequency led to the development of more sophisticated approaches~\cite{FM2020-fast}.
LRU-K~\cite{LRUK} maintains the last $K$ occurrences of each item and orders all items based on the recency of their last $K^\mathrm{th}$ access.
The drawbacks are significant memory and computational overheads.
Also, most benefit comes when $K=2$, with quickly diminishing returns for larger values of $K$.
2Q~\cite{2Q} and SLRU~\cite{SLRU} approximate LRU-2 by maintaining two fixed size LRU segments, a \emph{probation segment} and a \emph{protected segment}.
New items are always admitted into the probation segment and when an item in the probation segment is accessed again, it is moved to the protected segment.
An insertion to a full probation segment results in the least recently used item (in that segment) being evicted from the cache.
Similarly, any insertion to a full protected segment results in moving its LRU victim to the probation segment (which may then select its own LRU victim).

\emph{Adaptive Replacement Cache} (ARC)~\cite{ARC} also maintains two LRU lists, but addresses their respective ratio using a dynamically adaptive mechanism.
Further, each of these LRU lists maintains an additional LRU lists of \emph{ghost entries}, which include only the meta-data of recently evicted items.
A cache miss which is a hit in one of the ghost entries indicates that its corresponding  LRU list should be enlarged at the expense of the other.
Ghost entries have become very common in other state-of-the-art cache management policies.
\emph{Clock with Adaptive Replacement} (CAR)~\cite{car} replaces the LRU caches in ARC with Clock caches to mitigate the contention problems caused by LRU.

\emph{Low Inter-reference Recency Set} (LIRS)~\cite{LIRS} is a page replacement algorithm that attempts to directly predict the next occurrence of an item using a metric named \emph{reuse distance}.
To realize this, LIRS maintains two cache regions and a large number of ghost entries.
FRD~\cite{FRD} can be viewed as a practical variant of LIRS, which overcomes many of the implementation related limitations of~LIRS.

\iftoggle{MW2020}
{
\emph{Window-TinyLFU} (W-TinyLFU)~\cite{EFM17} also balances between recency and frequnecy using two cache regions, a \emph{window cache} and the \emph{main cache}.
Yet, it controls which items are admitted to the main cache using an optimized counting Bloom filter~\cite{CountingBloom} based admission filter and thereby avoids the need to maintain ghost entries.
The relative size of the two caches is dynamically adjusted at runtime~\cite{EEFM18}.
}
{
\emph{TinyLFU}~\cite{TinyLFU} is an LFU like admission policy that avoids the use of ghost entries. 
The goal of TinyLFU is only to admit a new item only if it is more frequent than the cache victim. 
Instead of ghost entries, TinyLFU estimates the frequency of each element using a counting Bloom filter~\cite{CountingBloom}. 
\emph{Window-TinyLFU} (W-TinyLFU) extends TinyLFU by utilizing two caches; a TinyLFU managed SLRU cache and an LRU cache.
New items are always inserted into the LRU cache, which is also called \emph{window cache} and the victim of the window cache is suggested to the TinyLFU managed cache. In the \emph{Hill Climber W-TinyLFU (HC-W-TinyLFU)}~\cite{EEFM18} we use a simple hill climber to periodically reconfigure the sizes of the LRU and the SLRU caches. 
}

As we mentioned before, what is common to ARC, CAR, LIRS, SLRU,  2Q, FRD and W-TinyLFU is that they all maintain two or more caches regions, each of which is maintained as a fully associative cache.
We promote implementing each such cache \iftoggle{MW2020}{region}{region, regardless of the overall high level policy,} as a limited associativity cache.

Hyperbolic caching~\cite{Hyperbolic} logically maintains a dynamically evolving priority for each cached item, which is the number of times that item has been accessed divided by the duration of time since the item was inserted into the cache.
To make this practical and efficient, priorities are only computed during the eviction and only for a small sample of items.
Among the sampled items, the one whose priority is smallest becomes the victim.
As we discuss later, this policy can be trivially implemented using an associative design.

\nottoggle{MEDIUM}{
Other variants of the problem include different size items~\cite{SIZE,berger2017adaptsize,LHD,LruSP,GDSF,GD}.
For example, SIZE~\cite{SIZE} offers to remove the largest item first, while LRU orders the same size items. 
Alternatively, LRU-SP~\cite{LruSP} weighs both the size and the frequency of an item when picking a cache victim.
}


\subsection{Associativity in Caches}
The only limited associativity software cache we are aware of is HashCache~\cite{hashcache}.
Yet, that work focused on very memory constraint Web caches in which the meta-data cannot fit in main memory and the goal is to minimize disk accesses to the meta-data.
Further, it only considered the LRU replacement policy and only very particular workloads.
In this paper, we perform a more systematic study of limited associativity's effectiveness to software cache design.

\nottoggle{SMALL}{
	Current caching libraries 
	use fully associative caches. 
	Yet, in principle, any distributed cache (e.g., memcacheD) can be viewed as relying on limited associativity, since each node manages its caches independently to the others.
	When an item is added to a given cache, the victim must be selected from the same host.
	The main difference though is that in distributed caching, each host is typically responsible for a very large range of objects (equivalent to having a very large $k$ value).
	In contrast, in $k$-way associativity, our goal, as explained before, is to have a large number of small sets (low $k$ values).
	Further, \nottoggle{MW2020}{hardware caches such as} the}{The} CPU's L1, L2 and L3 caches are almost always implemented through limited associativity, with the exception of TCAM based caches. 
In hardware, it is clear that fully associative caches require fewer circuits to implement and that they work faster. 
\ifdefined\ICPP
\else
In particular, techniques like d-left hashing~\cite{dlCountingBloom} can be seen as efficient ways to simulate a fully associative structure from limited associativity building blocks. 
\fi

\subsection{On Hash Tables and Caches}
As mentioned above, most fully associative software caches use fully associative hash tables to quickly find the cached items. 
In principle, hash tables incur non-negligible overheads. 
Specifically, open addressing hash tables such as the classic Linear Probing, or the modern Hopscotch hash table~\cite{hopscotchhashing}, are implemented on top of partially full arrays. 
In all open address hash tables, a constant fraction of these arrays must remain empty. 
Thus, relying on such tables results in underutilized space that could potentially store useful items.  
Closed address hash tables such as Java's concurrent hash map~\cite{javaMap} are another alternative.
Yet, they incur significant overheads for pointers, which results in a suboptimal data density. 

In contrast, limited associativity caches can be implemented using the same arrays as open address hash tables, but we can use 100\% of the array's cells. 
In principle, we do not even have to use pointers and achieve a very dense representation of data. 
Such a trade-off is potentially attractive since our caches can store more items within a given memory unit and having more items increases the cache hit ratio. 
On the other hand, there is less freedom in selecting the cache victim, which may reduce the hit ratio.

Let us emphasize that hashtables and caches are not the same. 
While a hashtable can be used to store cached objects, one would still need additional data-structures and algorithms to implement the cache management policy, and this is the main source of pain in software caches.
On the other hand, caches are allowed to evict items while hashtables retain all the items that were not explicitly removed. 

\subsection{Parallel Caches}
Caches need to operate fast and thus parallelism is a vital consideration. 
Specifically, LRU is known to yield high contention on the head of the list as a simple implementation moves elements to the head of the list on each access. 
\iftoggle{SMALL}{
	This contention motivated alternative approximate LRU designs such as Clock~\cite{clock} and MemC3~\cite{MemC3}.	
}
{
	This contention motivated alternative designs such as Clock~\cite{clock}.
	In Clock, each entry receives a single bit, indicating whether it was recently accessed. 
	Items are always admitted by Clock and the cache victim is the first item with an unset Clock bit.
	The Clock hand clears set bits as it searches for the cache victim. 
	Although the worst-case complexity of Clock is linear, it parallelizes well and creates little contention and is therefore used in practice as an operating system page replacement cache.
	Another interesting approach is taken by~\cite{MemC3}, which uses Bloom filter based approximations of LRU to enable better concurrency.
}

Fully associative cache management often relies on a hash table that needs to be parallelized. 
Such tables are a shared resource and are usually not embarrassingly parallel~\cite{michael2002high},~\cite{MaierSD19}. 
On the other hand, limited associativity caches are embarrassingly parallel as actions on different sets are completely independent and require no synchronization. 
This property holds regardless of the (limited associativity) eviction policy.  

\section{Limited Associativity Architecture}
\label{sec:tiny}

A \NAMNECACHE{} cache supports two operations: get/read and put/write, as customary in other caches such as Caffeine~\cite{CaffeineProject} and Guava~\cite{guava-cache}.
A get/read operation retrieves an item's value from the cache or returns null if this item is not in the cache.
The put/write operation inserts an item into the cache; if the item already exists in the cache, this operation overwrites its value.
Both types of operations update the item's metadata that is used by the management policy.

\begin{figure}[t]
	\center{
\ifdefined\ARXIV
		\includegraphics[width=0.7\columnwidth]{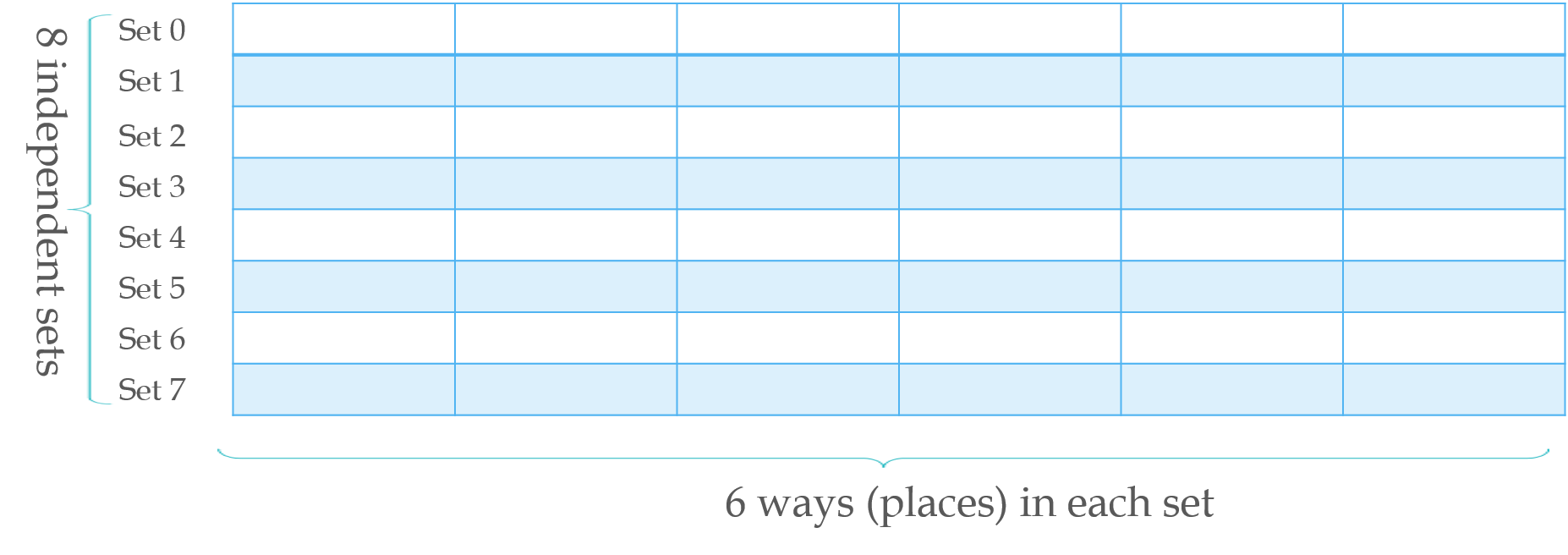}
\else
		\includegraphics[width=\columnwidth]{kway-organization}
\fi
	}
	\vspace{-0.5cm}
	\caption{An example of a 48 items cache organized in a 6-way set associative design. The cache is broken into 8 independent sets such that each contains a up to 6 items.  }
	\label{fig:k_way}	
\end{figure}

Figure~\ref{fig:k_way} illustrates such cache organization consisting of $8$ independent sets that are held in an array.
Each set is an array of $6$ ways (nodes) that store items with metadata that may be required by various eviction policies.

Upon insertion of an item to the cache, we use a hash function to determine its set.
If the set is full, we also need to select a victim from the same set to be evicted. 
\NAMNECACHE{} supports efficient implementations of numerous eviction policies without relying on expensive auxiliary data structures. 
For example, to implement LRU or LFU, we only need a short counter for each cached item in the set. 
We select the victim by reading all $K$ items in the set and evicting the one with the smallest counter. 
The difference between LRU and LFU is merely in the way the counter is updated. 
In LRU, we use it to record the (logical) timestamp of the last access, whereas in LFU, we use it to count how many times the entry was accessed in the past. 
To read an item, we scan the appropriate set. 
Once found, we update the corresponding item's metadata (updating the access time and frequency) atomically and returning the~value.

Clearly, the \NAMNECACHE{} time complexity of both get/read and put/write is O($K$), which is only dependent on the number of ways ($K$).
As $K$ is a small constant number, we treat it as constant time. 

Similarly, for Hyperbolic caching~\cite{Hyperbolic} we maintain two short counters for each item, one recording the time it was inserted ($t_0$) and the other ($n$) counting \iftoggle{SMALL}{its}{the} number of \iftoggle{SMALL}{accesses}{times it was accessed} (initialized to $1$).
For an eviction at time $t_e$, we scan the $K$ items at the corresponding set and select the item whose $n/(t_e-t_0)$ value is~minimal.

We implemented \NAMNECACHE{} in Java.
It supports five eviction policies: LRU, LFU, FIFO, Random and Hyperbolic.
The eviction policy is chosen at the constructor.
We have realized several concurrency control variants as detailed below.

\paragraph{$K$ Way Cache Wait Free Array - \FREEAR{}}
Each set is an array of Node references.
Algorithm~\ref{alg:ArrayClass} depicts the internal classes and variables.
Pseudo-code for the get operation is given in Algorithm~\ref{alg:getArray} and for the put operation in Algorithm~\ref{alg:putArray}.
To replace a victim node with a newly arriving one, we perform a CAS (compare and swap) operation on the address of this node. 
 
\begin{algorithm}[t]
	\caption{Internal \FREEAR{} classes and fields}
	\label{alg:ArrayClass}
	\scriptsize
	\begin{algorithmic}[1]
		\State class Node $\{$ \label{line:node}
		\State K \ key;
		\State V  value;
		\State AtomicInteger  counter;
		\State int  index;  $\}$
	
		\State class  set $\{$
		\State AtomicReferenceArray$<$Node$<$K, V$>$$>$  setArray;
		\State  AtomicLong  time;  $\}$  // time is only present in LRU 
	
	\end{algorithmic}
\end{algorithm}

\begin{algorithm}[t]
	\caption{\FREEAR{} get(read) operation}
	\label{alg:getArray}
	\scriptsize
	\begin{algorithmic}[1]
		\Function {get}{K key}
		\State int set =(int) hash(key) \& (numberOfSets-1); 	
			\For{int i = 0; i $<$ ways; ++i }
				\State Node$<$K, V$>$ n = cache[set].setArray.get(i);
				\If{ n != null  \&\&  n.key.equals( key)} 
					\State update(n.counter);
					\State return  n.value;
				\EndIf
			
			\EndFor	
				\State return null;			
			\EndFunction
	\end{algorithmic}
\end{algorithm}

\begin{algorithm}[t]
	\caption{\FREEAR{} put(write) operation }	
	\label{alg:putArray}
	\scriptsize
	\begin{algorithmic}[1]
		\Function {put}{K key , V value}
		\State int set =(int) hash(key) \& (numberOfSets-1));  
		
			\For{int \  i = 0; i $<$ ways; ++i}
				\State Node$<$K, V$>$ n = cache[set].setArray.get(i);
				\If{n != null \ \&\& \ n.key.equals( key)} 
					\State cache[set].setArray.compareAndSet(i,n,update);
					\State return ;
				\EndIf
			\EndFor	
		
		\State Node$<$K, V$>$ victim =policy.select(cache[set].setArray, key)

			\If{ victim != null } 
				\State {Node$<$K,V$>$ updateNode = new Node$<$K,V$>$(key,value,victim.index,cache[set].readTime());}
				\State {cache[set].setArray.compareAndSet(i,victim,update);}		
			
				\Else
					\For{int i = 0; i $<$ ways; ++i }
						\If{Node$<$K, V$>$) cache[set].setArray.get(i)== null}
							\State {Node$<$K, V$>$ updateNode = new \ Node$<$K, V$>$(key,value,victim.index,cache[set].readTime());}
						\State cache[set].setArray.compareAndSet(i ,null,updateNode);	
							\State return;	
							\EndIf	
						\EndFor	
			\EndIf	
		\EndFunction
	\end{algorithmic}
\end{algorithm}



To exchange an entire node atomically, we maintain each set of \NAMNECACHE{} as a reference array. 
Alas, this has the following shortcoming.
To search an item and select a victim, we must scan the entire corresponding set reading the metadata of each node.
With a reference array implementation, this means reading multiple different memory locations, one for each item in the set, which is expensive.

To improve this and enable scans to access continuous memory regions, we offer a second implementation in which we separated the counters and fingerprints from the nodes.

\paragraph{K Way Cache Wait Free Separate Counters - \COUNTER{}}
We store an array of counters and an array of fingerprints with the reference array of the nodes for each set.
Algorithm~\ref{alg:CounterClass} depicts the internal classes and variables.
To insert an item, we first scan the fingerprints array.
If a certain entry's fingerprint matches the key's fingerprint, we check the key inside the corresponding node.
If they are the same, we update this node with the new value and return.
If the item is new, we scan the counter array to find the lowest/highest counter, depending on the eviction policy.
We then replace the victim without accessing the node at all.
Pseudo-code for the get operation is listed in Algorithm~\ref{alg:getCounter} and for the put operation can be found in Algorithm~\ref{alg:putCounter}.

\begin{algorithm}[t]
	\caption{Internal \COUNTER{} classes and fields}
	\label{alg:CounterClass}
	\scriptsize
	\begin{algorithmic}[1]
		\State class Node $\{$
		\State  K  key;
		\State  V  value;
		\State  AtomicInteger  counter;
		\State 	int  index; $\}$
		
		\State class set $\{$
		\State AtomicReferenceArray$<$Node$<$K, V$>$$>$ setArray;
		\State AtomicIntegerArray counters;
		\State AtomicLongArray  fingerPrint;  $\}$
		
	\end{algorithmic}
\end{algorithm}

\begin{algorithm}[t]
	\caption{\COUNTER{} get operation }
\label{alg:getCounter}
	\scriptsize
	\begin{algorithmic}[1]
		\Function {get}{K key}
	\State {int set =(int) hash(key) \& (numberOfSets-1);}	
	\For{int i = 0; i $<$ ways; ++i}
		\State Long fp = cache[set].fingerPrint.get(i);
		\If{ fp.equals((Long)key)} 
			\State Node$<$K, V$>$ n = cache[set].setArray.get(i);
			\If{ n != null \&\& n.key.equals( key)} 
				\State update(n.counter);
				\State return  n.value;
			\EndIf
		\EndIf
	\EndFor	
		\State return null;		
	\EndFunction
	\end{algorithmic}
\end{algorithm}

\begin{algorithm}[t]
	\caption{\COUNTER{} put operation }	
	\label{alg:putCounter}
	\scriptsize
	\begin{algorithmic}[1]
	\Function {put}{$K \ key ,\ V \  value$}
	\State int  set =(int) hash(key) \& (numberOfSets-1));  
		
	\For{ int  i = 0; i $<$ ways; ++i }
		\State Long   fp = cache[set].fingerPrint.get(i);
		\If{ fp.equals((Long)key)} 
			\State Node$<$K, V$>$ n = cache[set].setArray.get(i);
			\If{ n != null  \&\&  n.key.equals( key)} 
				\State Node$<$K, V$>$ update = new Node$<$K, V$>$(key, value, i);
				\State  cache[set].setArray.compareAndSet(i, n, update);
				\State return;
			\EndIf
		\EndIf
	\EndFor	
		\State {Node$<$K,V$>$ victim =policy.select(cache[set].counters, key);}		
		\State {Node$<$K,V$>$ update = new Node<K,V>(key,value,victim.index,cache[set].readTime());}	
			\If{cache[set].setArray.compareAndSet(victim.index ,victim,update);}	
				\State cache[set].fingerPrint.set(victim,index,(Long)key);		
				\State cache[set].counters.set(victim,index, 0);			
			\EndIf
			
		\EndFunction
	\end{algorithmic}
\end{algorithm}

\paragraph{$K$ Way Cache Lock Set - \LOCKSET{}}
In our third implementation, we utilize one lock per set.
Algorithm~\ref{alg:LockClass} depicts the internal classes and variables. To read an item from the set, we lock the set and scan it.
If the item is in the set, we try to change the lock to a write lock to update the counter, return the value and finish. We treat writes in a similar manner.
Pseudo-code for the get operation is given in Algorithm~\ref{alg:getLock} and for the put operation in Algorithm~\ref{alg:putLock}.

\begin{algorithm}[t]
	\caption{Internal \LOCKSET{} classes and fields}
	\label{alg:LockClass}
	\scriptsize
	\begin{algorithmic}[1]
		\State class Node $\{$ 
		\State $\ \ K\ key;$
		\State $\ \ V\ value;$
		\State 	$\ \ Integer \ counter;\}$
		\State 	$\ \ int \ index;\}$
		
		\State class set $\{$
		\State $\ \   Node<K, V>[] setArray;$
		\State $\ \  StampedLock \  lock;\}$
	\end{algorithmic}
\end{algorithm}

\begin{algorithm}[t]
	\caption{\LOCKSET{} get operation }
	\label{alg:getLock}
	\scriptsize
	\begin{algorithmic}[1]
		\Function {get}{K key}
		\State int set =(int) hash(key) \& (numberOfSets-1));	
		\State long stamp =cache[set].lock.readLock(); 	
		\For{ int i = 0; i $<$ ways; ++i}
			\State Node$<$K, V$>$ n = cache[set].setArray.get(i);
			\If{n != null  \&\& n.key.equals( key)} 
				\State long stampConvert=cache[set].lock.tryConvertToWriteLock(stamp);
				\If{ stampConvert == 0  }
					\State  cache[set].lock.unlockRead(stamp);
					\State return n.value;
				\EndIf
				
				\State update(n.counter);
					\State  cache[set].lock.unlockWrite(stampConvert);
				\State return n.value;
			\EndIf
		
	\EndFor	
		\State  cache[set].lock.unlockRead(stamp);
		\State return null;		
		\EndFunction
	\end{algorithmic}
\end{algorithm}

\begin{algorithm}[t]
	\caption{\LOCKSET{} put operation }	
	\label{alg:putLock}
	\scriptsize
	\begin{algorithmic}[1]
		\Function {put}{ K key , V value}
		\State int set =(int) hash(key) \& (numberOfSets-1));
			\State long stamp =cache[set].lock.readLock();  	
			
			\For{ int i = 0; i $<$ ways; ++i}
				\State Node$<$K, V$>$ n = cache[set].setArray[i];
				\If{ n != null \&\& n.key.equals( key)} 
					\State long stampConvert=cache[set].lock.tryConvertToWriteLock(stamp);
					\If{ stampConvert == 0}
						\State  cache[set].lock.unlockRead(stamp);
						\State return ;
					\EndIf
					\State  n.value=value;
					\State update(n.counter)
					\State  cache[set].lock.unlockWrite(stampConvert);
					\State return; 
				\EndIf
	
			\EndFor

		\State Node$<$K, V$>$ victim =policy.select(cache[set].setArray, key)
				\State int \ victimIndex = 0
		\If{victim != null } 
			\State victimIndex = victim,index
				\Else
			\For{ int i = 0; i $<$ ways; ++i }
			\If{Node$<$K, V$>$ cache[set].setArray.get(i)== null}
				\State victimIndex =i
				\EndIf	
			\EndFor	
		\EndIf	
			\State Node$<$K, V$>$ updateNode = new Node$<$K, V$>$(key,value,victim.index,cache[set].readTime());
			\State cache[set].setArray[victimIndex].key=key;  		
			\State  cache[set].setArray[victimIndex]value.=value;	
			\State cache[set].setArray[victimIndex].counter=0;		
			\State cache[set].lock.unlockWrite(stampConvert);
			\State return;	
	
		\EndFunction
	\end{algorithmic}
\end{algorithm}

\ifdefined\ARXIV
\section{Formal Analysis}
\label{sec:analysis}
Assume that a fully associative cache policy would like to store $C$ items in the cache. 
We can guarantee that any specific $C$ items can be kept in a larger limited associativity cache.
Specifically, if we denote the number of sets $n$, then if $C\ge n\log(n)$ we get that the maximum load (among the $C$ items) is: $\frac{C}{n} + \Theta\left({\sqrt{\frac{C \log(n)}{n}}}\right)$.

The next theorem guarantees that a K-way set associative cache can mimic a (smaller) fully associative cache with high probability.
Thus, such caches are capable of comparable performance as their fully associative counterparts. 
Our theorem uses the following Chernoff bound (Theorem 4.4, page 66) from~\cite{Mitzenmacher2005}: 
Let $X_1,X_2,...X_n$ be independent Poission trails, with $E(X_i)=p_i$, let $X=\Sigma^{n}_{i=1} X_i$ and let $\mu = E(X)$  then: \iftoggle{MW2020}{$\Pr\left[X\ge (1+\delta)E(X)\right]\le e^{\frac{-E(X)\delta^2}{3}}.$}{$$\Pr\left[X\ge (1+\delta)E(X)\right]\le e^{\frac{-E(X)\delta^2}{3}}.$$}

\begin{theorem}
	Consider a \NAMNECACHE{} set associative cache of size $C'\ge 2C$, then 
	the probability for the cache to be able to store the desired $C'$ items is at most: $\frac{C'}{k} e^{\frac{k}{6}}$.
	\label{thm:kway}  
\end{theorem}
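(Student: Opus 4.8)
The plan is to read this as a classical balls-into-bins question and dispatch it with the quoted Chernoff bound plus a union bound. The "balls" are the $C$ items that the fully associative cache wishes to retain, and the "bins" are the $n = C'/k$ sets of the \NAMNECACHE{} cache; assuming the hash function spreads item IDs uniformly and independently over the sets, the \NAMNECACHE{} cache can hold all $C$ items exactly when no set is assigned more than its capacity $k$. So the first step is to reduce the claim to bounding the probability that some set overflows.

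Next I would fix a single set $i$ and let $X_i$ be the number of the $C$ items hashed into it. Then $X_i = \sum_{j=1}^{C} X_{ij}$, where $X_{ij}$ indicates that item $j$ lands in set $i$; these are independent Poisson trials with $\Pr[X_{ij}=1] = 1/n$, so $\mu_i := E[X_i] = C/n = Ck/C'$, and since $C' \ge 2C$ we get $\mu_i \le k/2$. I want $\Pr[X_i \ge k]$. Raising $\mu_i$ only makes overflow more likely, so it suffices to treat the worst case $\mu_i = k/2$ (formally, couple $X_i$ with a $\mathrm{Bin}(C'/2,\,1/n)$ variable of mean exactly $k/2$, which stochastically dominates it since $C'/2 \ge C$). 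Writing $k = (1+\delta)\mu_i$ with $\mu_i = k/2$ forces $\delta = 1$, and the quoted bound gives $\Pr[X_i \ge k] \le e^{-\mu_i \delta^2/3} = e^{-k/6}$.

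Finally I would union-bound over the $n = C'/k$ sets: the probability that at least one set overflows — equivalently, that the \NAMNECACHE{} cache cannot accommodate the desired items — is at most $n\cdot e^{-k/6} = \frac{C'}{k}\, e^{-k/6}$, so the cache succeeds with probability at least $1 - \frac{C'}{k}\, e^{-k/6}$. I would also connect this back to the preceding remark on maximum load: with $C' = 2C$ the expected occupancy of a set is exactly half its capacity, and the slack absorbs the $\Theta(\sqrt{\cdot})$ fluctuations except with the stated small probability.

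The only genuinely delicate point is the reduction from $\mu_i \le k/2$ to $\mu_i = k/2$ before invoking Chernoff: the factor $e^{-\mu\delta^2/3}$ is not monotone in $\mu$ by itself once we also let $\delta$ vary to keep $(1+\delta)\mu = k$ fixed, so one must justify that $\mu_i = k/2$ is truly the worst case — either by the stochastic-domination coupling above, or directly by noting that $\mu\delta^2 = (k-\mu)^2/\mu$ is decreasing in $\mu$ on $(0,k]$. Everything else (independence of the trials, the value of $E[X_i]$, the final union bound, and the implicit simple-uniform-hashing assumption on the hash function) is routine.
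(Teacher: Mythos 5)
Your proposal is correct and follows essentially the same route as the paper's proof: the same balls-into-bins reduction over the $C'/k$ sets, the same Chernoff bound on a single set's load with $\delta=1$, and the same union bound over sets. You are in fact slightly more careful than the paper — you explicitly justify why $\mu=k/2$ is the worst case when $C'>2C$ (the paper just substitutes $\delta=1$), and you implicitly correct the sign typo in the exponent, which must be $e^{-k/6}$ for the bound to be a probability.
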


\begin{proof}
When storing $C$ items in a $C'$ sized limited associativity cache partitioned into $\frac{C'}{k}$ sets, each set receives $\frac{Ck}{C'}=\frac{k}{(1+\delta)}$ of those $C$ items. 
We look at an arbitrary set and denote $X_i$ the random variable of the $i^\mathrm{th}$ item. 
That is, $X_i = 1$ if the $i'th$ item is placed on the set in question and zero otherwise. 
We denote by $X$, the total number of items (among the $C$) that are placed in the set. 
By definition, we get that: $E(X) =\frac{k}{(1+\delta)}$. 

From the Chernoff bound we get: $\Pr\left[X\ge (1+\delta)E(X)\right]\le e^{\frac{k\delta^2}{(1+\delta)3}}$. 
That is, the probability for each set to not be able to store one of the items allocated to it is: $e^{\frac{k\delta^2}{(1+\delta)3}}$.
We apply the union bound on all sets and obtain that the probability that none of the sets receives more items than it can store is at most: $\frac{C'}{k} e^{\frac{k\delta^2}{(1+\delta)3}}$.
Substituting $\delta =1$ concludes the proof. 
\end{proof}

For example, Theorem~\ref{thm:kway} shows that a 64-way cache of size 200k items can store any desired 100k items with a probability of over $99\%$. 
Alternatively, we can store in a 2M sized 128 way set associative cache any 1M items with a probability of over $99.999\%$. 

In practice, even when limited associativity caches miss out on some desired items, they store enough other items to yield comparable hit ratios as same size fully associative caches.
This is due to the following two reasons:
First, the above bound is not tight.
Second, it is quite reasonable that slight content divergence between caches would still result in similar hit ratios.
That is, the exact cache content is just a mean to obtain good hit ratio, not a goal by itself.


\label{anal:OperationComplexity}

\fi

\section{Evaluation}
\label{sec:eval}
\begin{figure}[t]
	\center{
		\includegraphics[width=0.6\columnwidth]{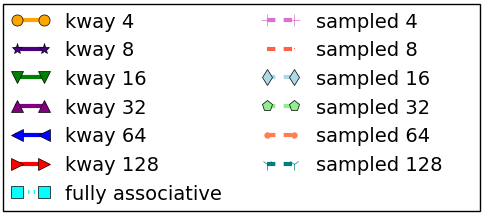}}
	
	\caption {Legend for all graphs. }
	\vspace{-0.3cm}
	\label{fig:legend}
\end{figure}

\begin{figure*}[t]
	\begin{center}
		\offinterlineskip
	\subfigure[LRU]{\includegraphics[width=0.45\columnwidth]{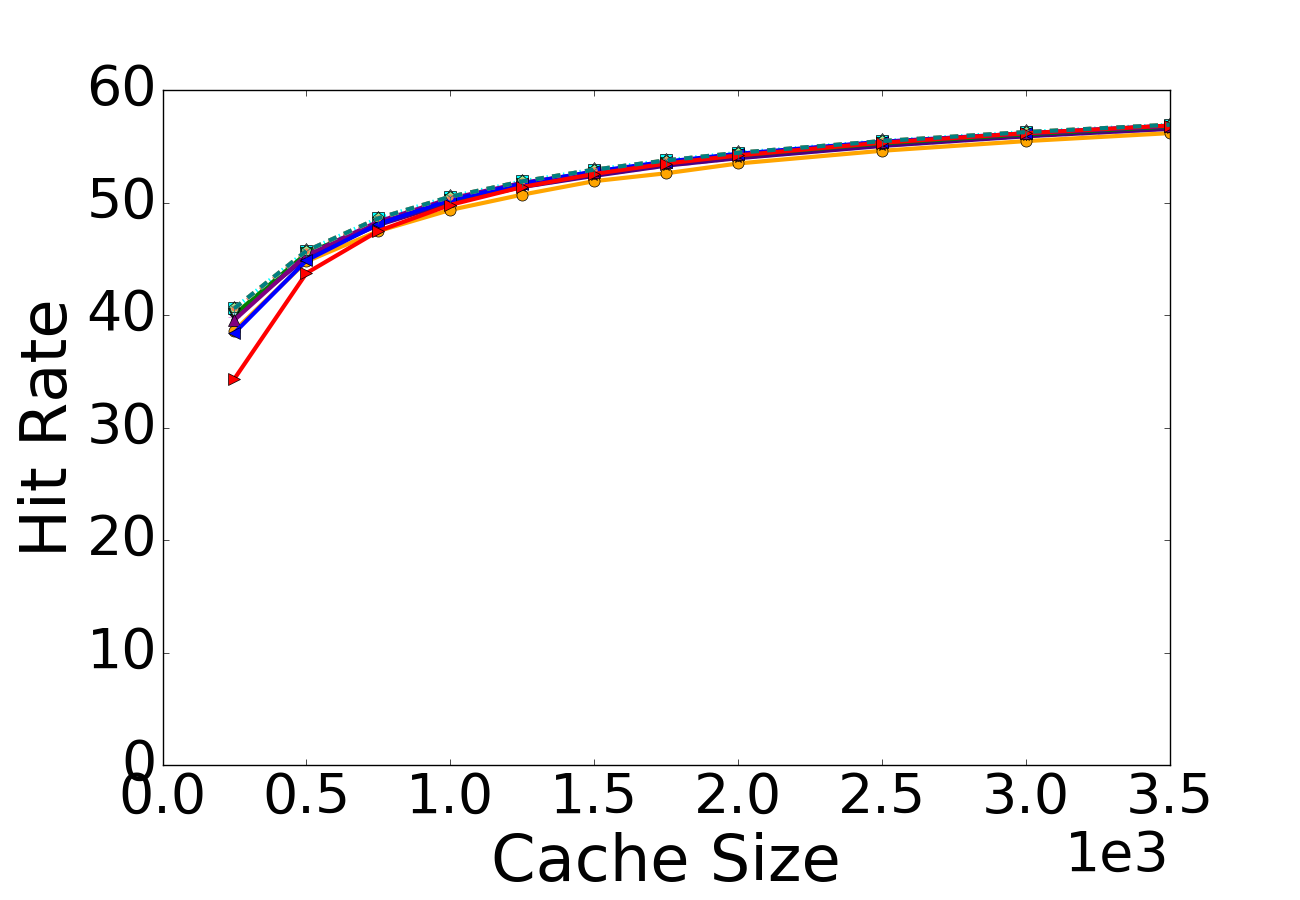}}	
	\subfigure[LFU +TinyLFU]{\includegraphics[width=0.45\columnwidth]{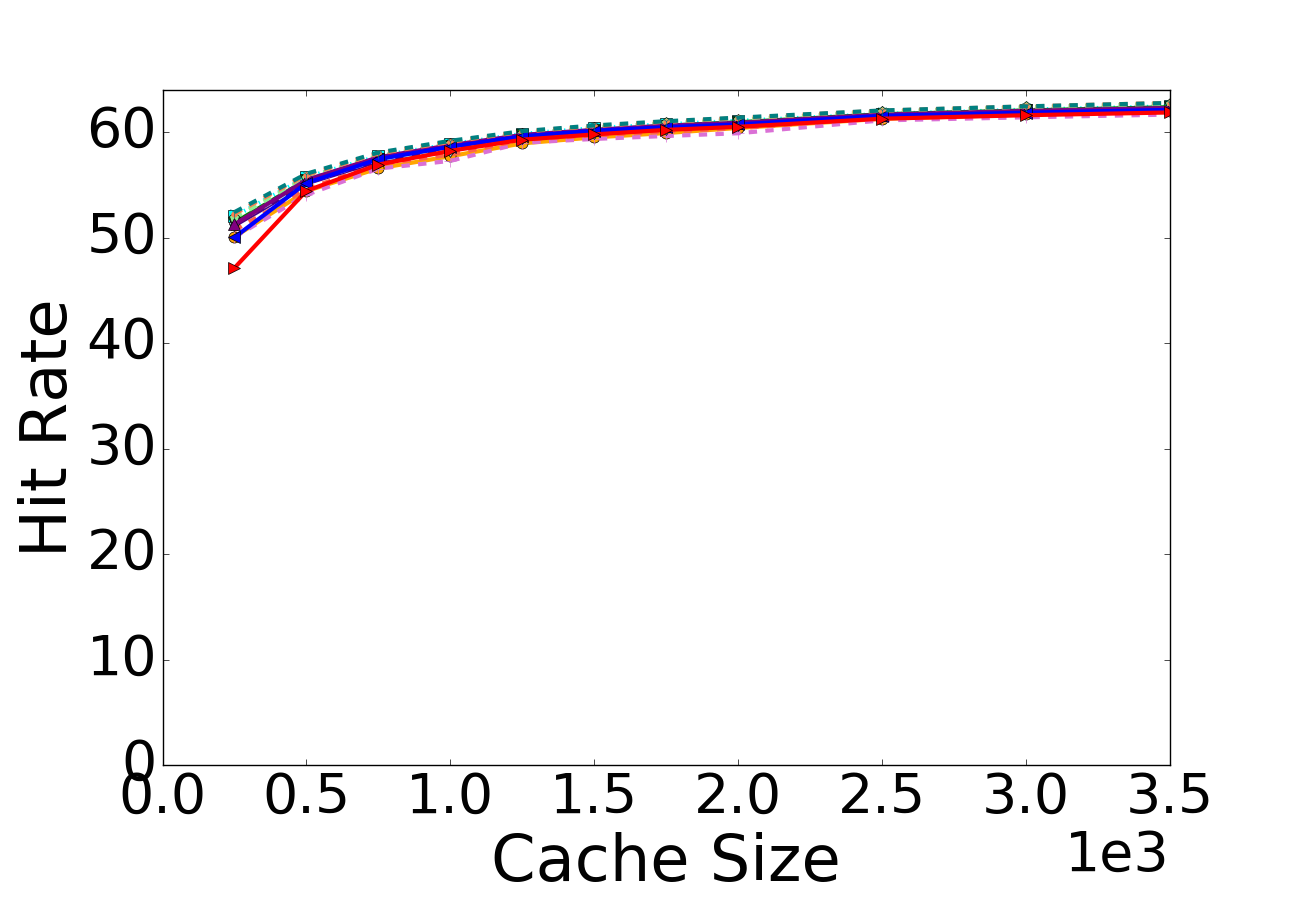}}
	\subfigure[Product]{\includegraphics[width=0.45\columnwidth]{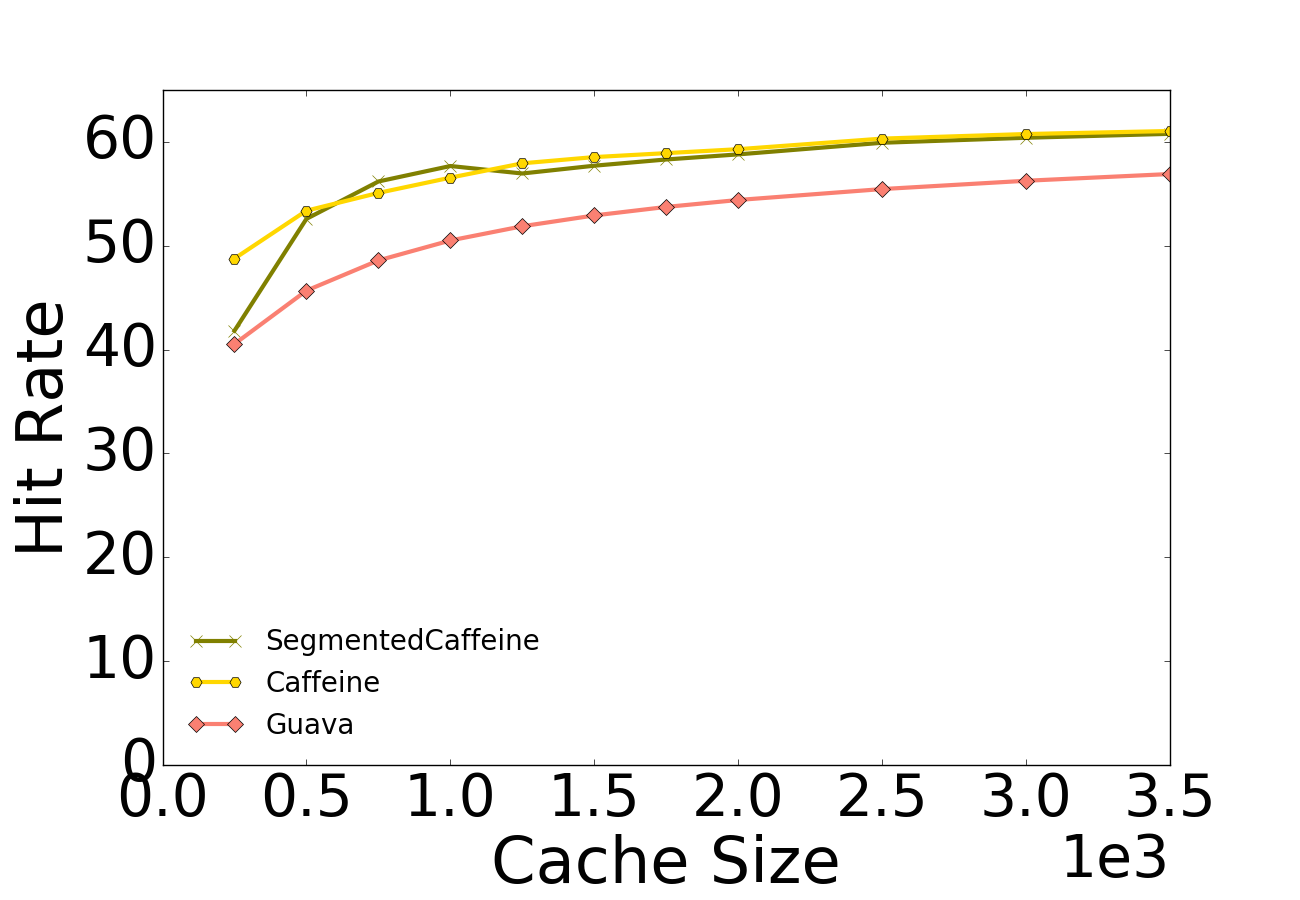}}
	\subfigure[LRU +TinyLFU]{\includegraphics[width=0.45\columnwidth]{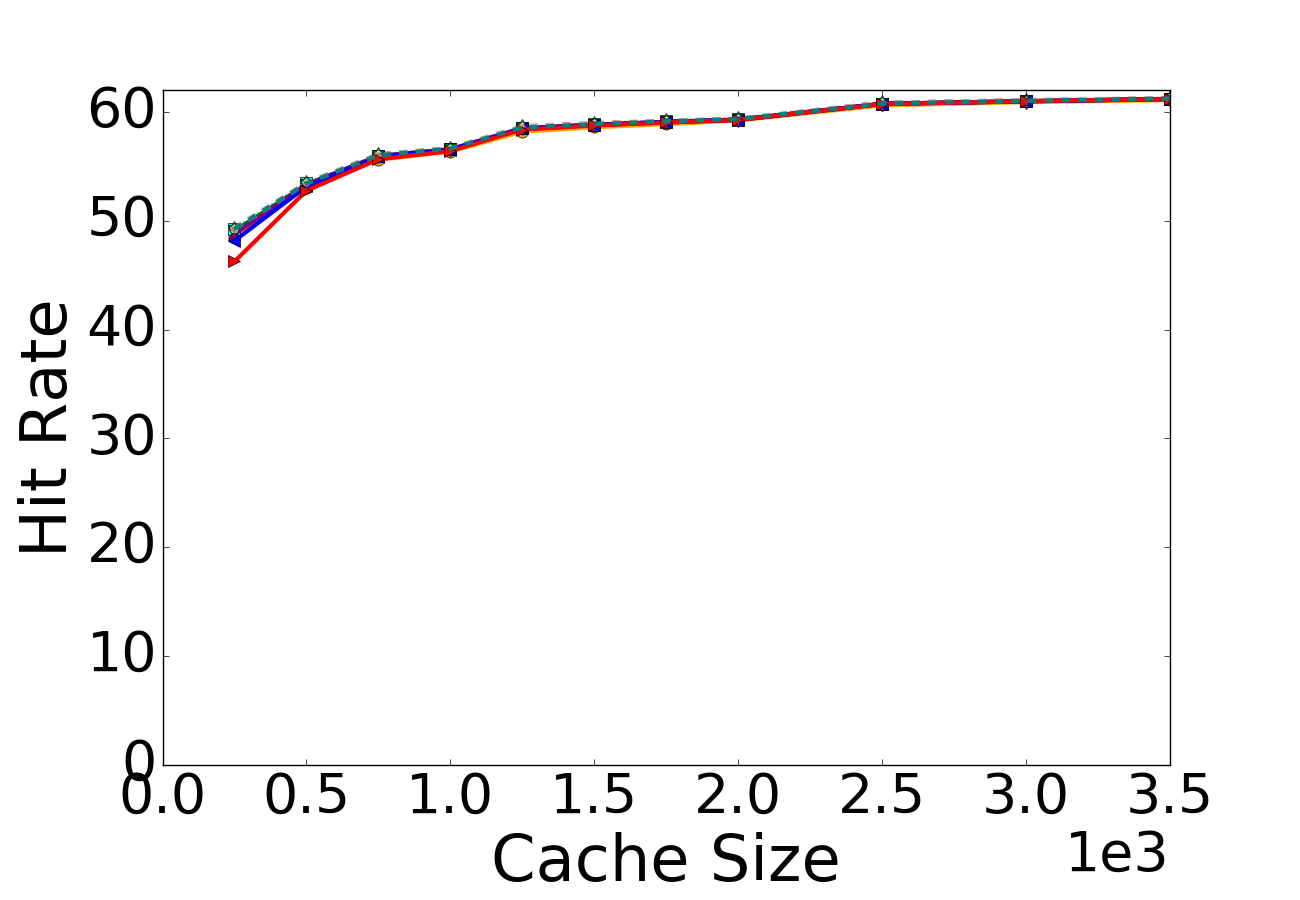}}
	\end{center}
\vspace{-0.5cm}
\caption{wiki1190322952.}
\label{fig:wiki1190322952}
\vspace{-0.5cm}
\end{figure*}
\nottoggle{MEDIUM}{
\begin{figure*}[t]
	\begin{center}
	\offinterlineskip
	\subfigure[LRU]{\includegraphics[width=0.45\columnwidth]{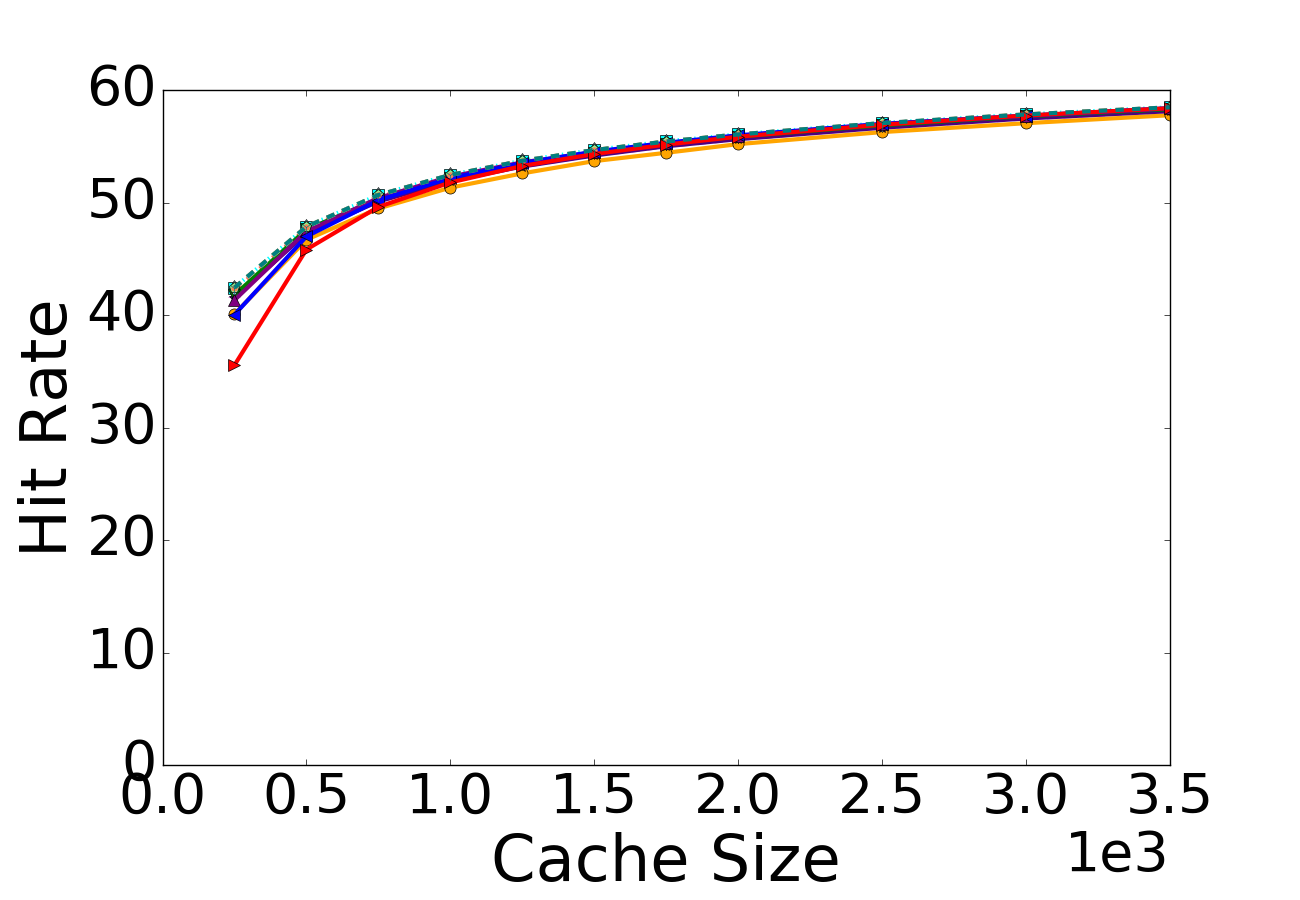}}
	\subfigure[LFU +TinyLFU]{\includegraphics[width=0.45\columnwidth]{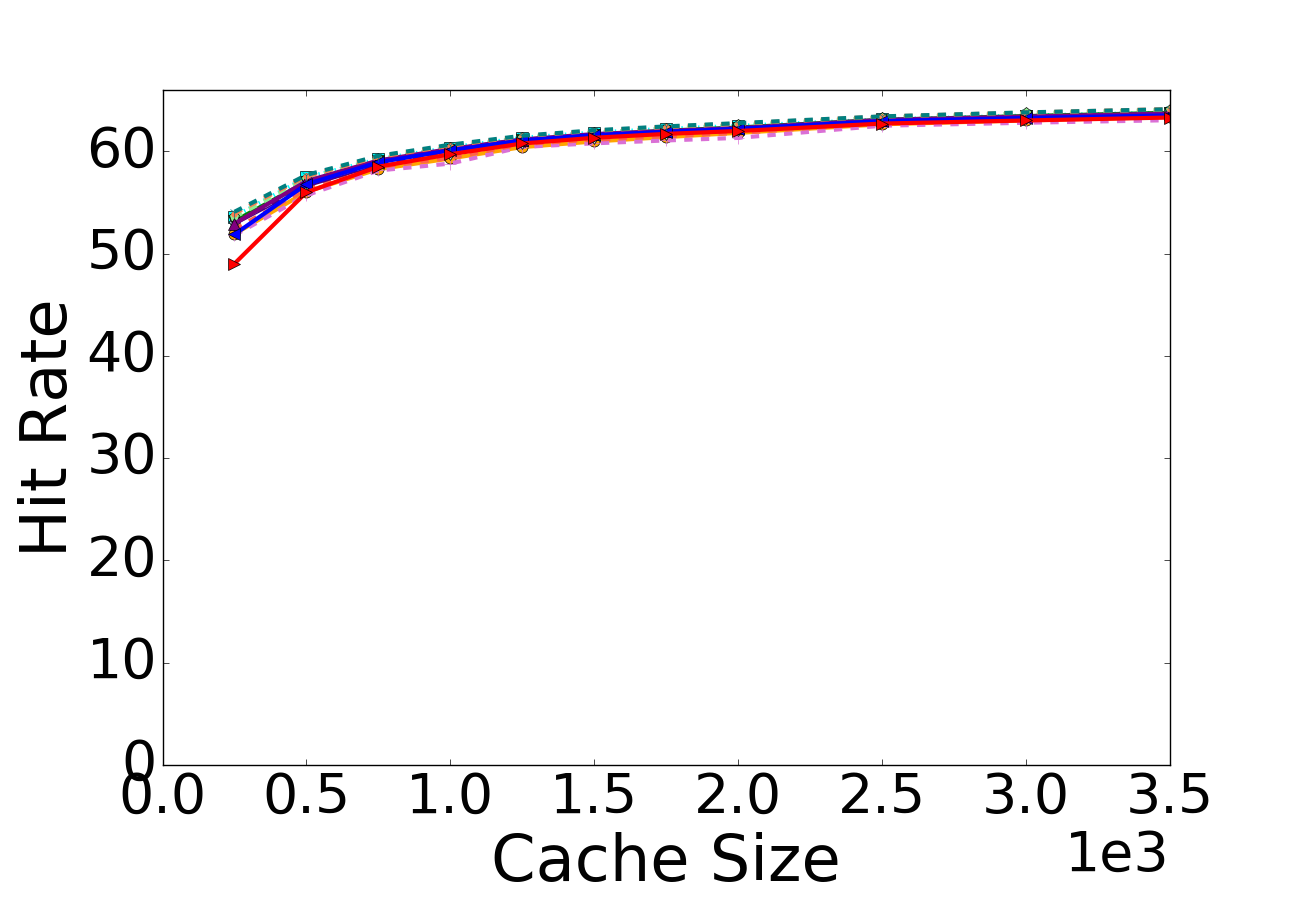}}
	\subfigure[Product]{\includegraphics[width=0.45\columnwidth]{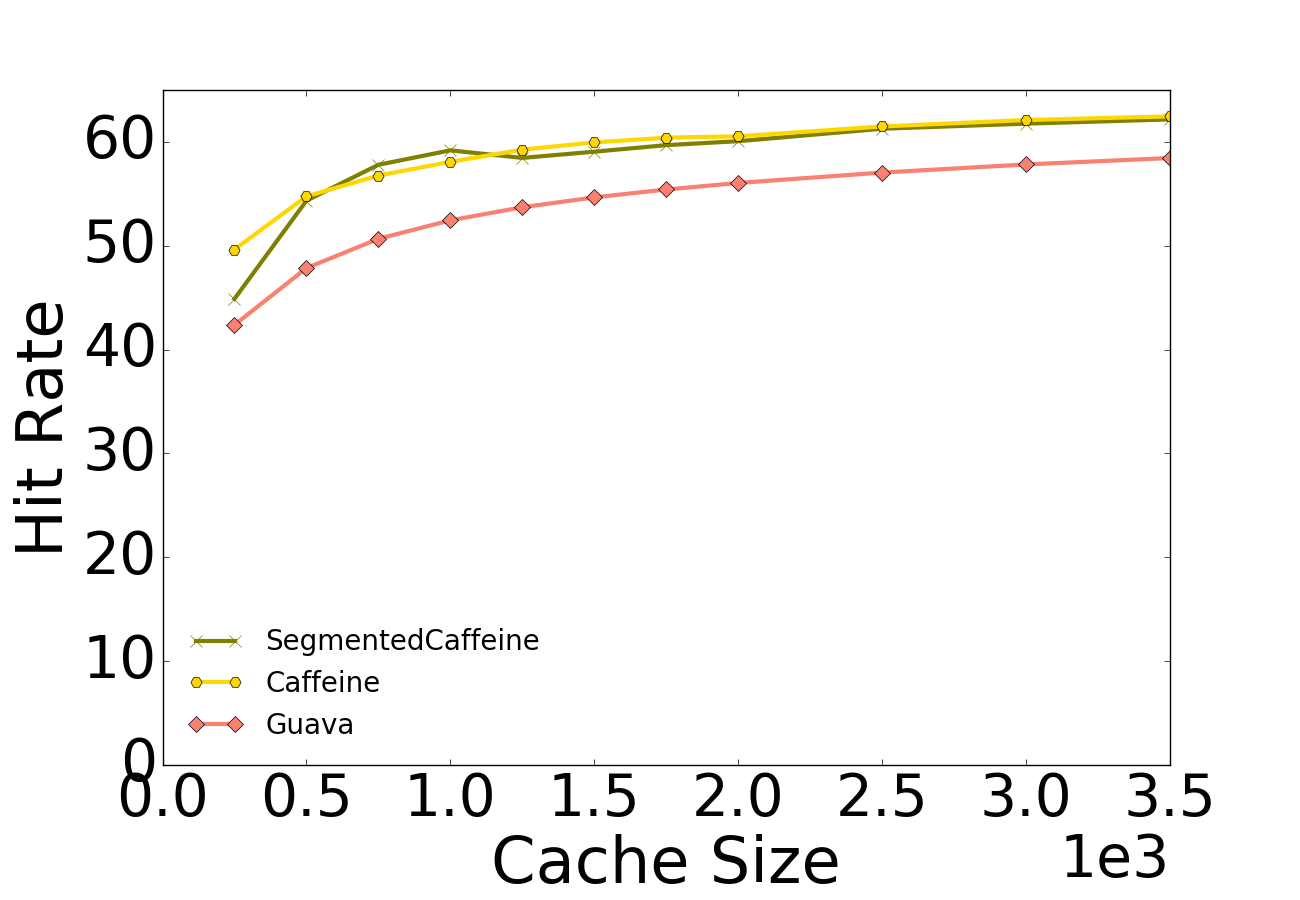}}	
	\subfigure[LRU +TinyLFU]{\includegraphics[width=0.45\columnwidth]{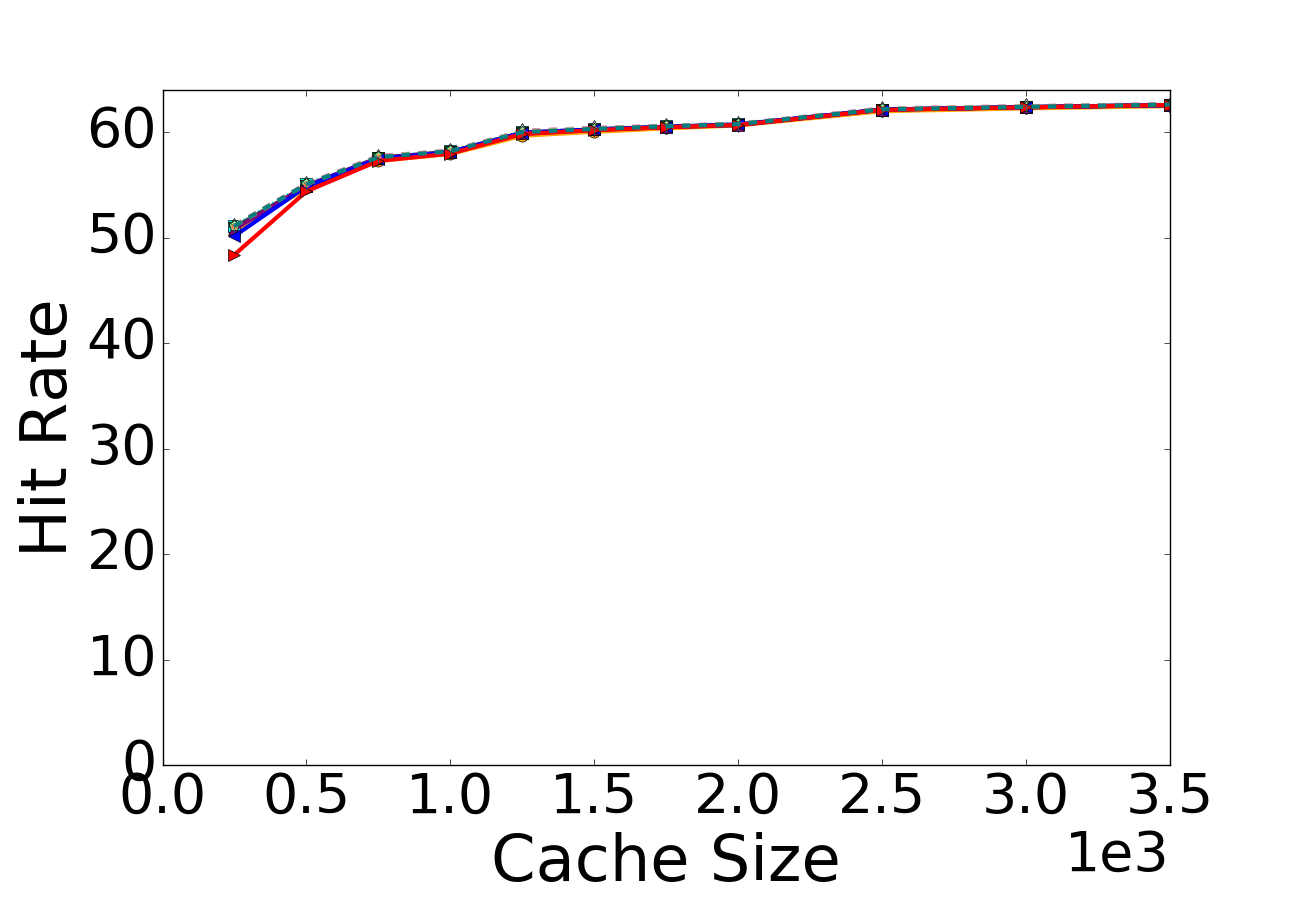}}	
	
		
	\end{center}
	\vspace{-0.5cm}
	\caption{wiki1191277217.}
	\label{fig:wiki1191277217}
	\vspace{-0.5cm}
\end{figure*}
}
\nottoggle{MEDIUM}{
\begin{figure*}[t]
	\begin{center}
		\offinterlineskip
		\subfigure[LRU]{\includegraphics[width=0.45\columnwidth]{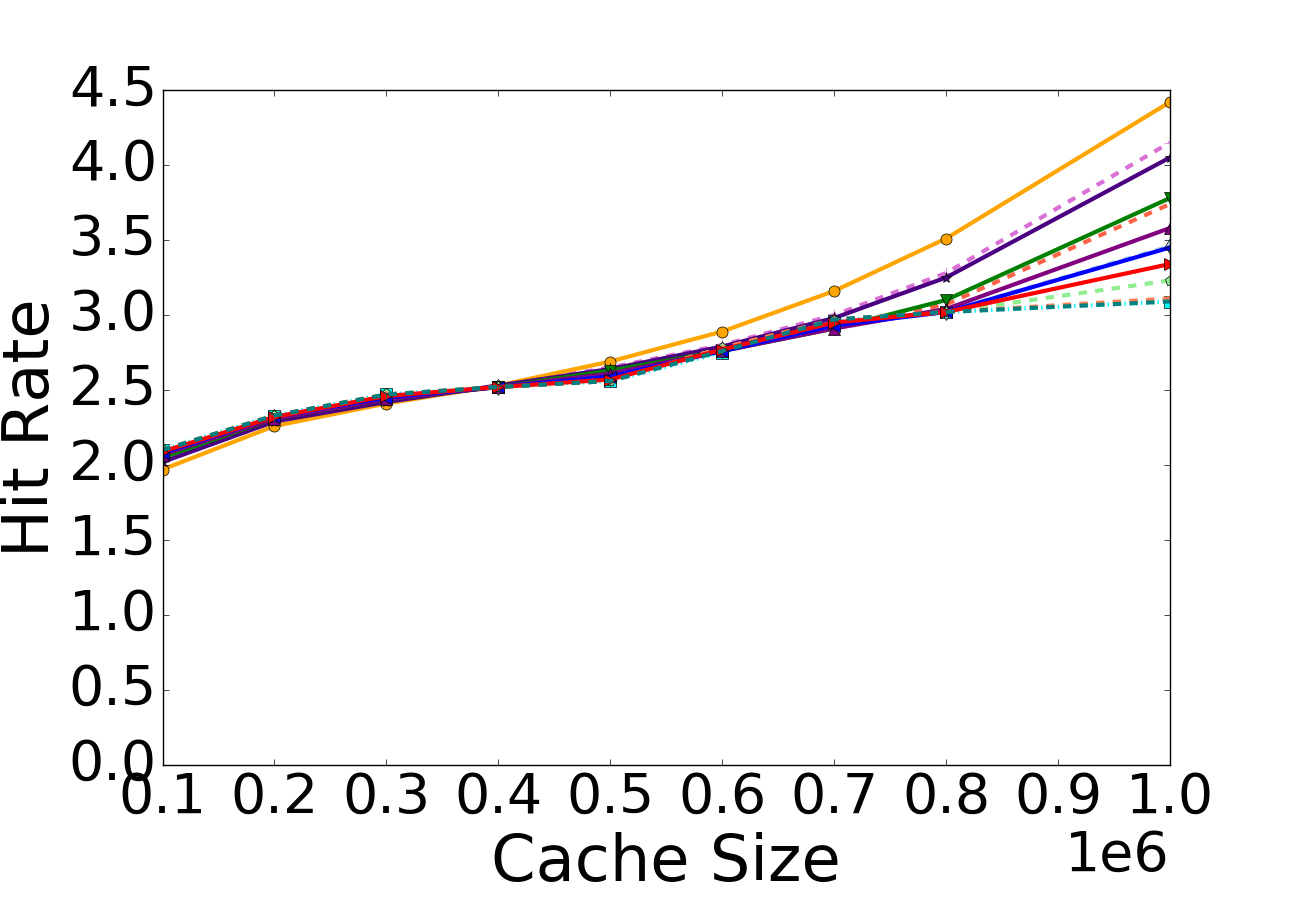}}
		\subfigure[LFU +TinyLFU]{\includegraphics[width=0.45\columnwidth]{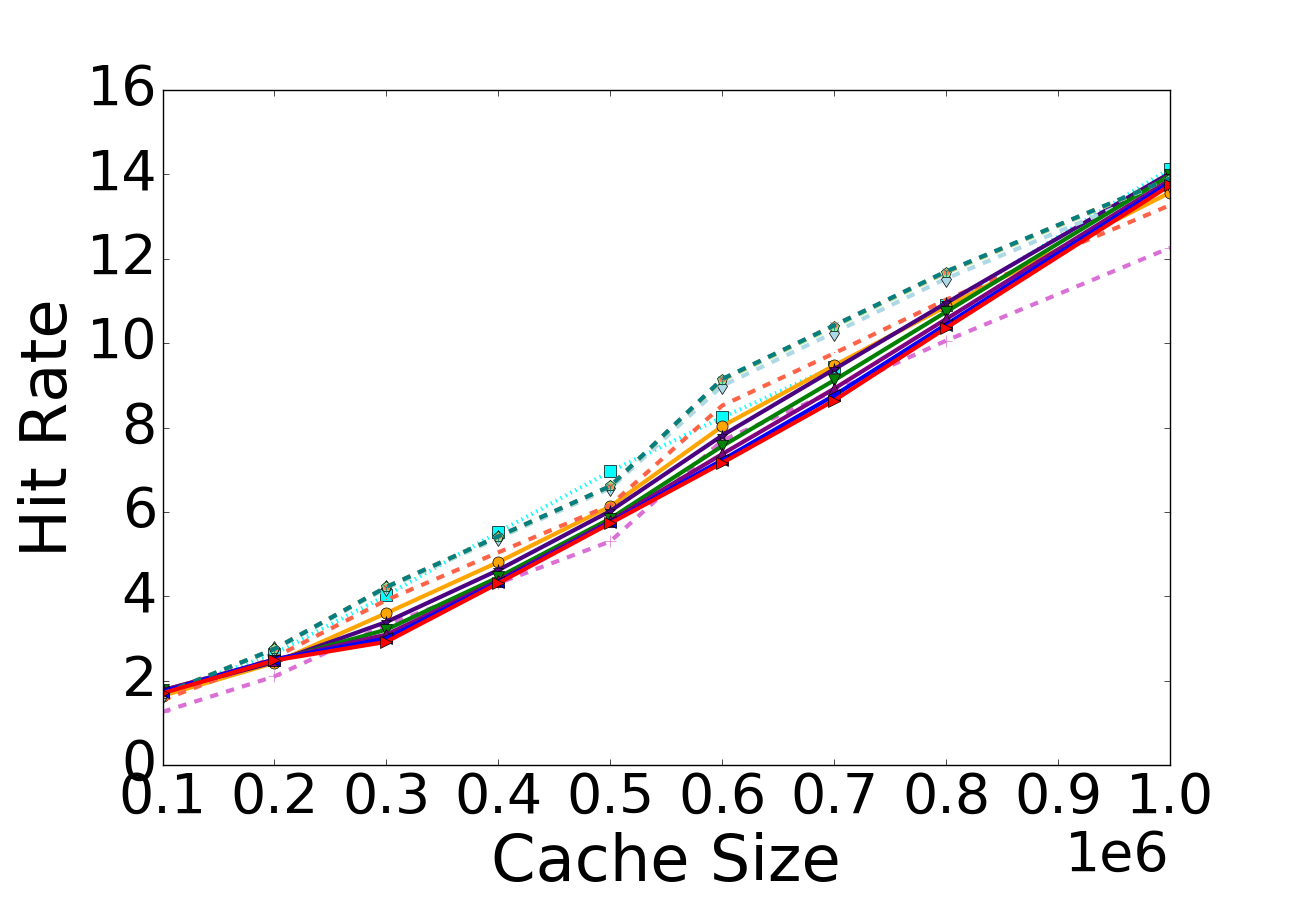}}
		\subfigure[Product]{\includegraphics[width=0.45\columnwidth]{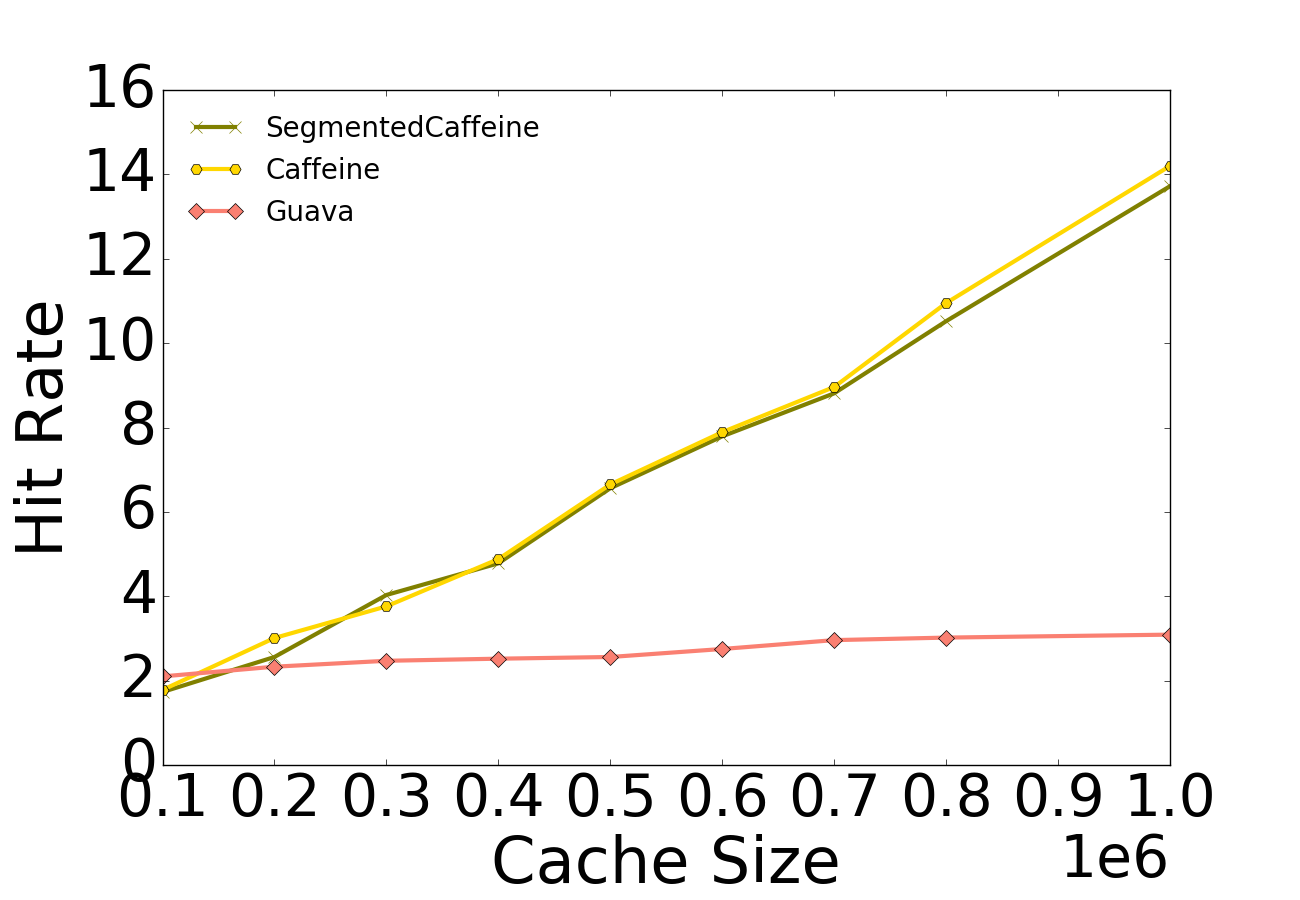}}			
		\subfigure[LRU +TinyLFU]{\includegraphics[width=0.45\columnwidth]{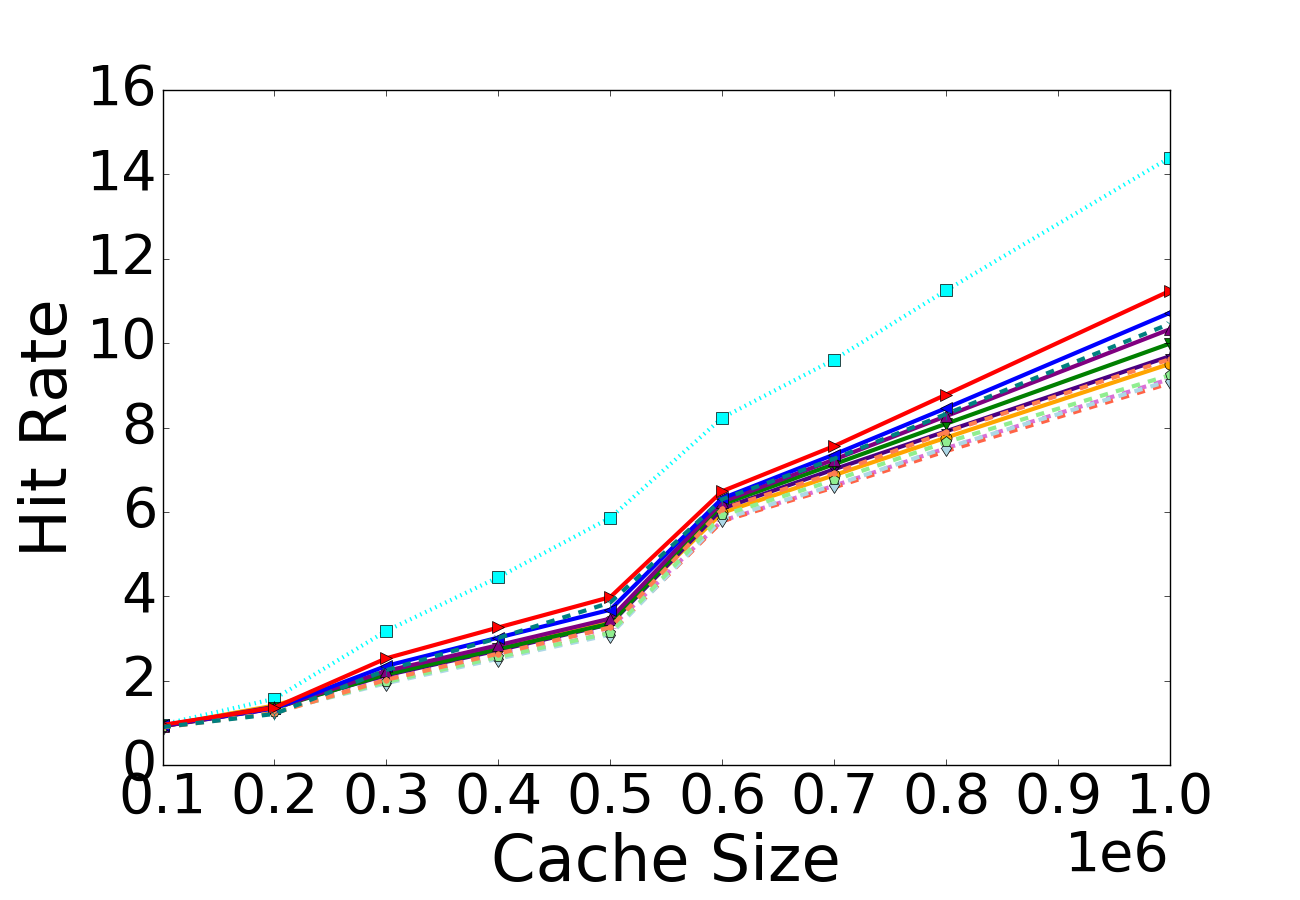}}	
	\end{center}
	\vspace{-0.5cm}
	\caption{DS1.}
	\label{figDS1}
	\vspace{-0.5cm}
\end{figure*}
}
\nottoggle{SMALL}{
\begin{figure*}[t]
	\begin{center}
		\offinterlineskip
		\subfigure[LRU]{\includegraphics[width=0.45\columnwidth]{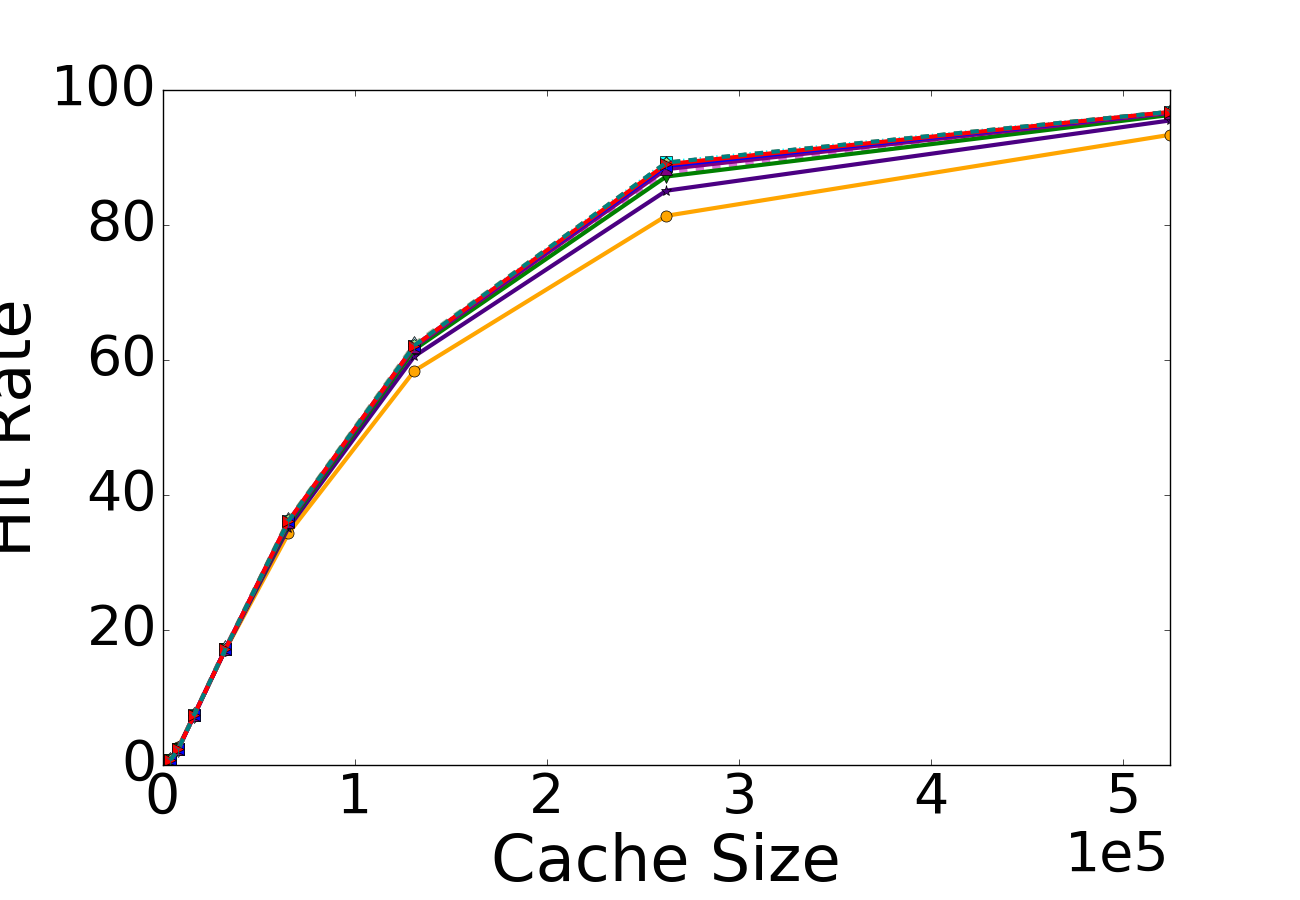}}
		\subfigure[LFU +TinyLFU]{\includegraphics[width=0.45\columnwidth]{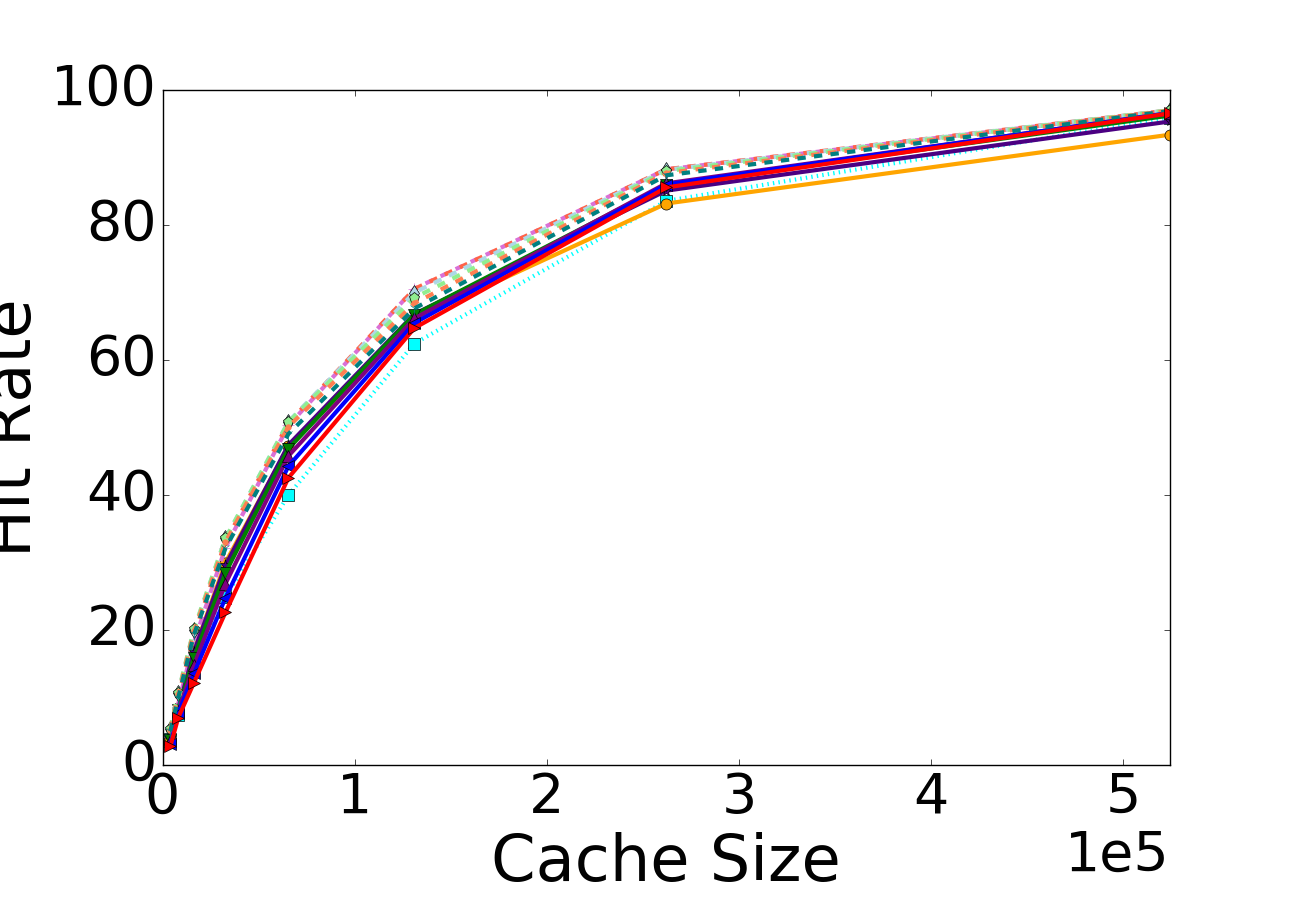}}	
		\subfigure[Product]{\includegraphics[width=0.45\columnwidth]{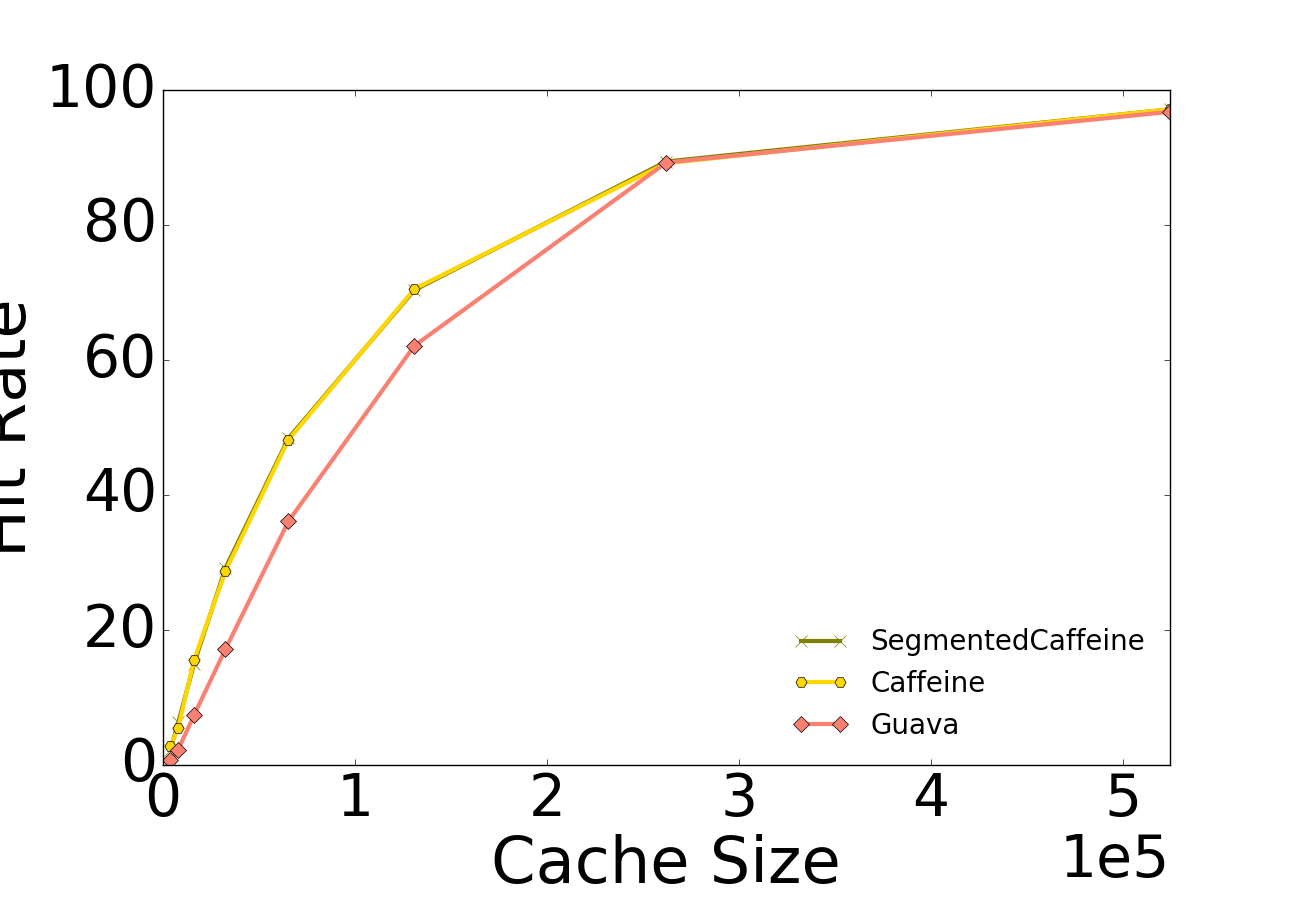}}	
		\subfigure[Hyperbolic]{\includegraphics[width=0.45\columnwidth]{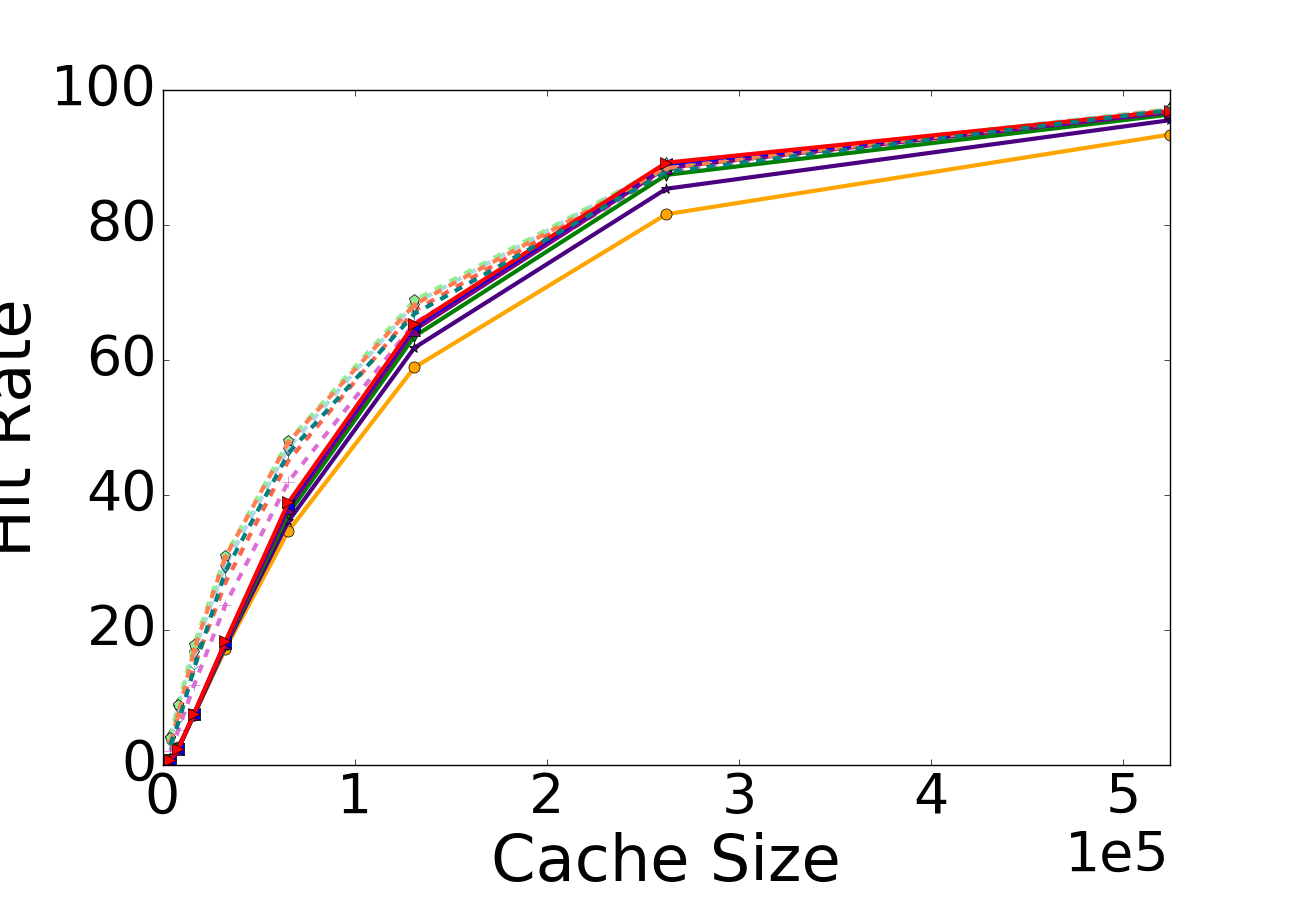}}
	
	\end{center}
	\vspace{-0.5cm}
	\caption{P8.}
	\label{figP8}
	\vspace{-0.5cm}
\end{figure*}
}{}

\begin{figure*}[t]
	\begin{center}
		\offinterlineskip
		\subfigure[LRU]{\includegraphics[width=0.45\columnwidth]{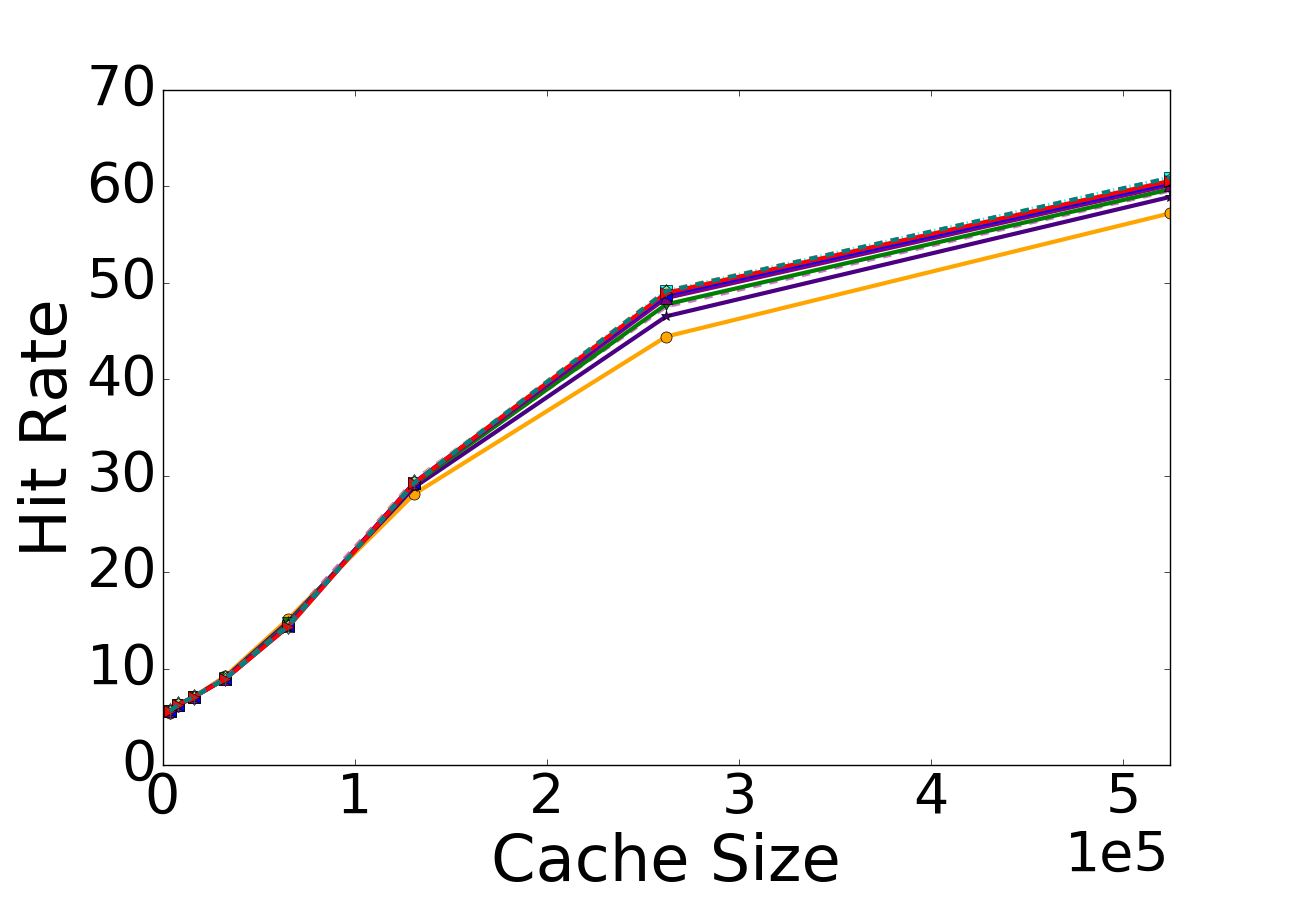}}
		\subfigure[LFU +TinyLFU]{\includegraphics[width=0.45\columnwidth]{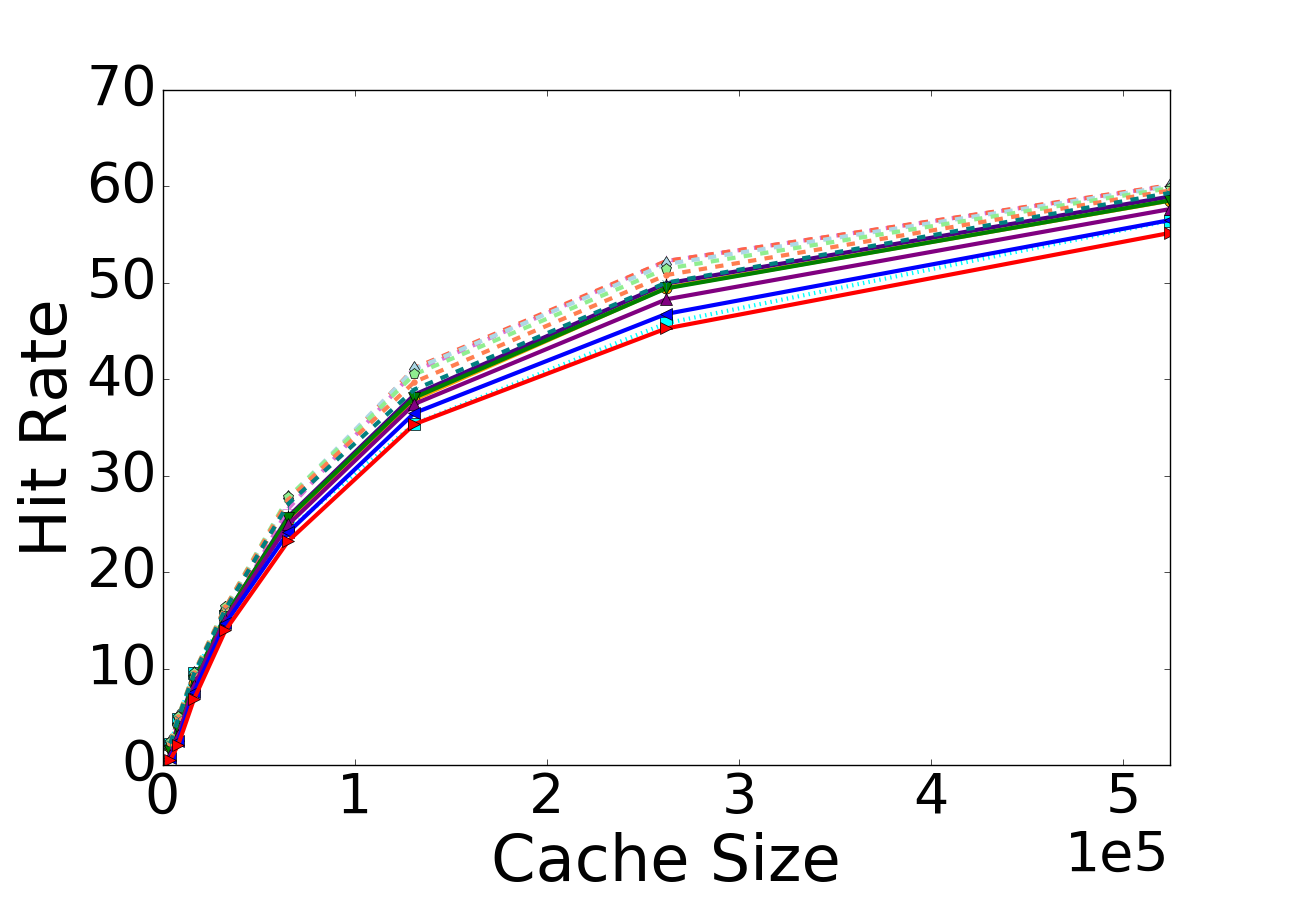}}	
		\subfigure[Product]{\includegraphics[width=0.45\columnwidth]{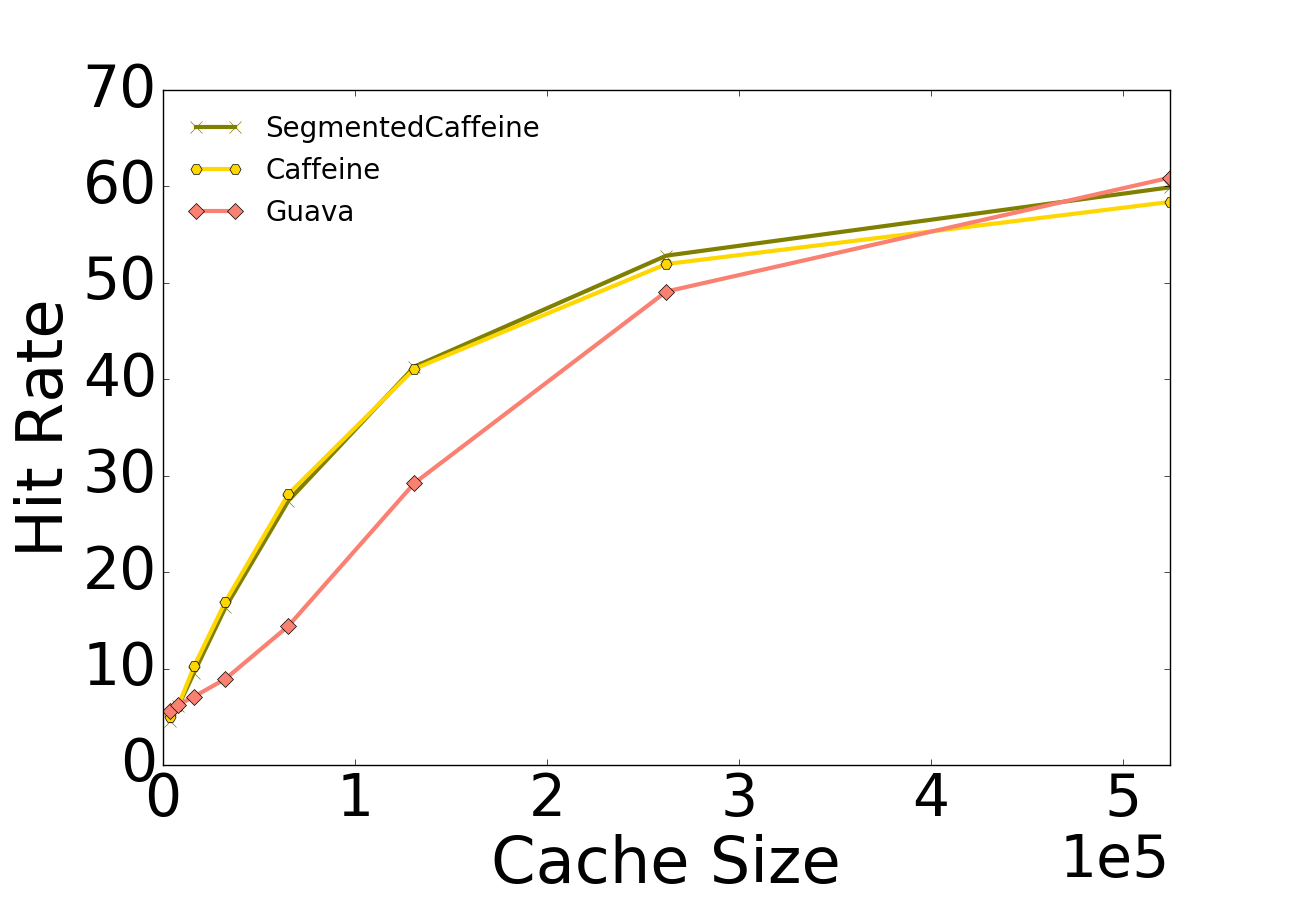}}
		\subfigure[Hyperbolic]{\includegraphics[width=0.45\columnwidth]{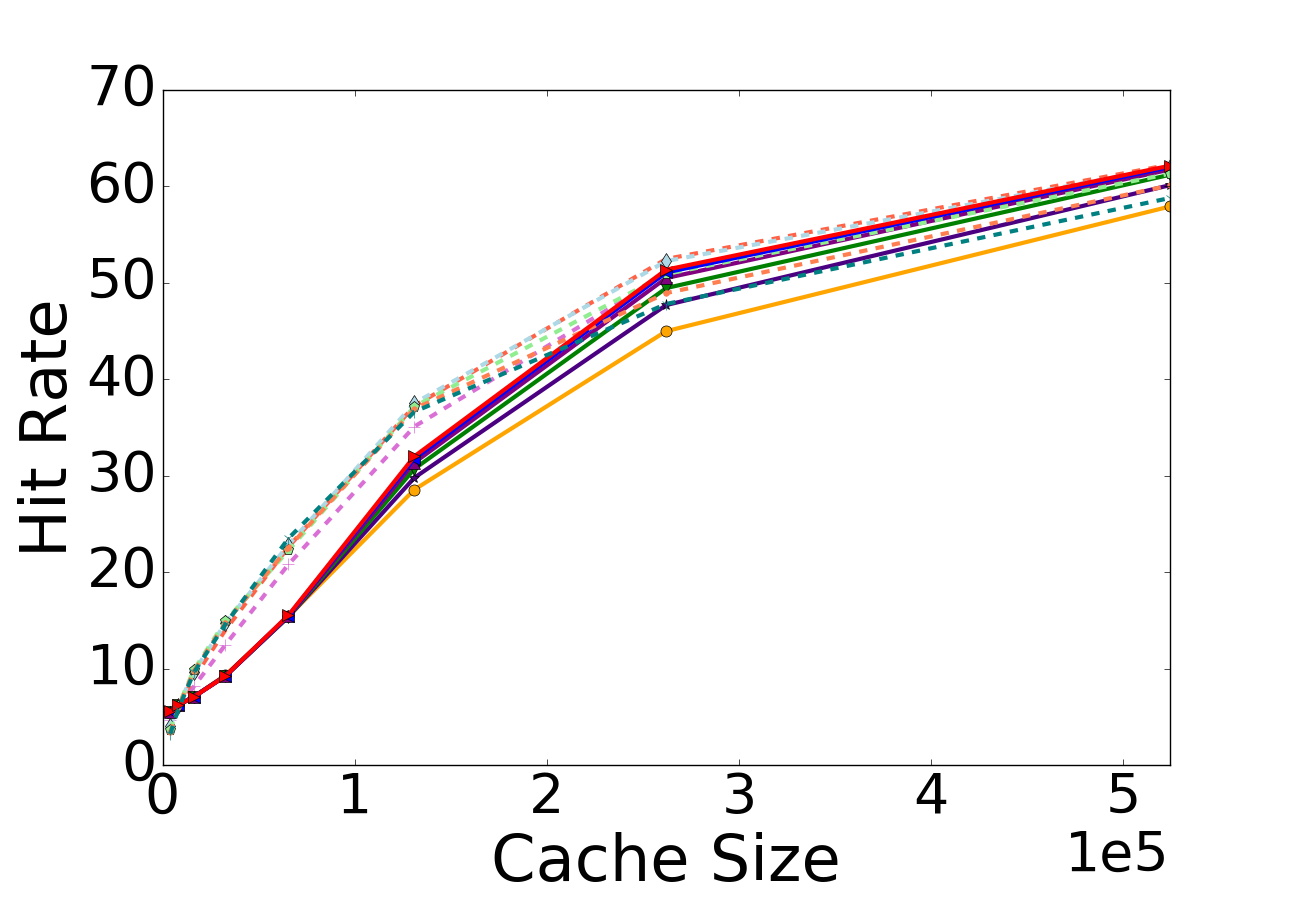}}		
	\end{center}
	\vspace{-0.5cm}
	\caption{P12.}
	\label{figP12}
	\vspace{-0.5cm}
\end{figure*}

\nottoggle{SMALL}{
\begin{figure*}[t]
	\begin{center}
		\offinterlineskip
		\subfigure[LRU]{\includegraphics[width=0.45\columnwidth]{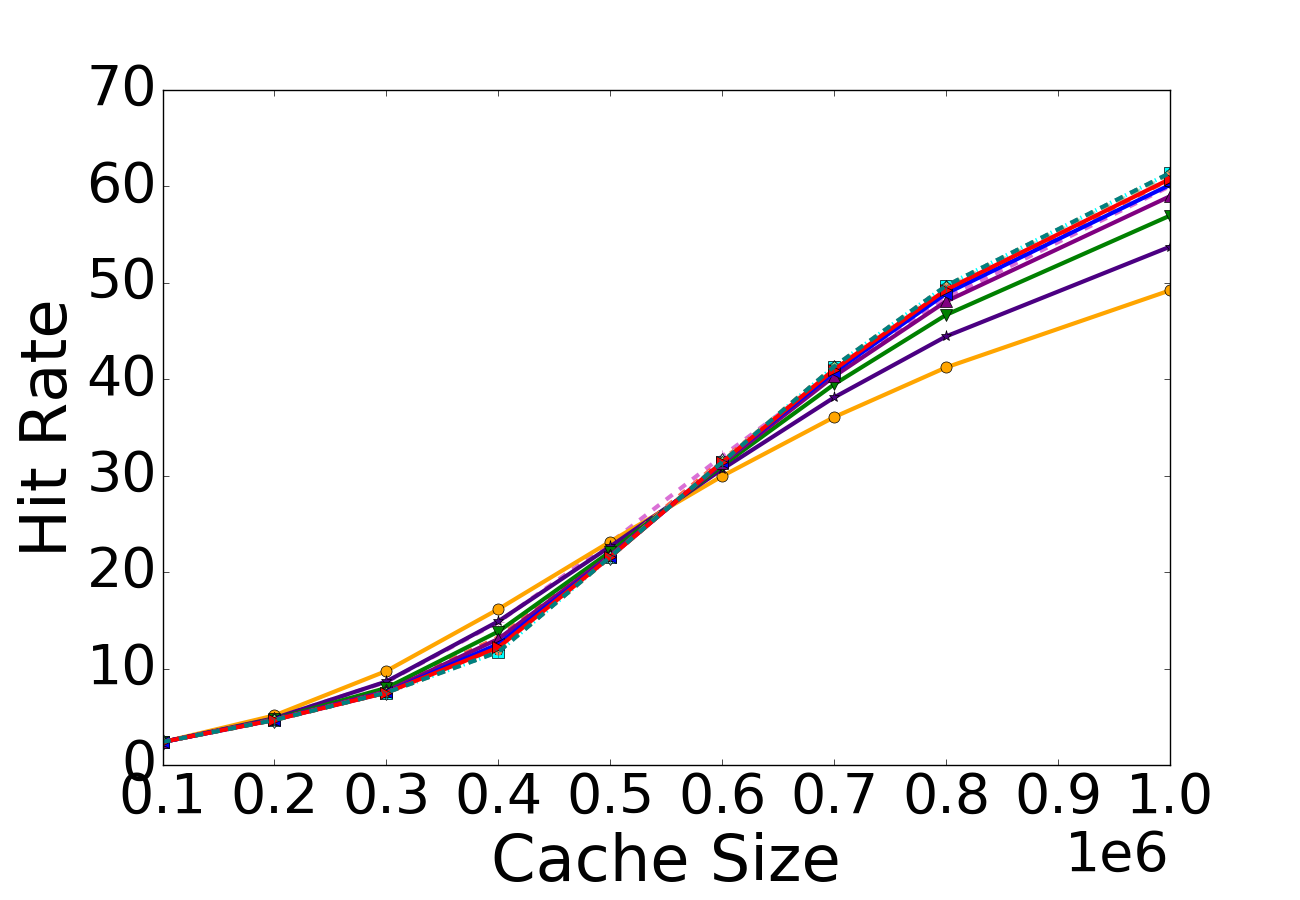}}
		\subfigure[LFU +TinyLFU]{\includegraphics[width=0.45\columnwidth]{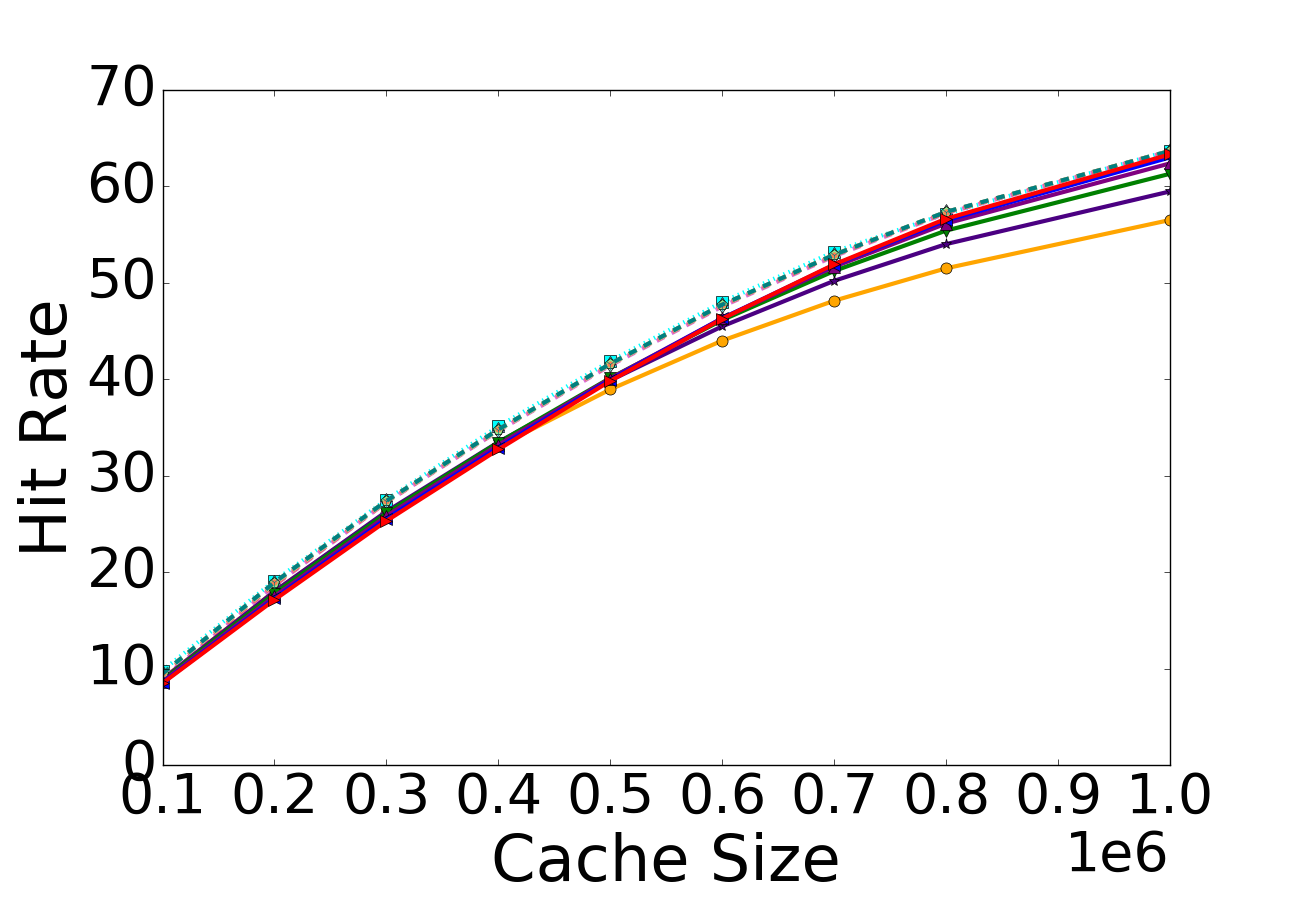}}	
		\subfigure[Product]{\includegraphics[width=0.45\columnwidth]{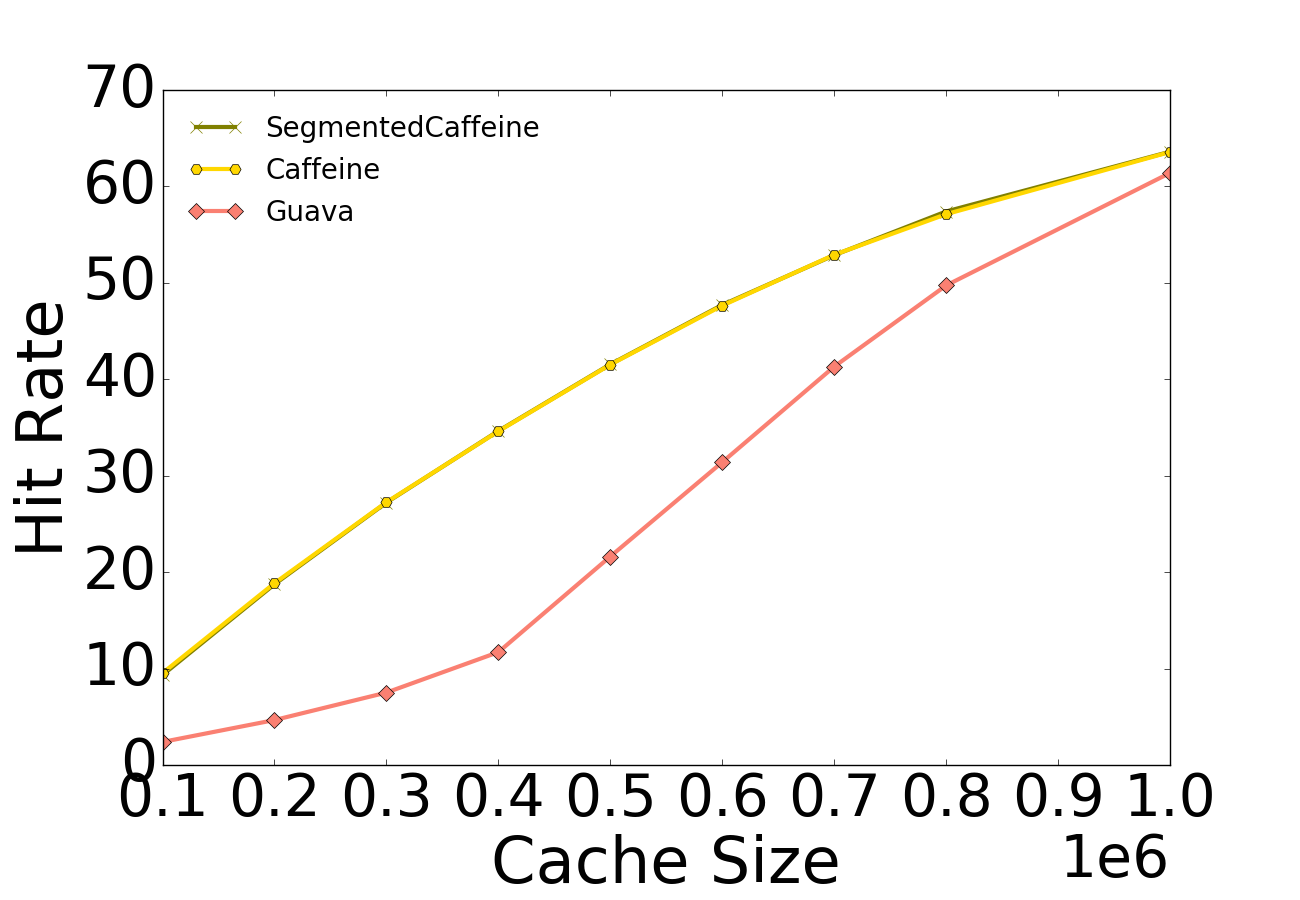}}	
		\subfigure[Hyperbolic+TinyLfu]{\includegraphics[width=0.45\columnwidth]{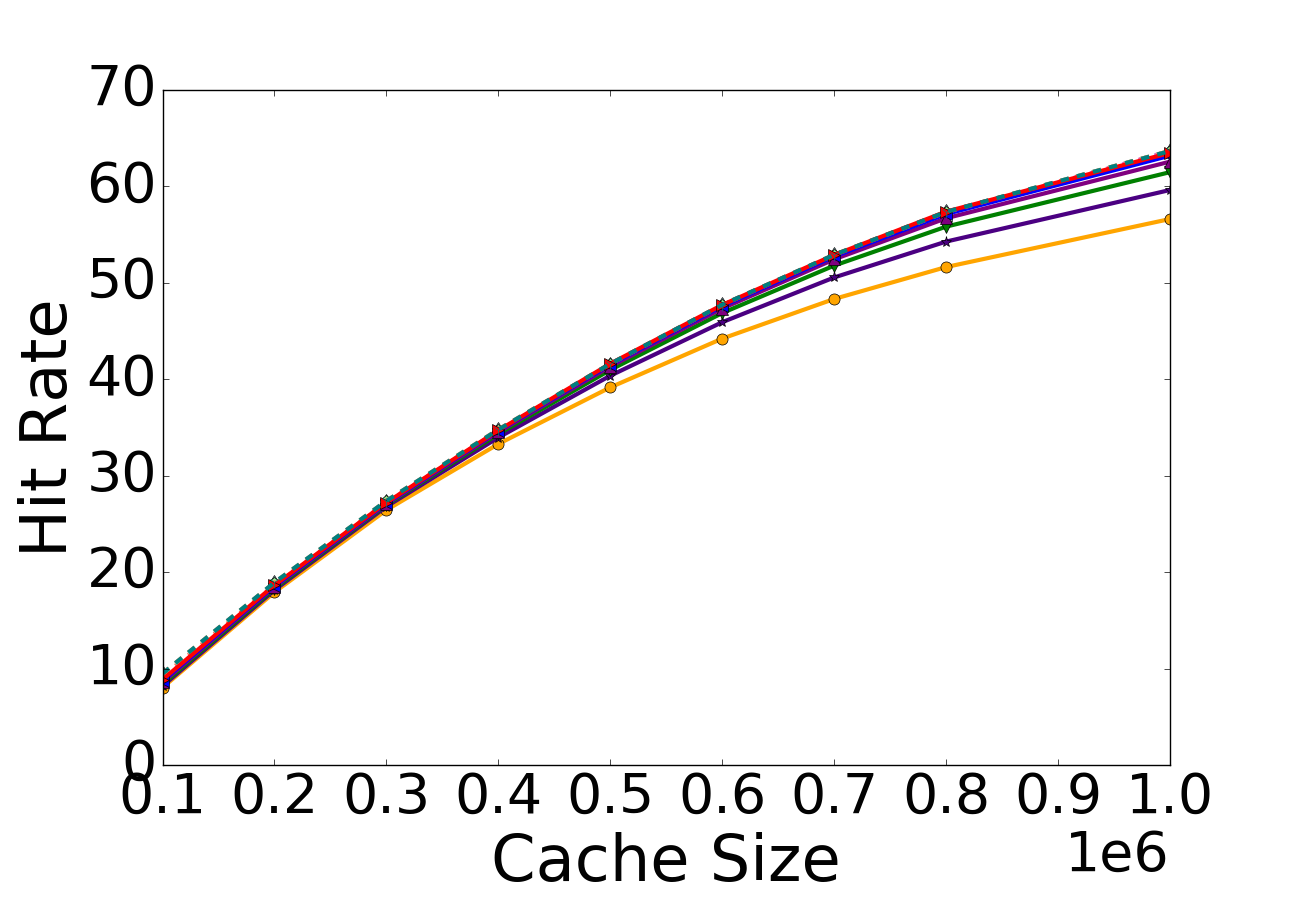}}			
	\end{center}

	\vspace{-0.5cm}
	\caption{S1.}
	\label{figS1}
	\vspace{-0.5cm}
\end{figure*}
}{}

\begin{figure*}[t]
	\begin{center}
	\offinterlineskip
		\subfigure[LRU]{\includegraphics[width=0.45\columnwidth]{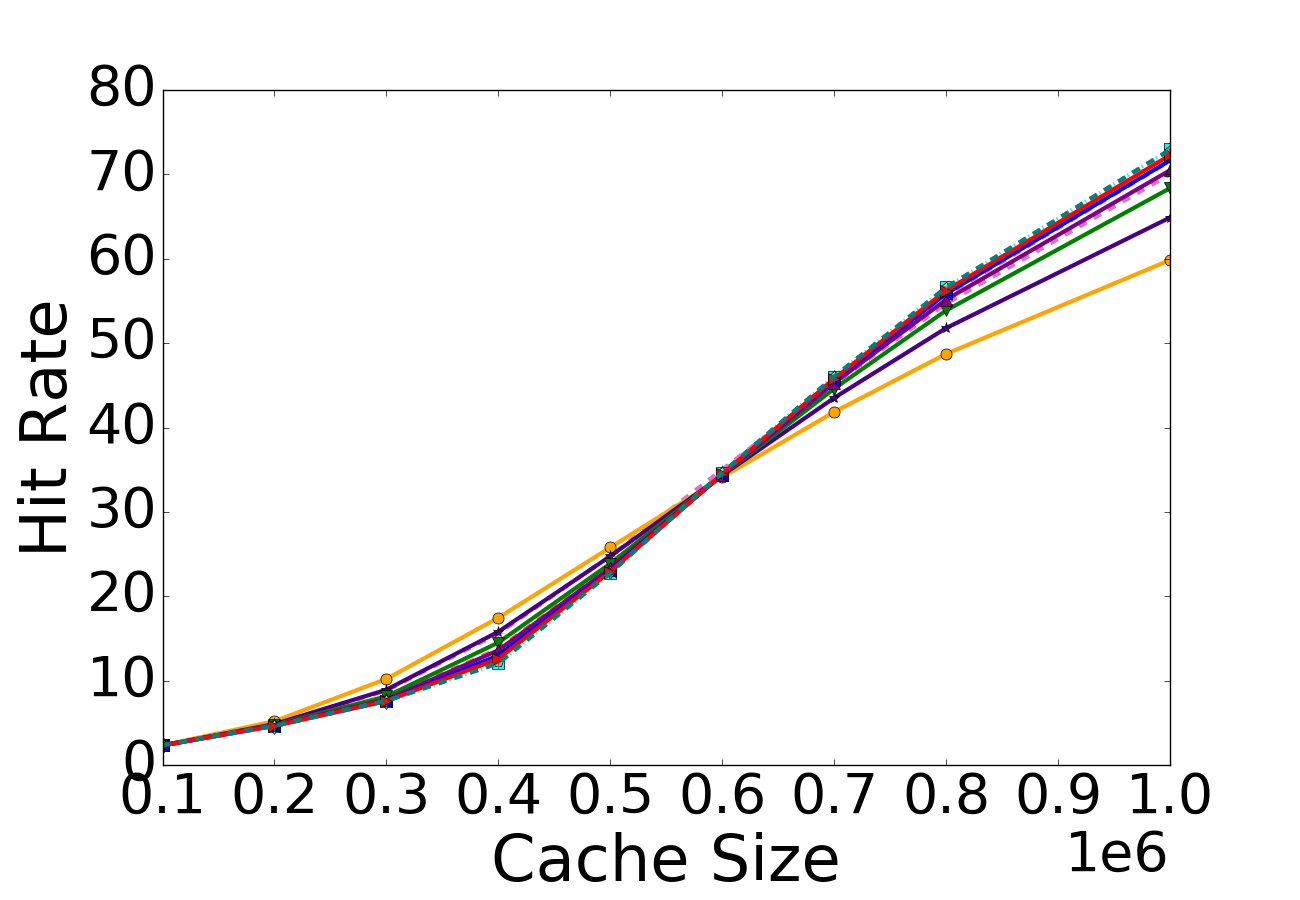}}
\subfigure[LFU +TinyLFU]{\includegraphics[width=0.45\columnwidth]{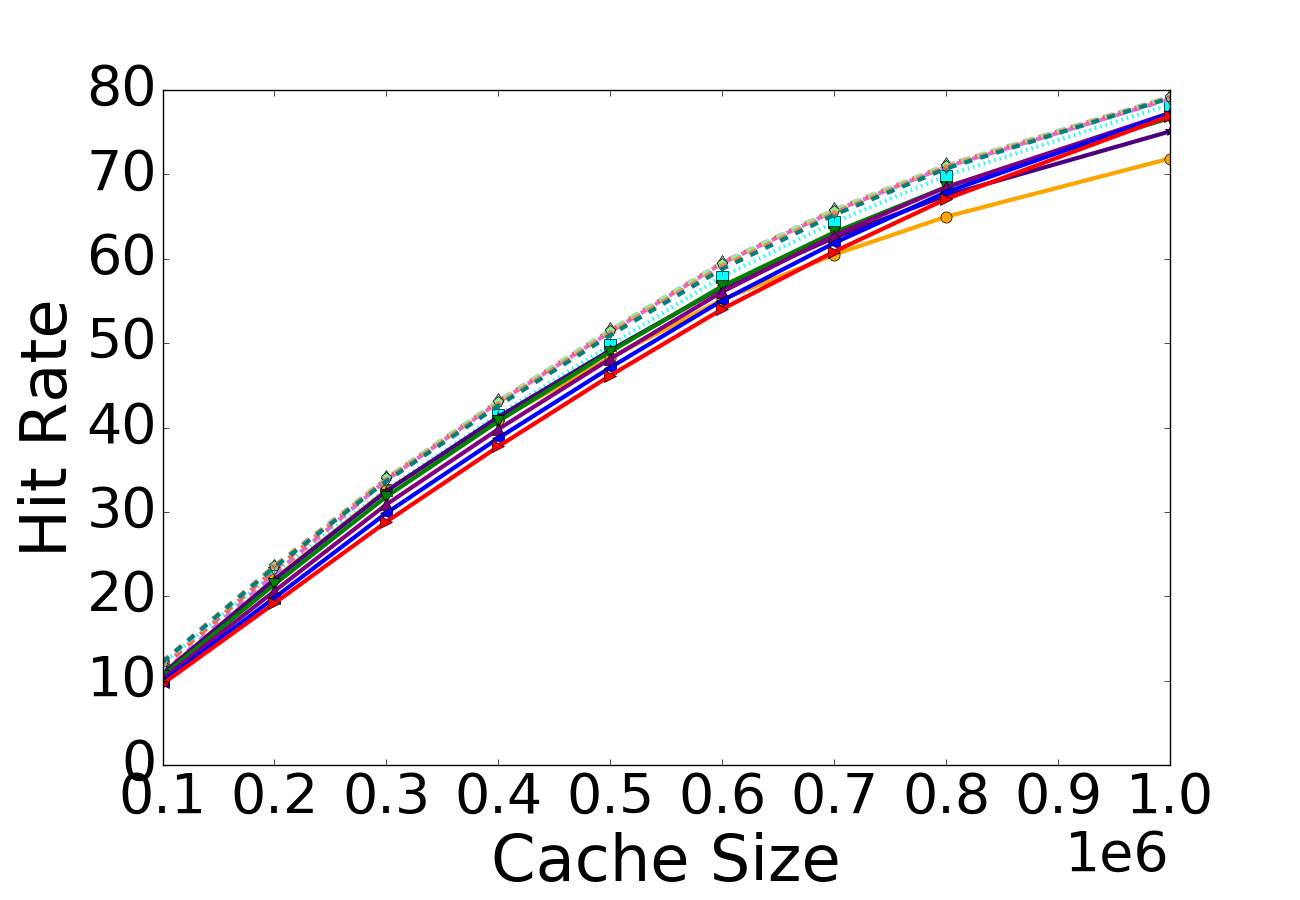}}	
	\subfigure[Product]{\includegraphics[width=0.45\columnwidth]{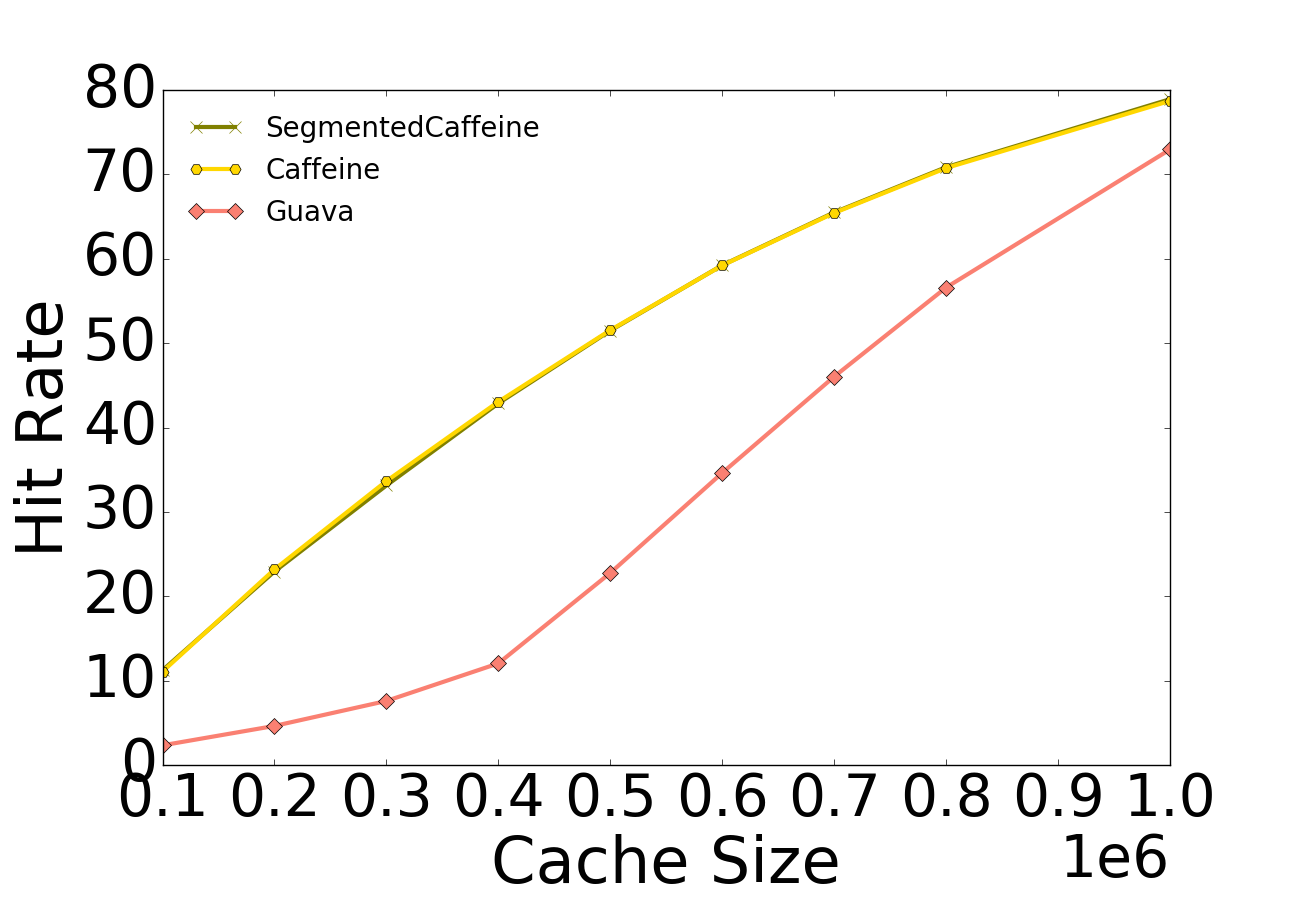}}	
	\subfigure[Hyperbolic+TinyLfu]{\includegraphics[width=0.45\columnwidth]{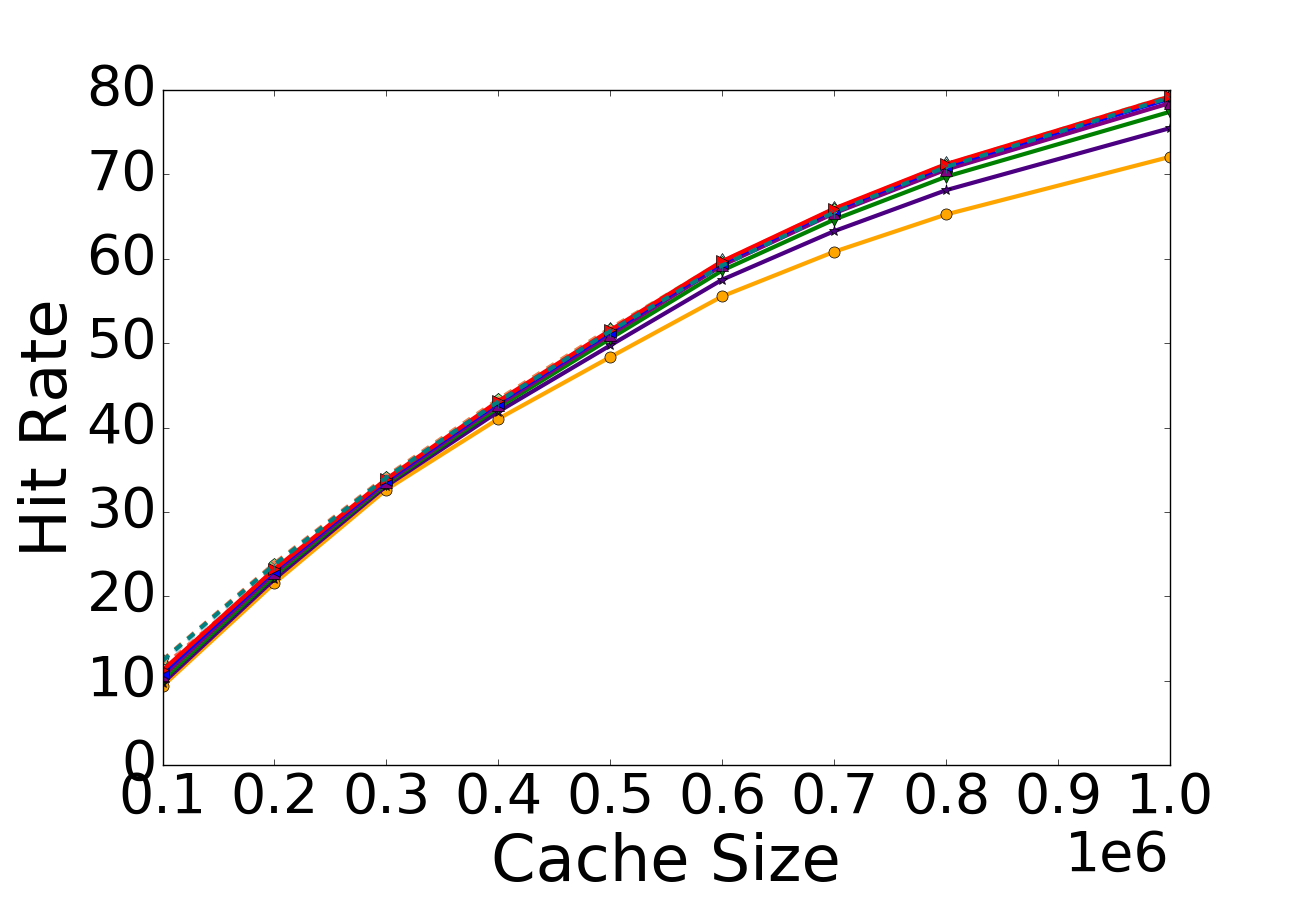}}			
	\end{center}
	\vspace{-0.5cm}
	\caption{S3.}
	\label{figS3}
	\vspace{-0.5cm}
\end{figure*}

\begin{figure*}[t]
	\begin{center}
		\offinterlineskip
		\subfigure[LRU]{\includegraphics[width=0.45\columnwidth]{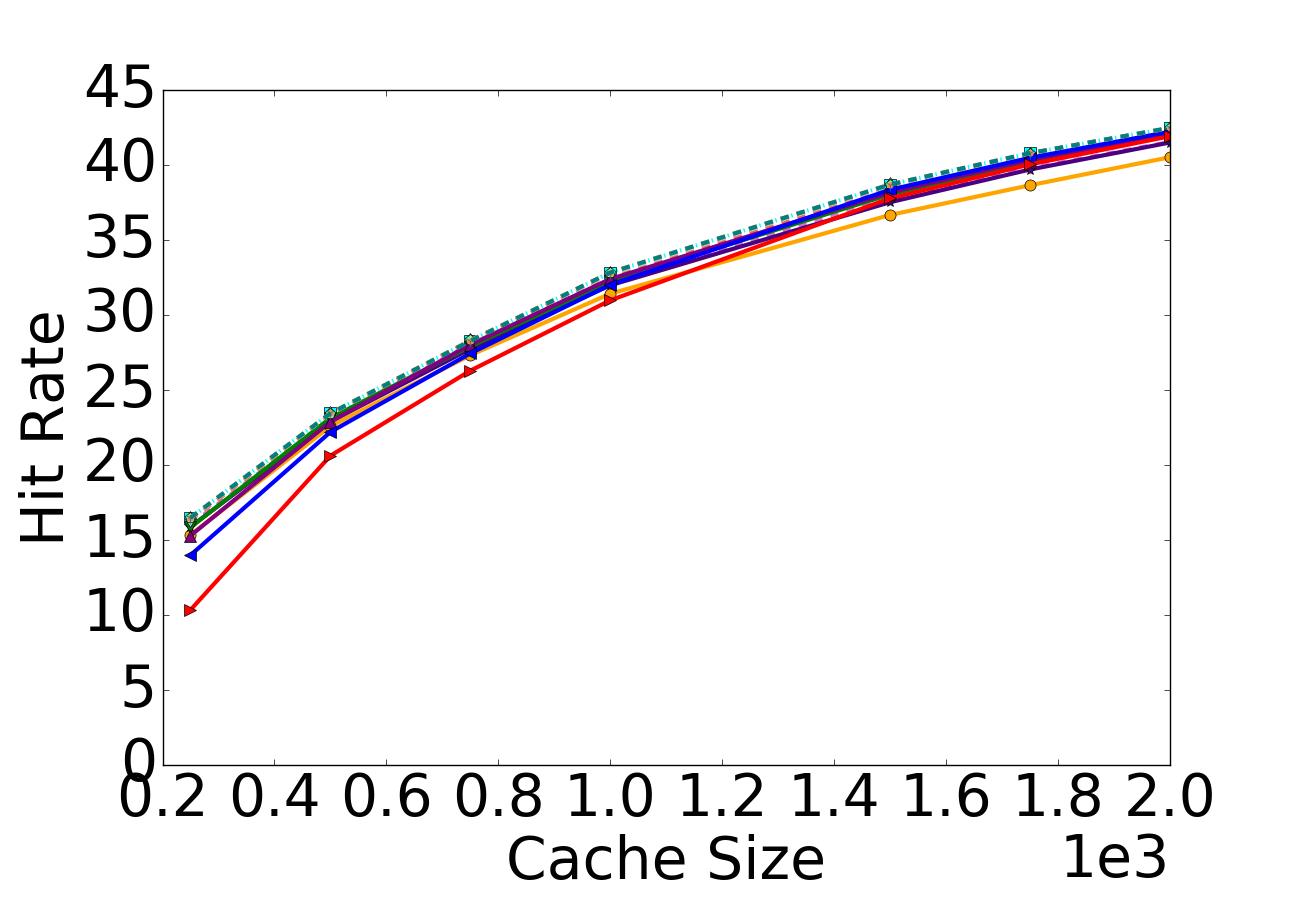}}
		\subfigure[LFU +TinyLFU]{\includegraphics[width=0.45\columnwidth]{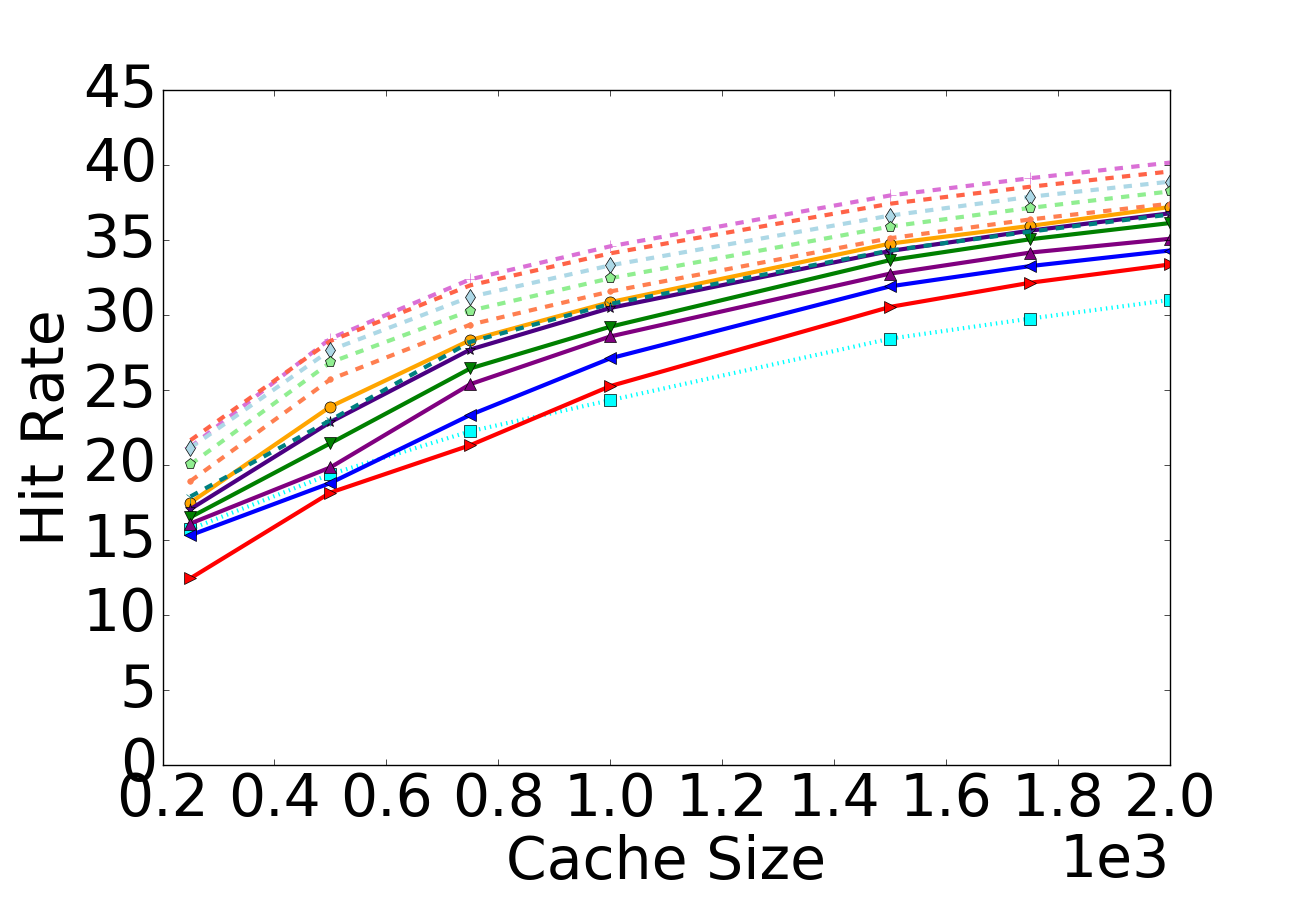}}	
		\subfigure[Product]{\includegraphics[width=0.45\columnwidth]{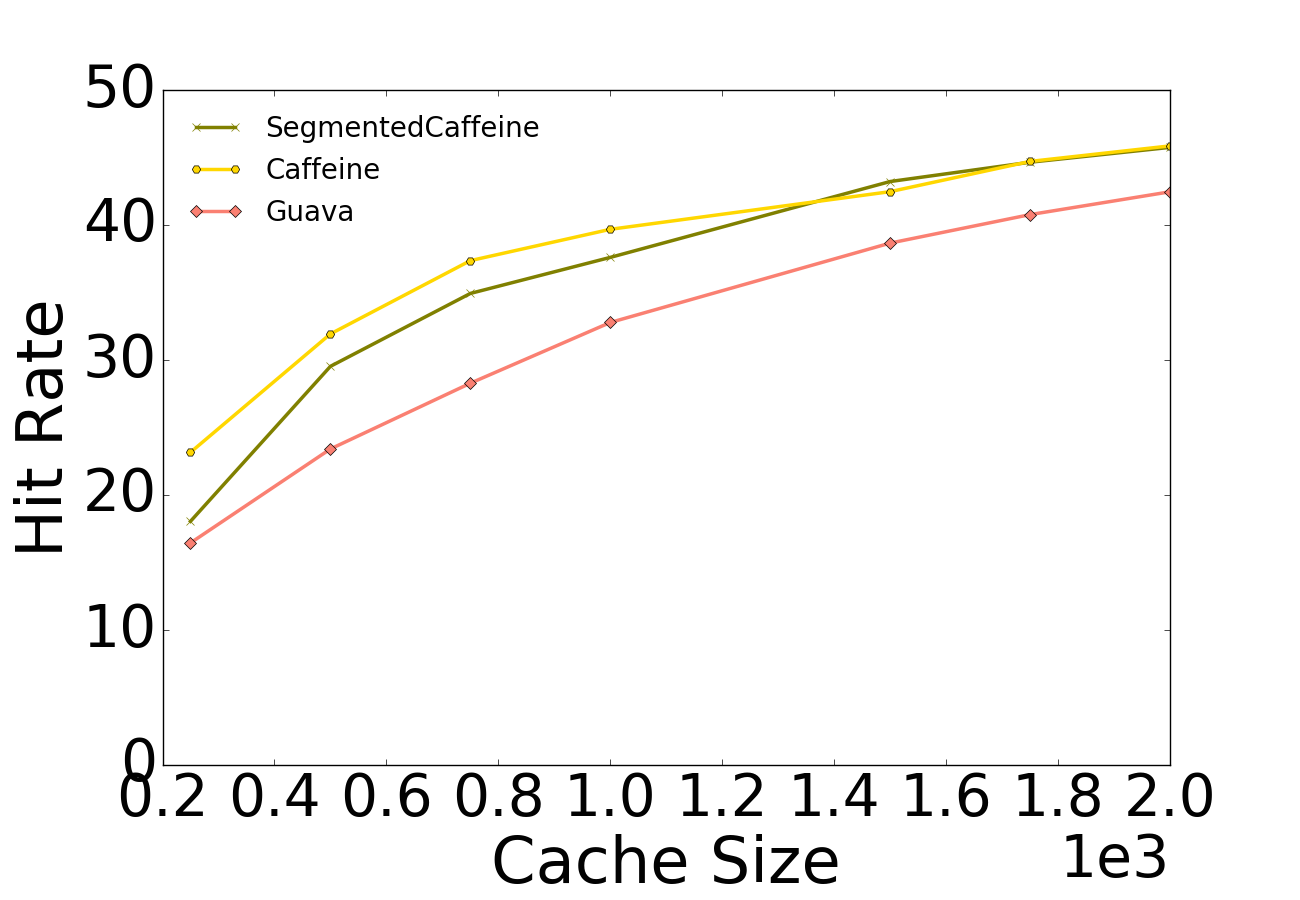}}	
		\subfigure[Hyperbolic+TinyLfu]{\includegraphics[width=0.45\columnwidth]{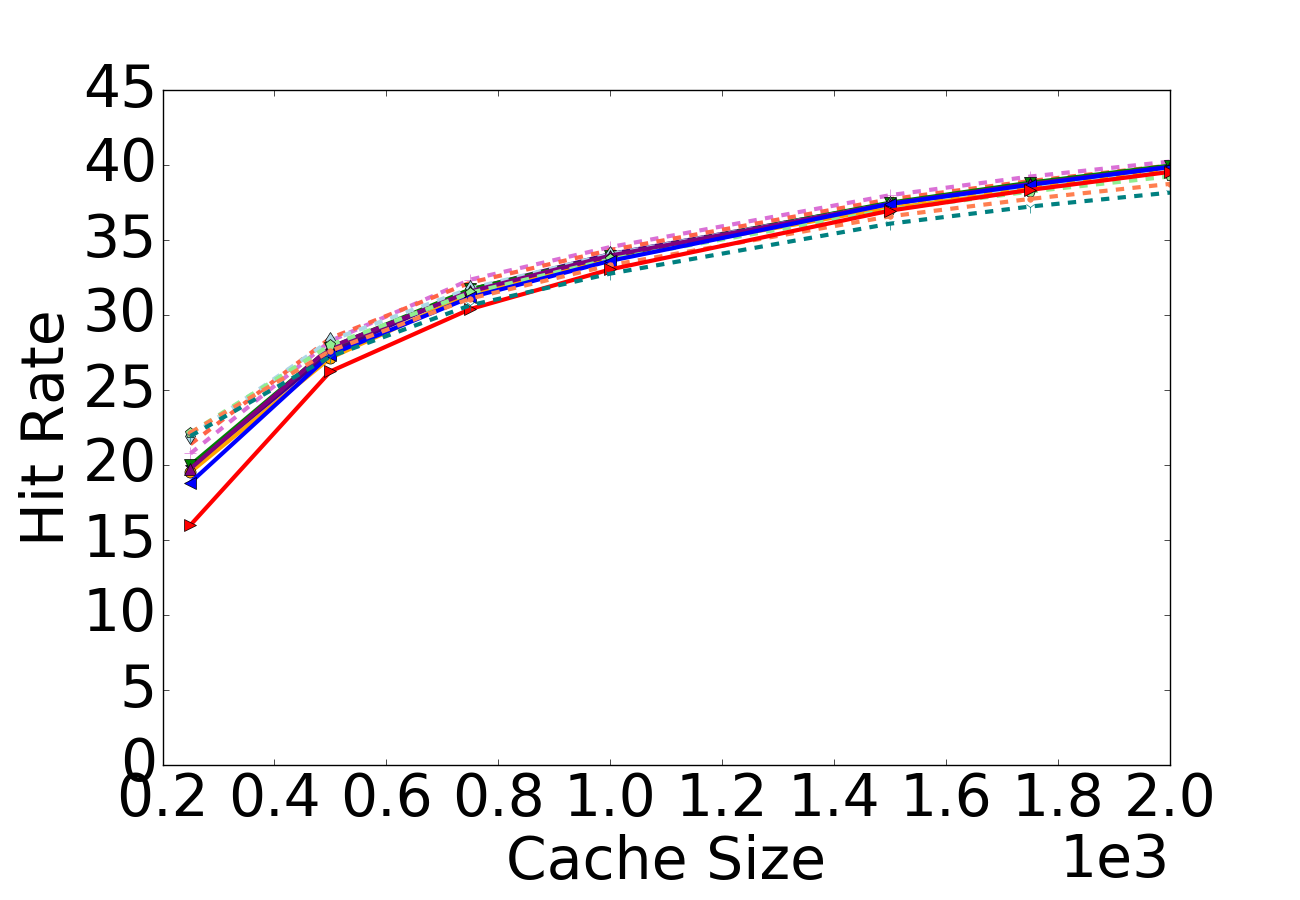}}			
	\end{center}
	\vspace{-0.5cm}
	\caption{OLTP.}
	\label{figOLTP}
	\vspace{-0.5cm}
\end{figure*}


\nottoggle{MEDIUM}{
\begin{figure*}[t]
	\begin{center}
		\offinterlineskip
		\subfigure[LRU]{\includegraphics[width=0.45\columnwidth]{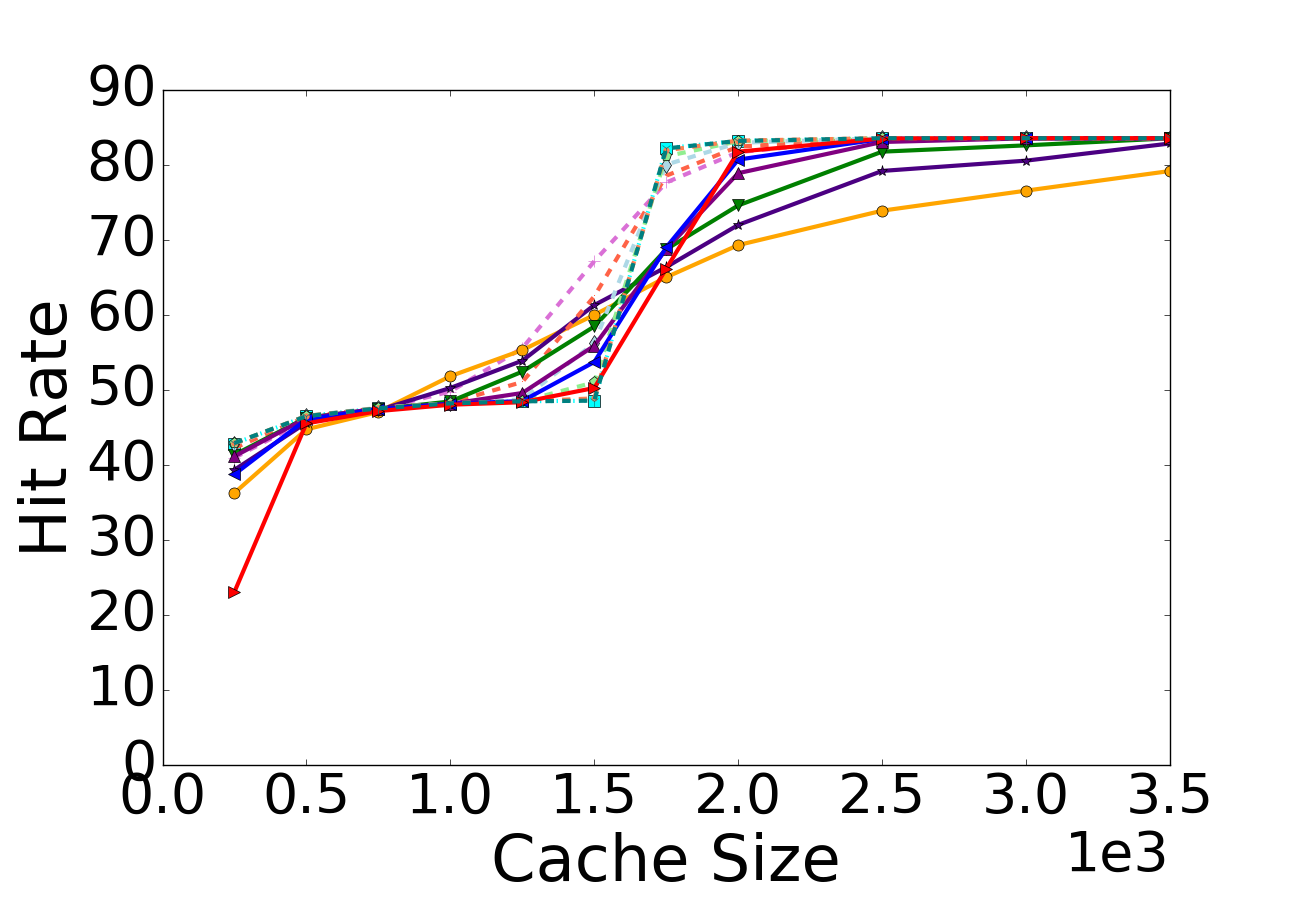}}
		\subfigure[LFU +TinyLFU]{\includegraphics[width=0.45\columnwidth]{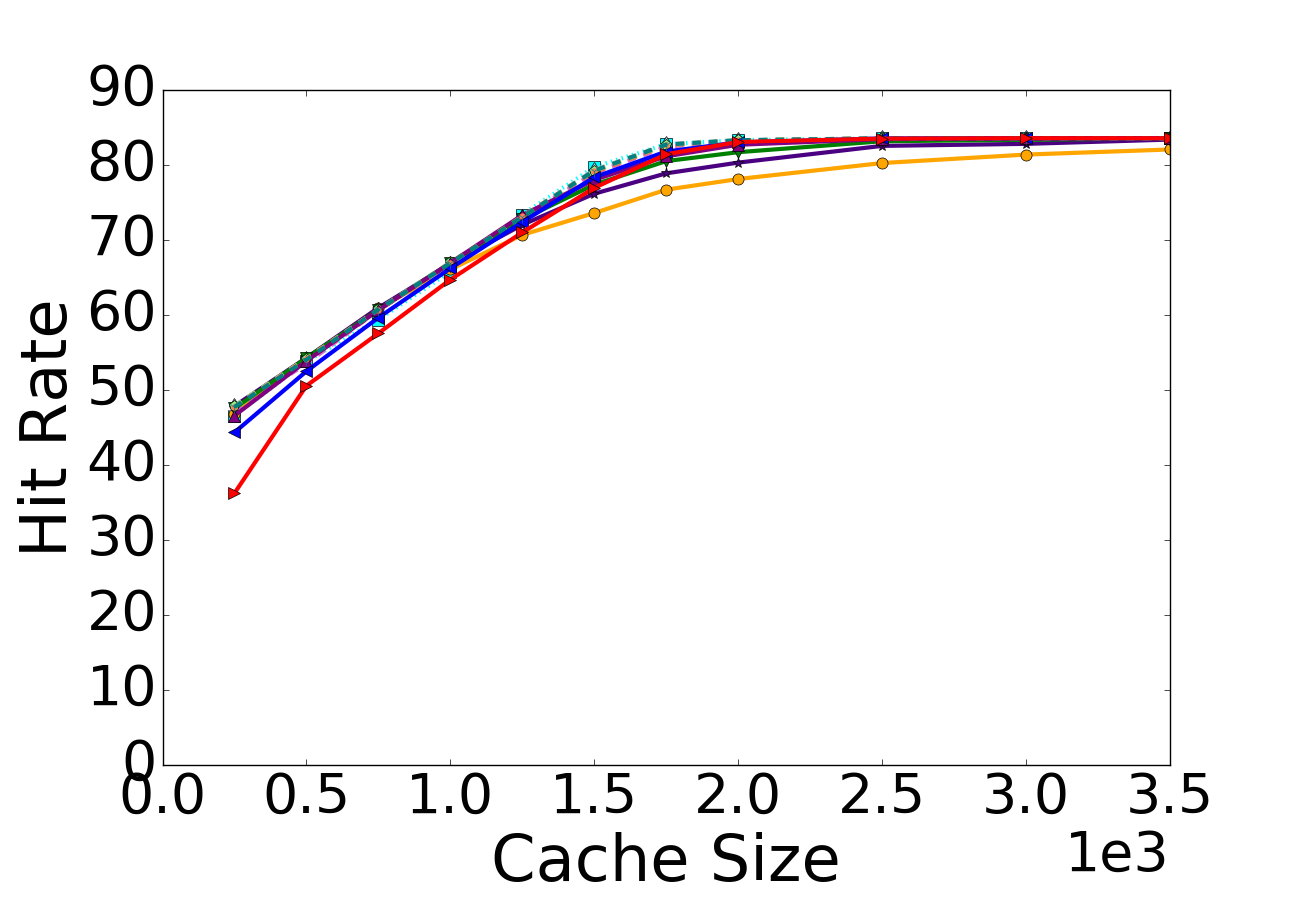}}
		\subfigure[Product]{\includegraphics[width=0.45\columnwidth]{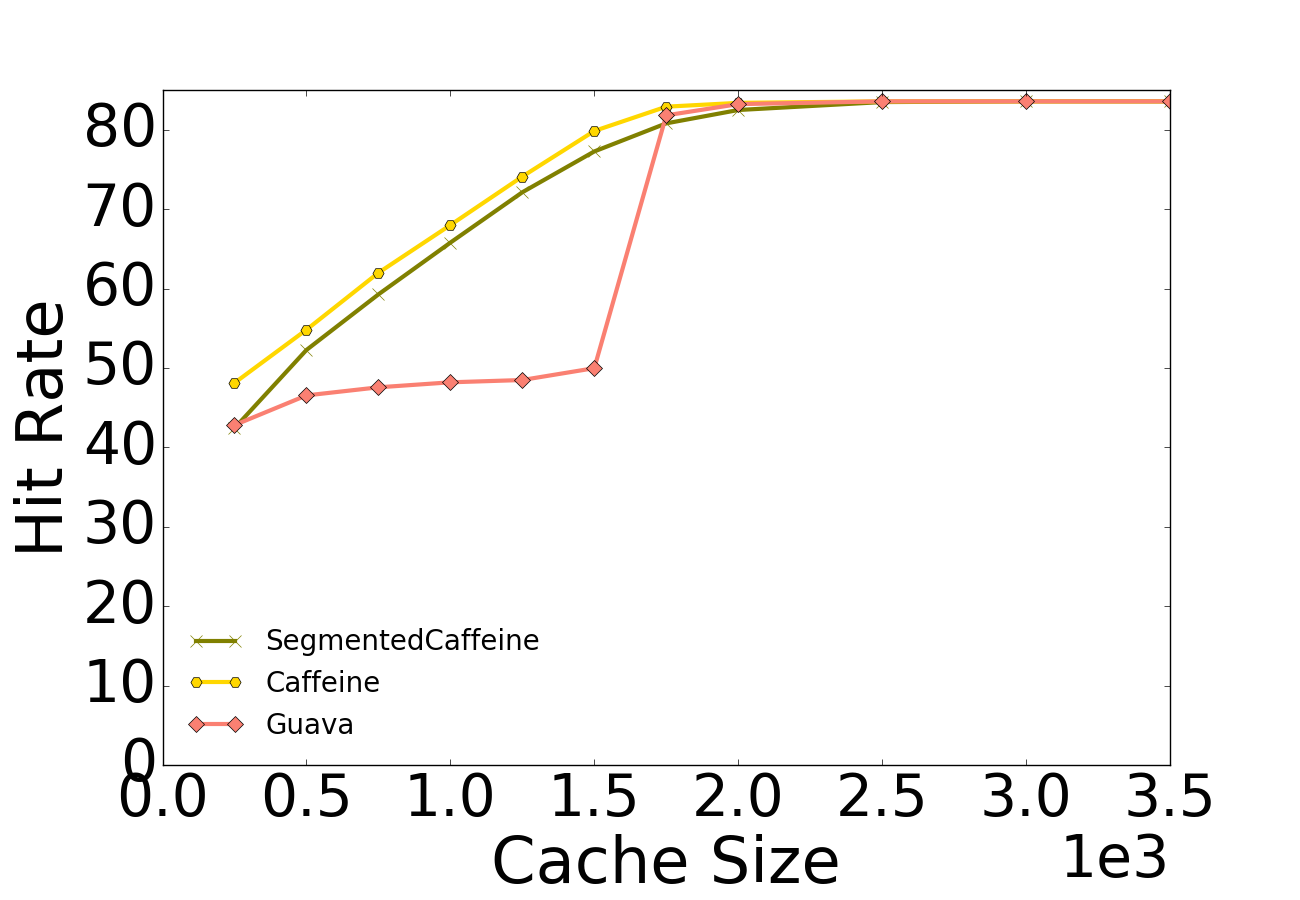}}	
		\subfigure[FIFO+TinyLFU]{\includegraphics[width=0.45\columnwidth]{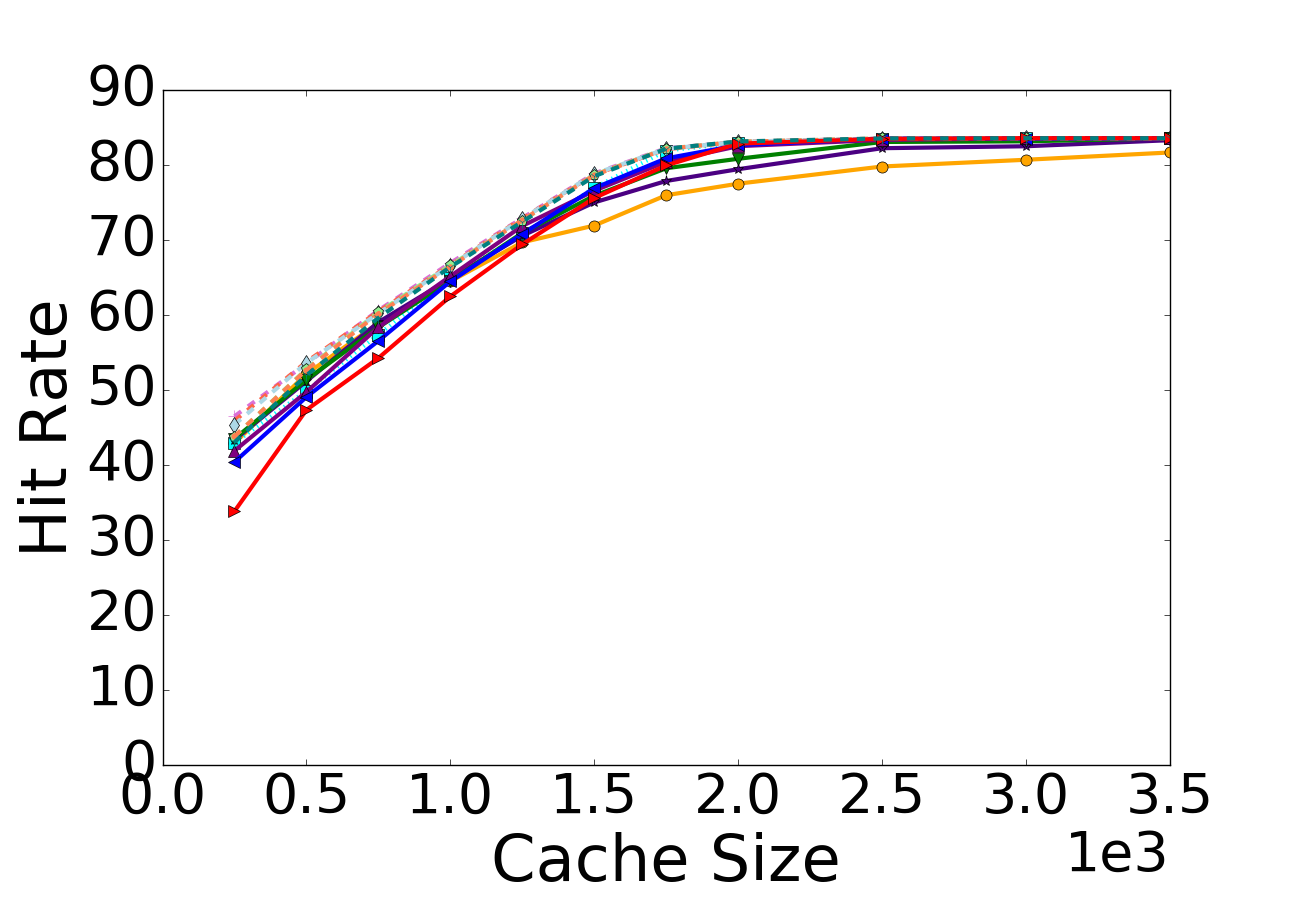}}	
	\end{center}
	\vspace{-0.5cm}
	\caption{multi1.}
	\label{figmulti1}
	\vspace{-0.5cm}
\end{figure*}
}
\nottoggle{SMALL}{
\begin{figure*}[t]
	\begin{center}
		\offinterlineskip
		\subfigure[LRU]{\includegraphics[width=0.45\columnwidth]{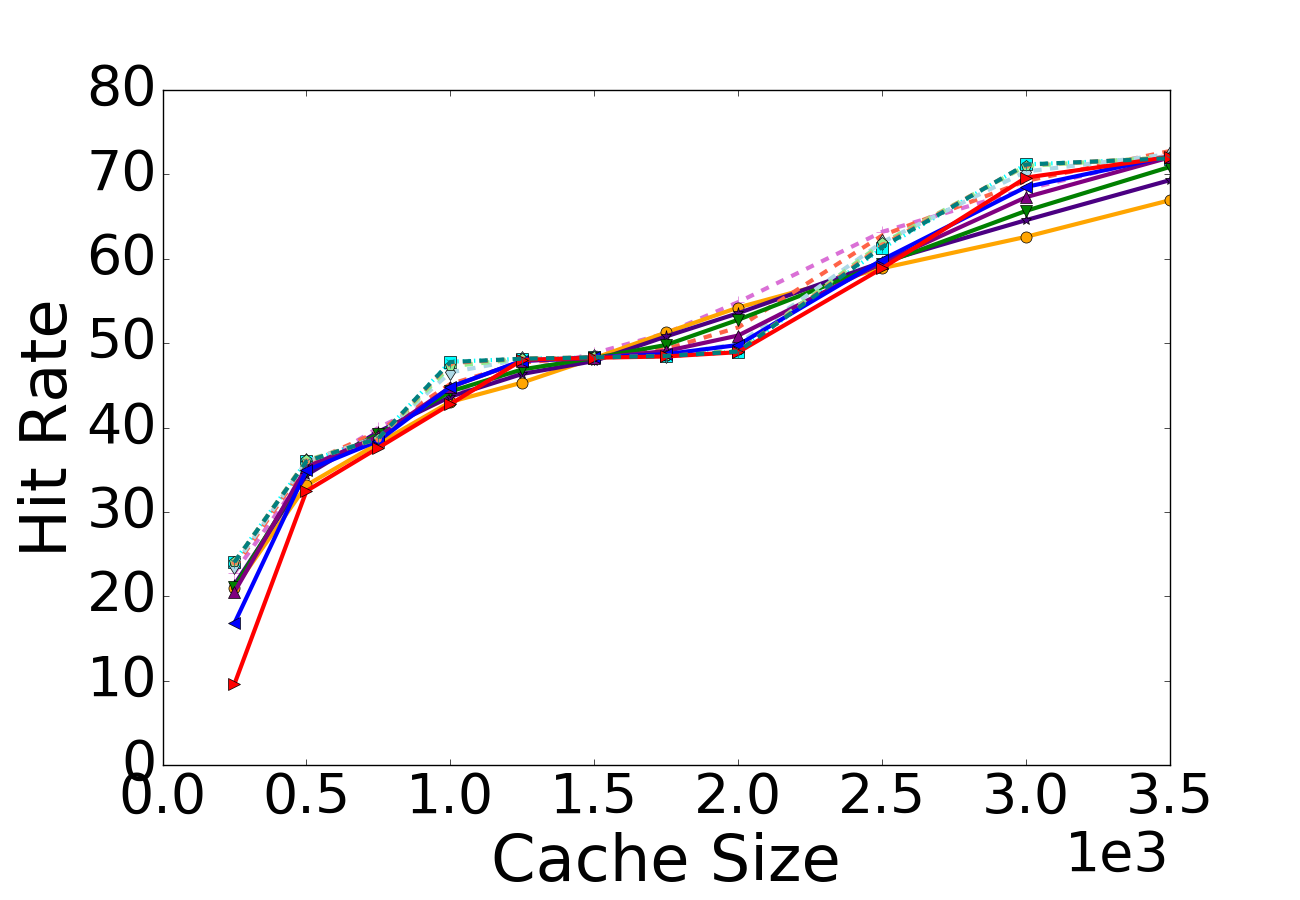}}
		\subfigure[LFU +TinyLFU]{\includegraphics[width=0.45\columnwidth]{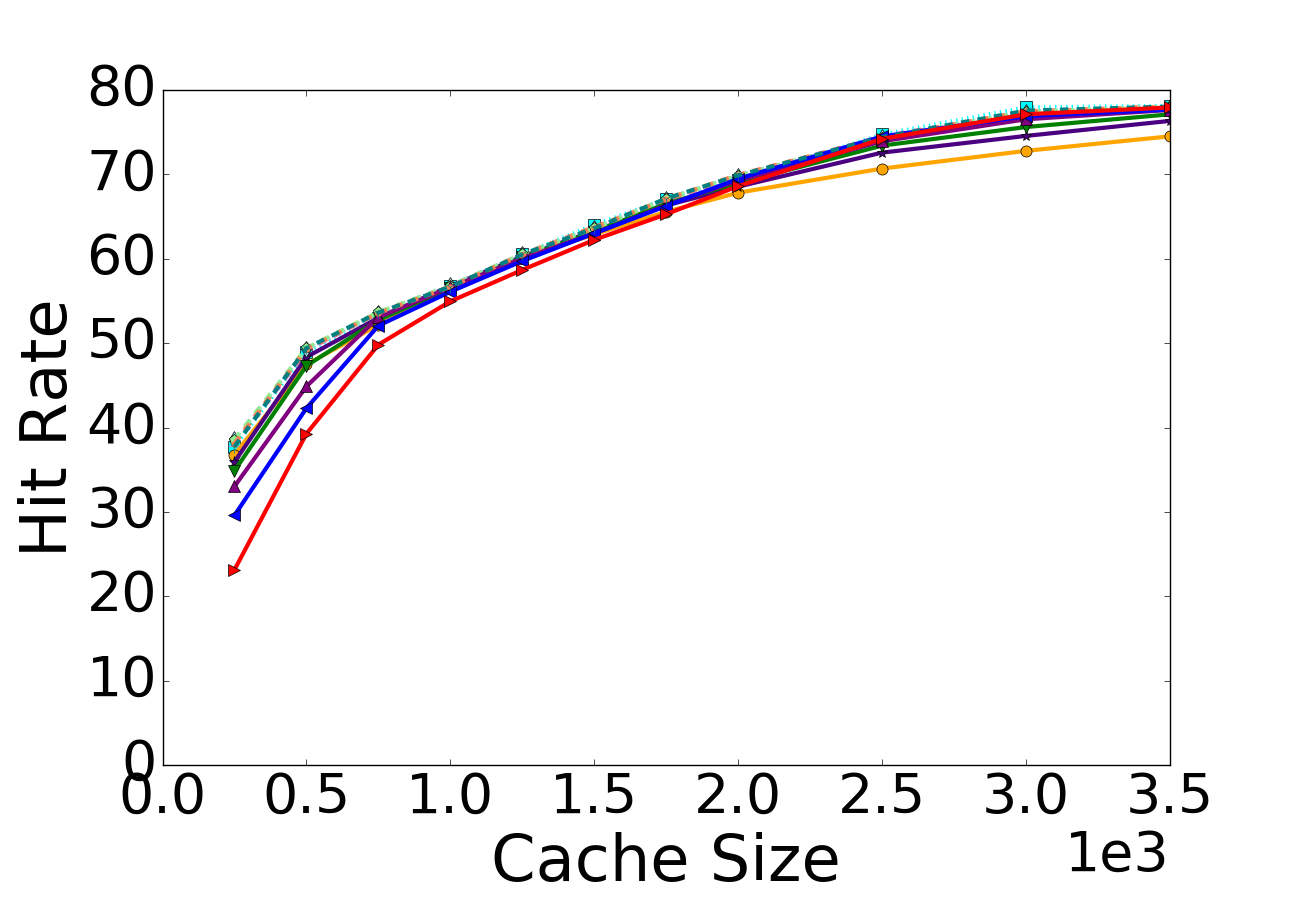}}
		\subfigure[Product]{\includegraphics[width=0.45\columnwidth]{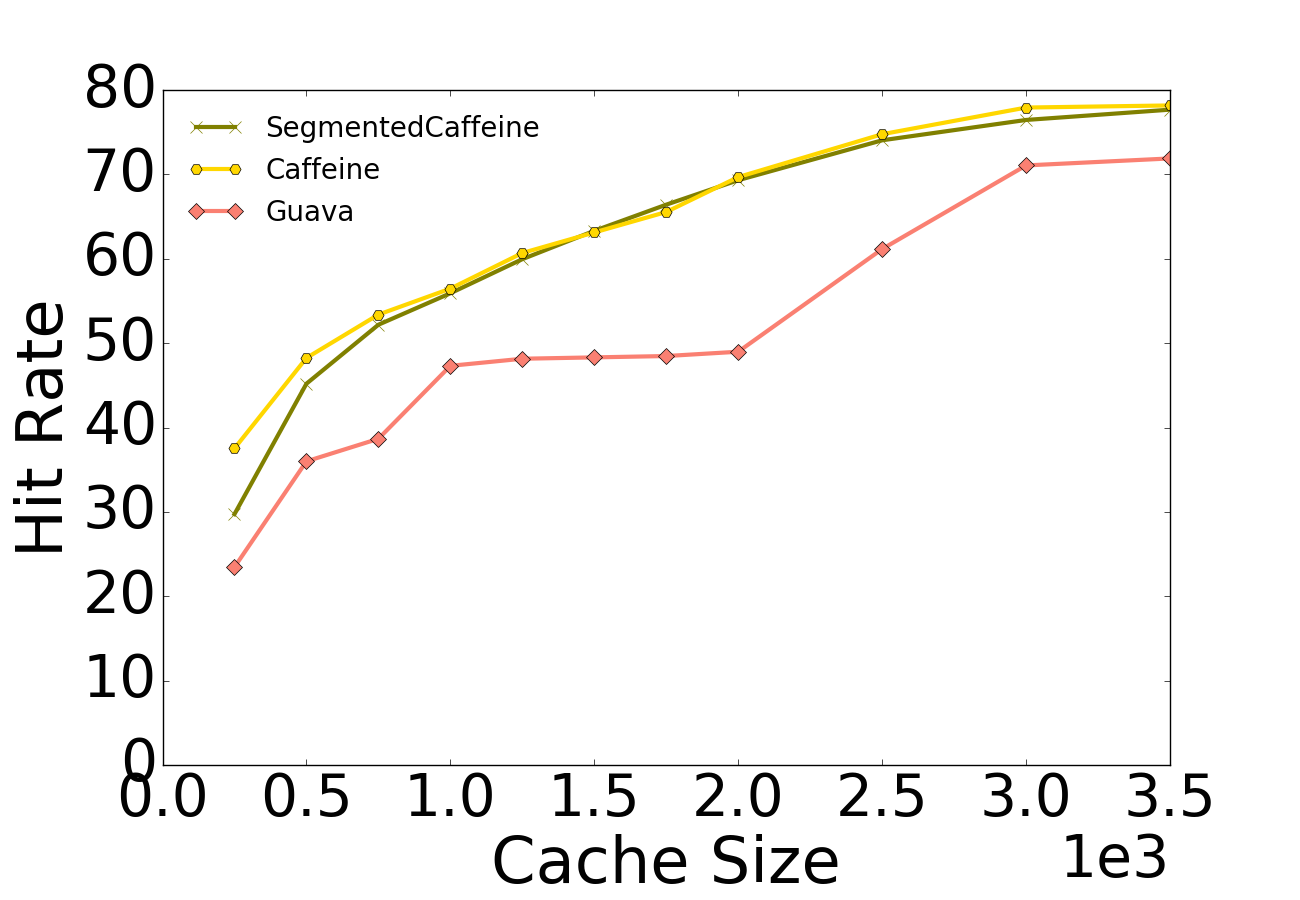}}	
		\subfigure[FIFO+TinyLFU]{\includegraphics[width=0.45\columnwidth]{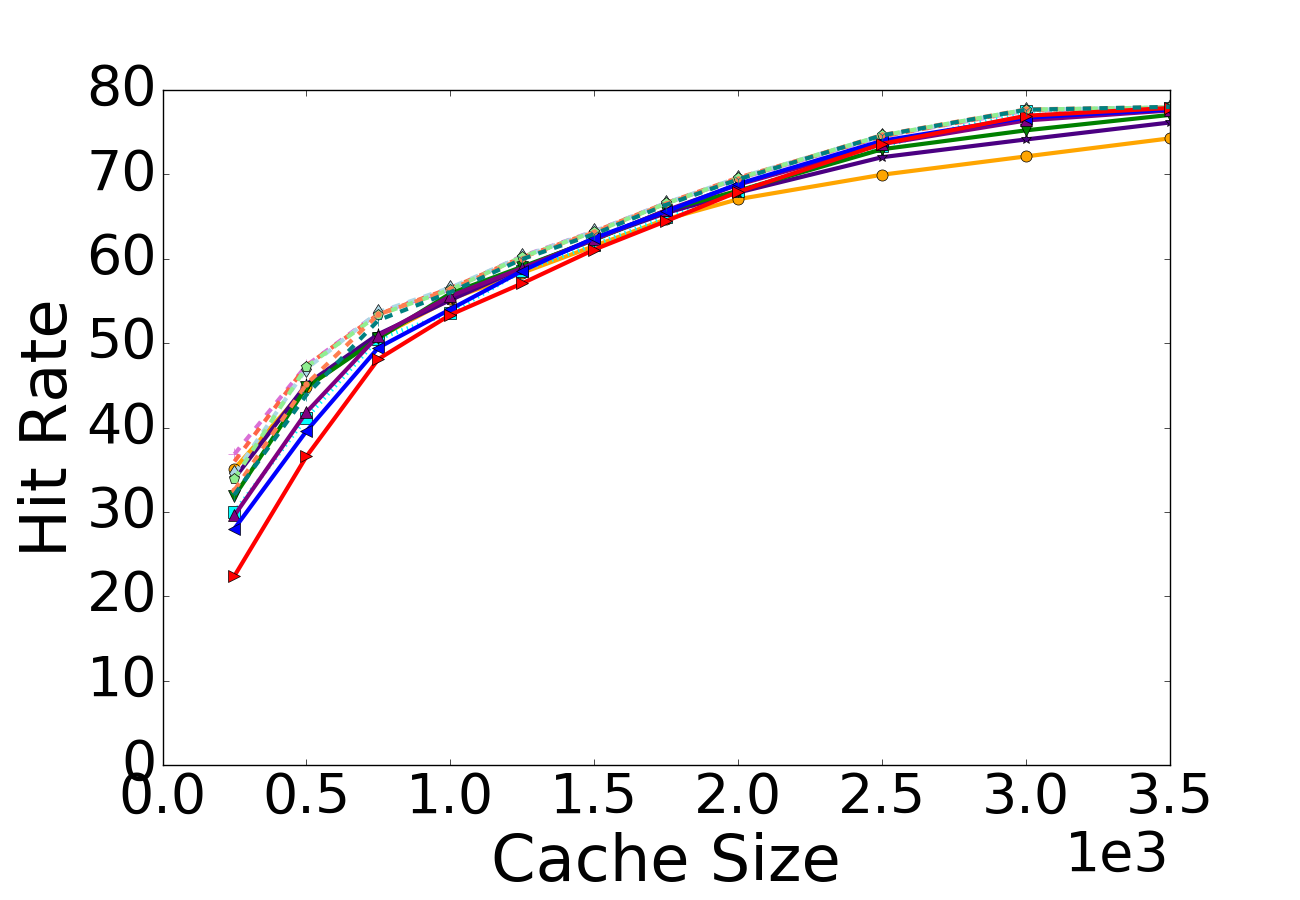}}	
	\end{center}
	\vspace{-0.5cm}
	\caption{LIRS - multi2.}
	\label{figmulti2}
	\vspace{-0.5cm}
\end{figure*}
}{}
\begin{figure*}[t]
	\begin{center}
		\offinterlineskip
		\subfigure[LRU]{\includegraphics[width=0.45\columnwidth]{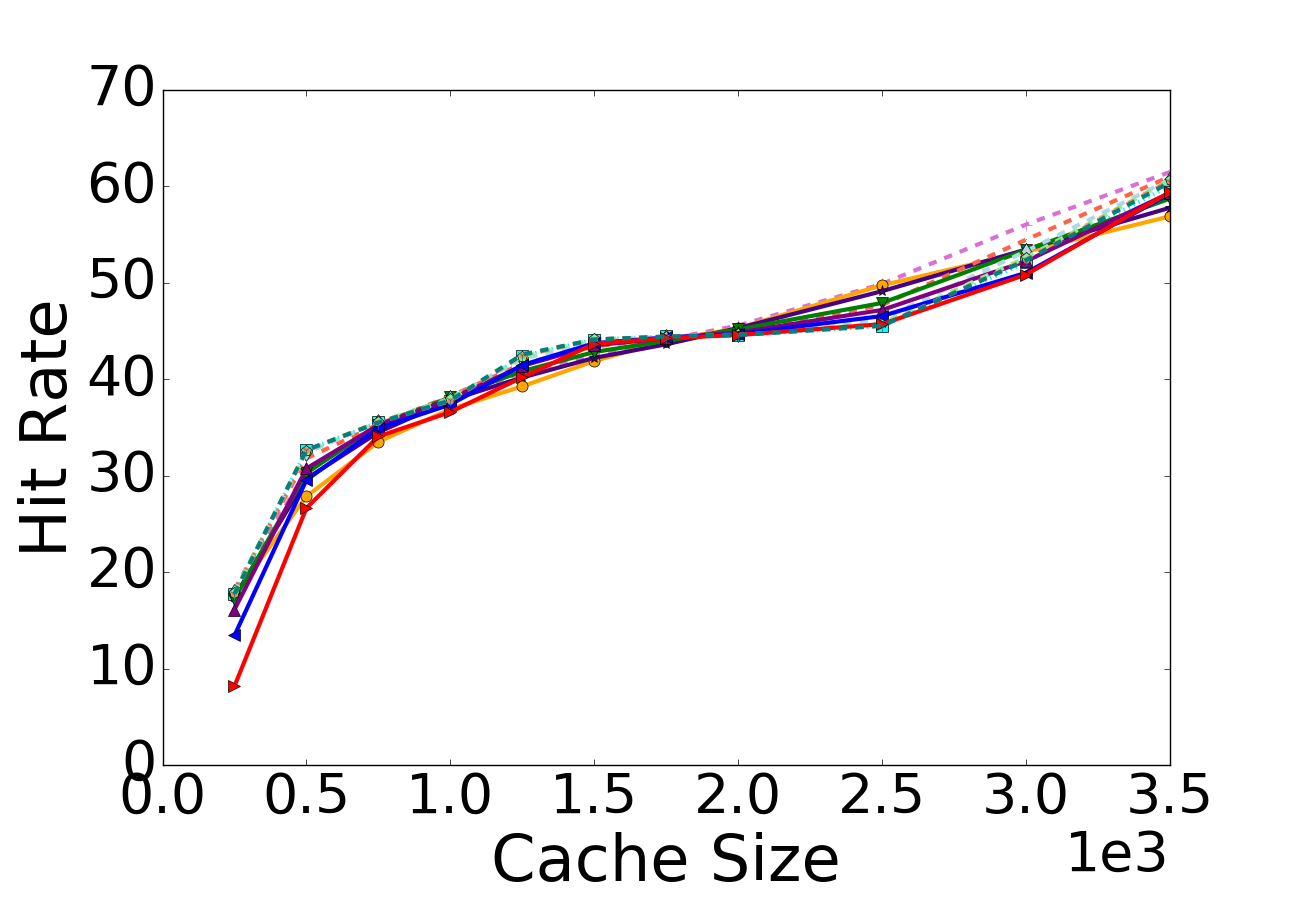}}
		\subfigure[LFU +TinyLFU]{\includegraphics[width=0.45\columnwidth]{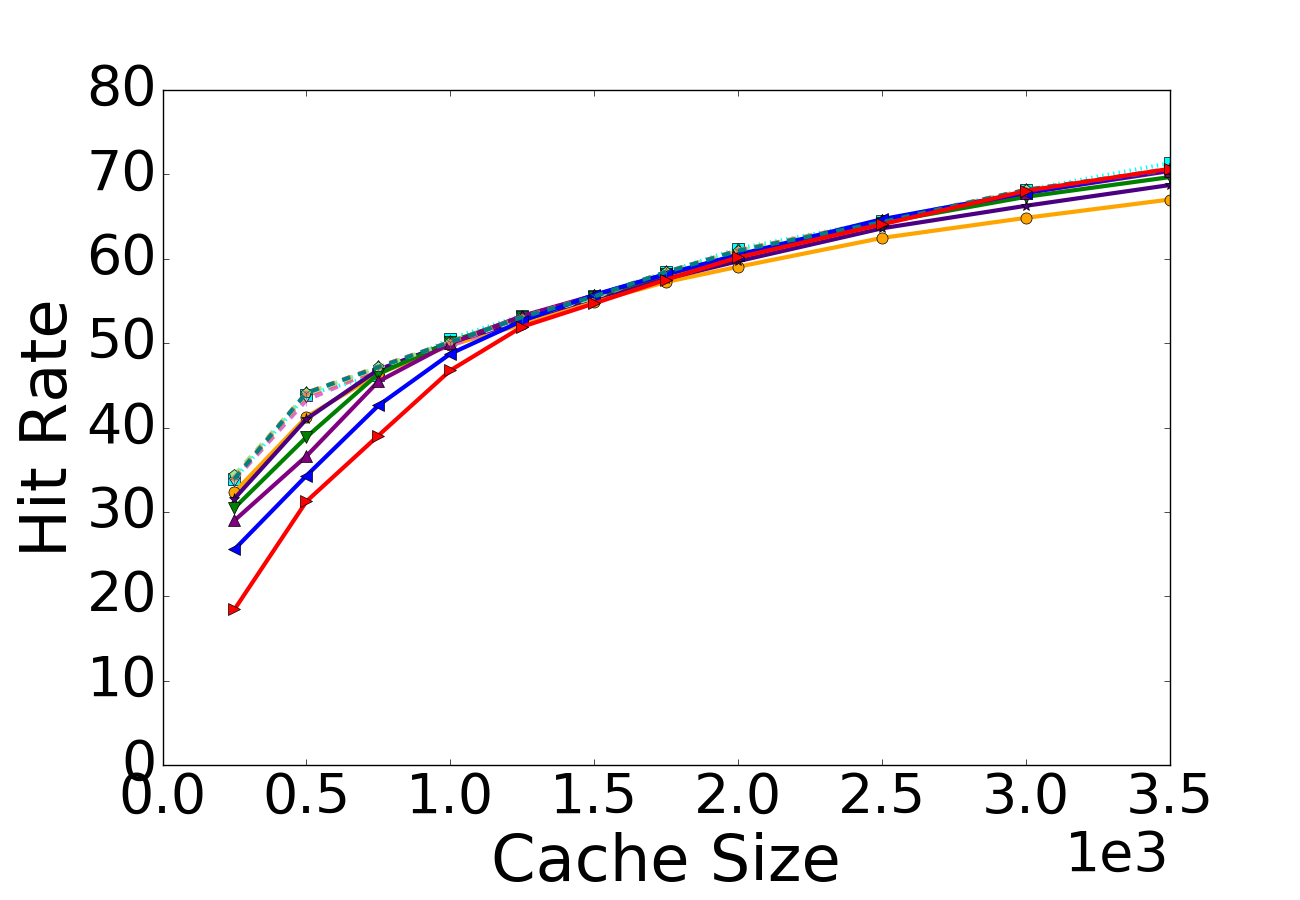}}
		\subfigure[Product]{\includegraphics[width=0.45\columnwidth]{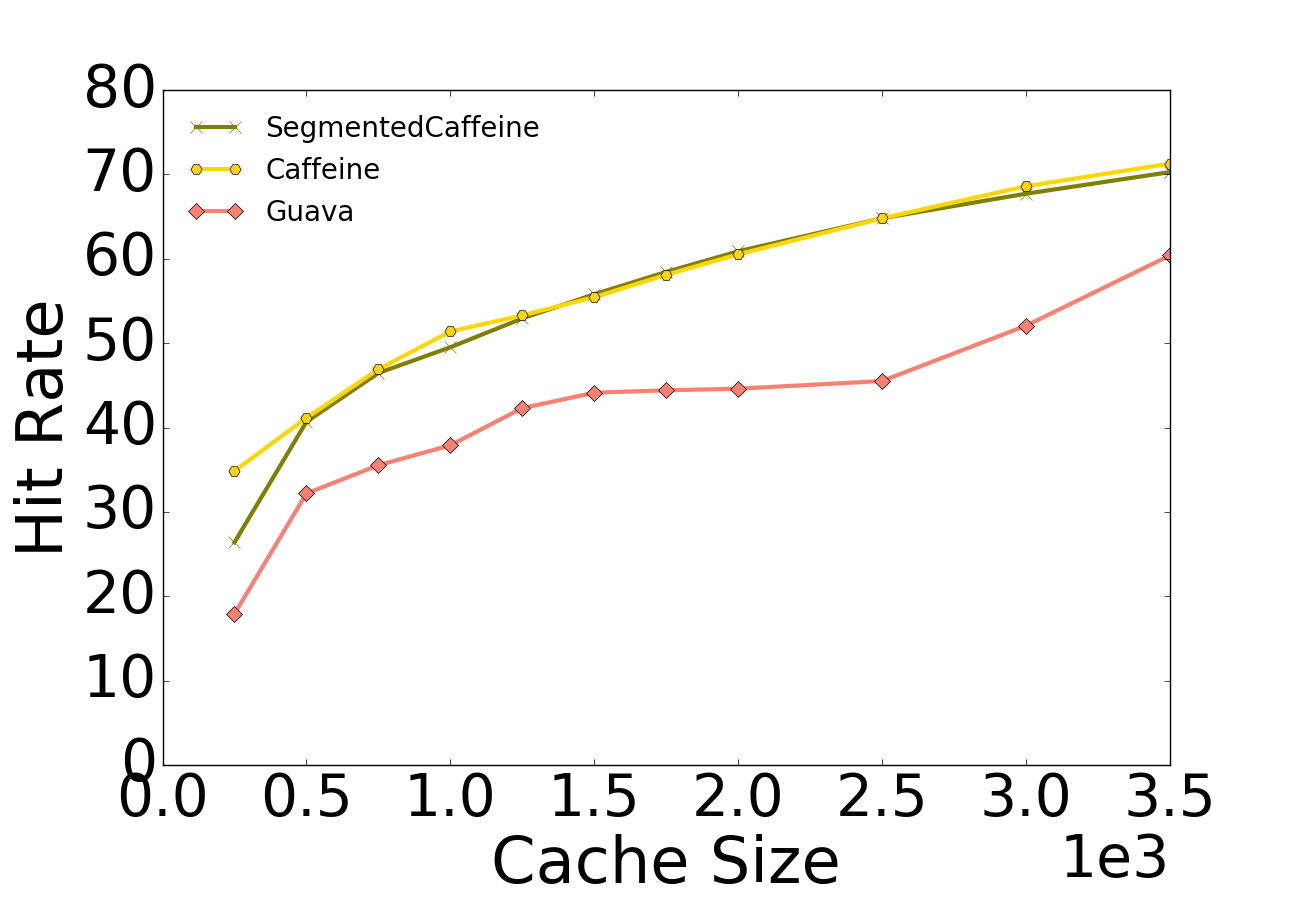}}	
		\subfigure[FIFO+TinyLFU]{\includegraphics[width=0.45\columnwidth]{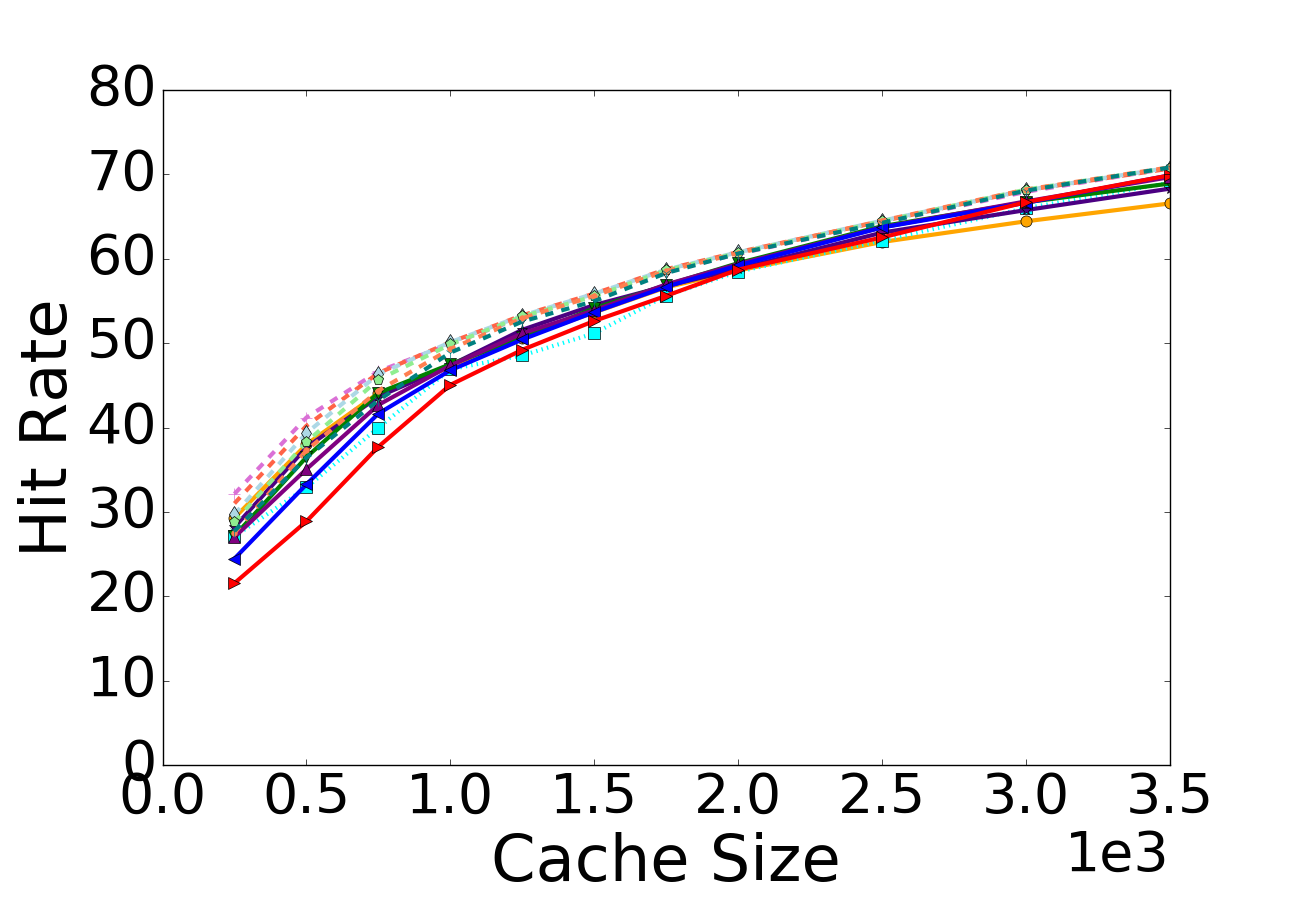}}	
	\end{center}
	\vspace{-0.5cm}
	\caption{multi3.}
	\label{figmulti3}
	\vspace{-0.5cm}
\end{figure*}

\begin{figure*}[t]
	\begin{center}
		\offinterlineskip
		\subfigure[LRU]{\includegraphics[width=0.45\columnwidth]{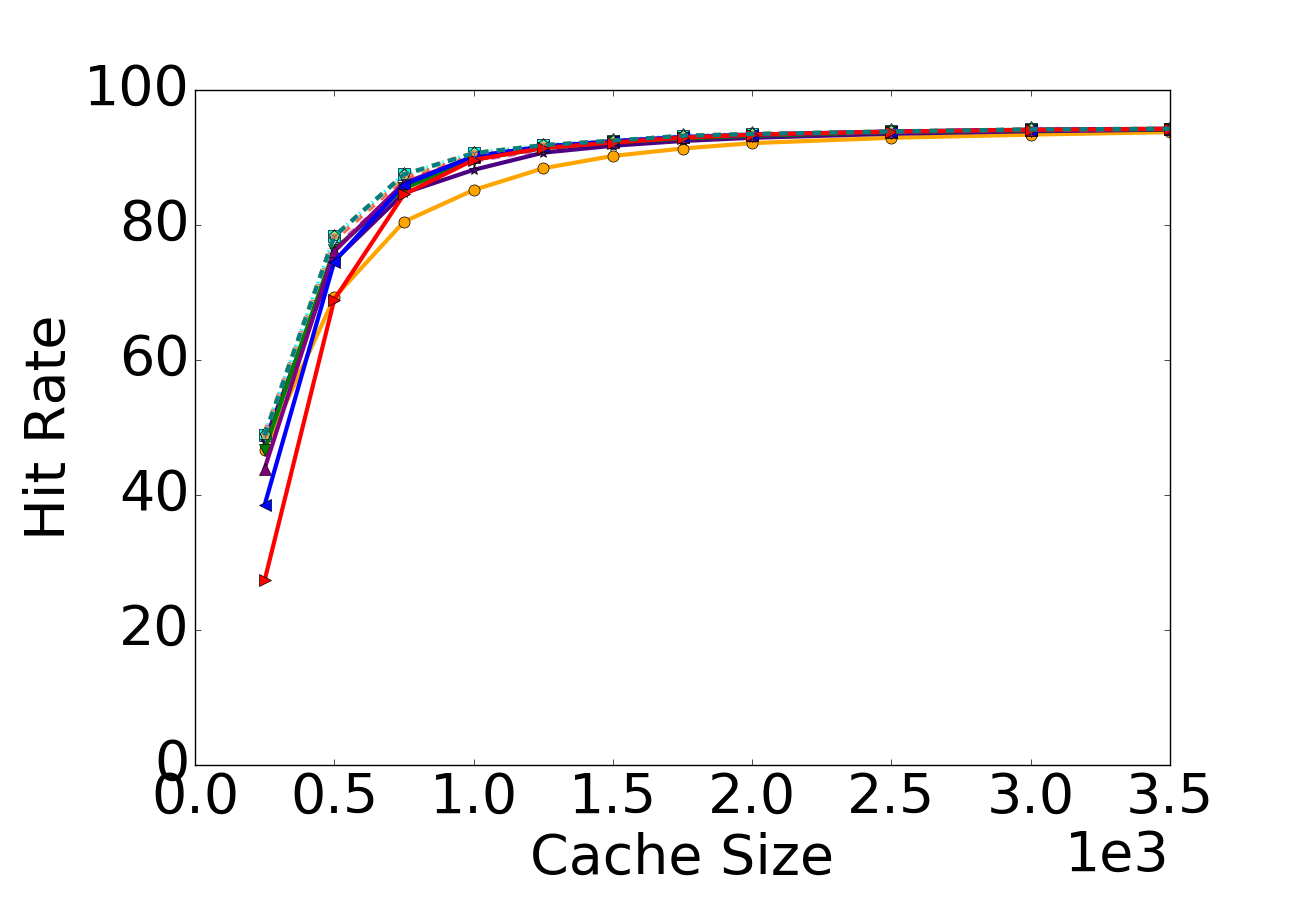}}
		\subfigure[LFU +TinyLFU]{\includegraphics[width=0.45\columnwidth]{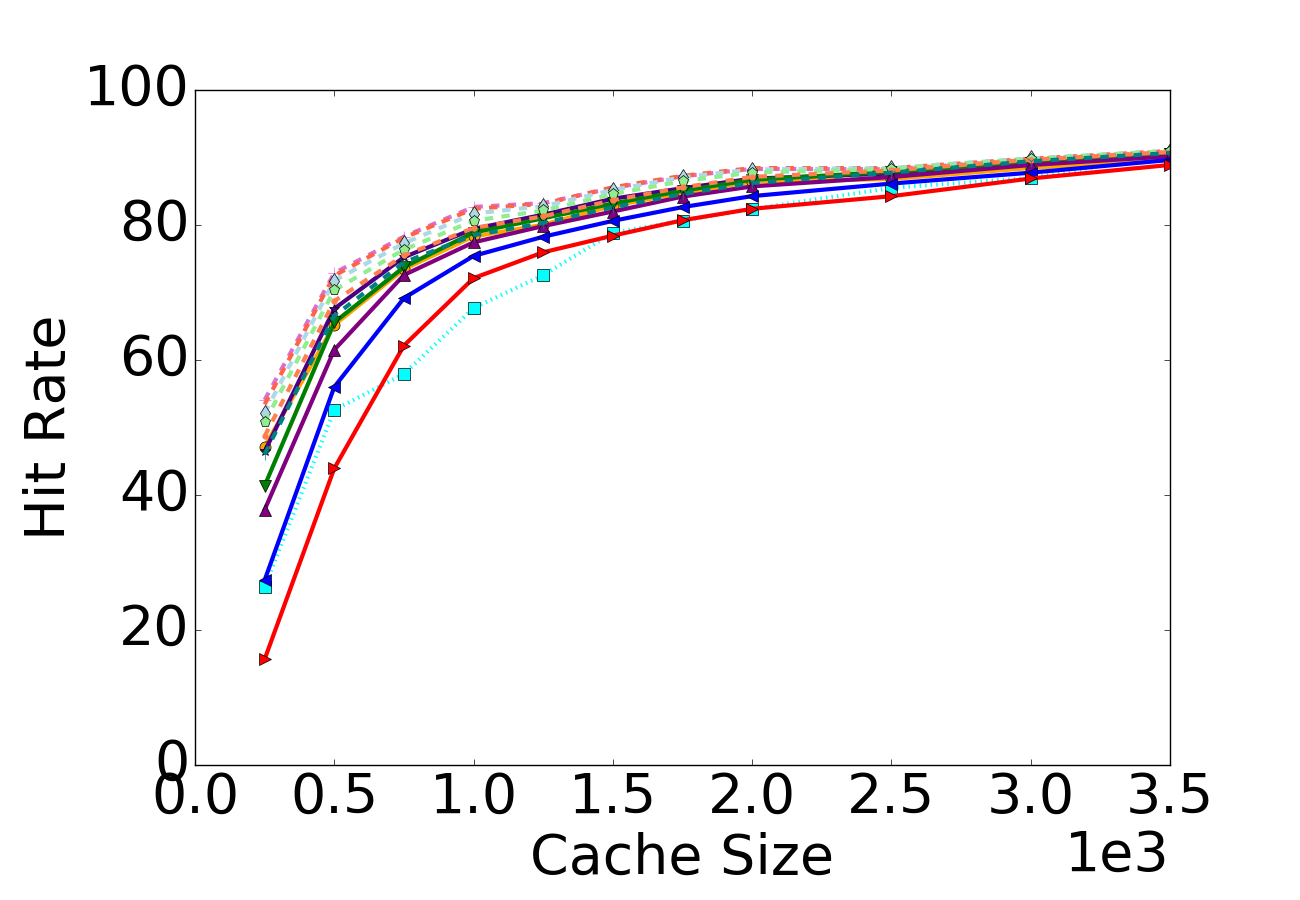}}
		\subfigure[Product]{\includegraphics[width=0.45\columnwidth]{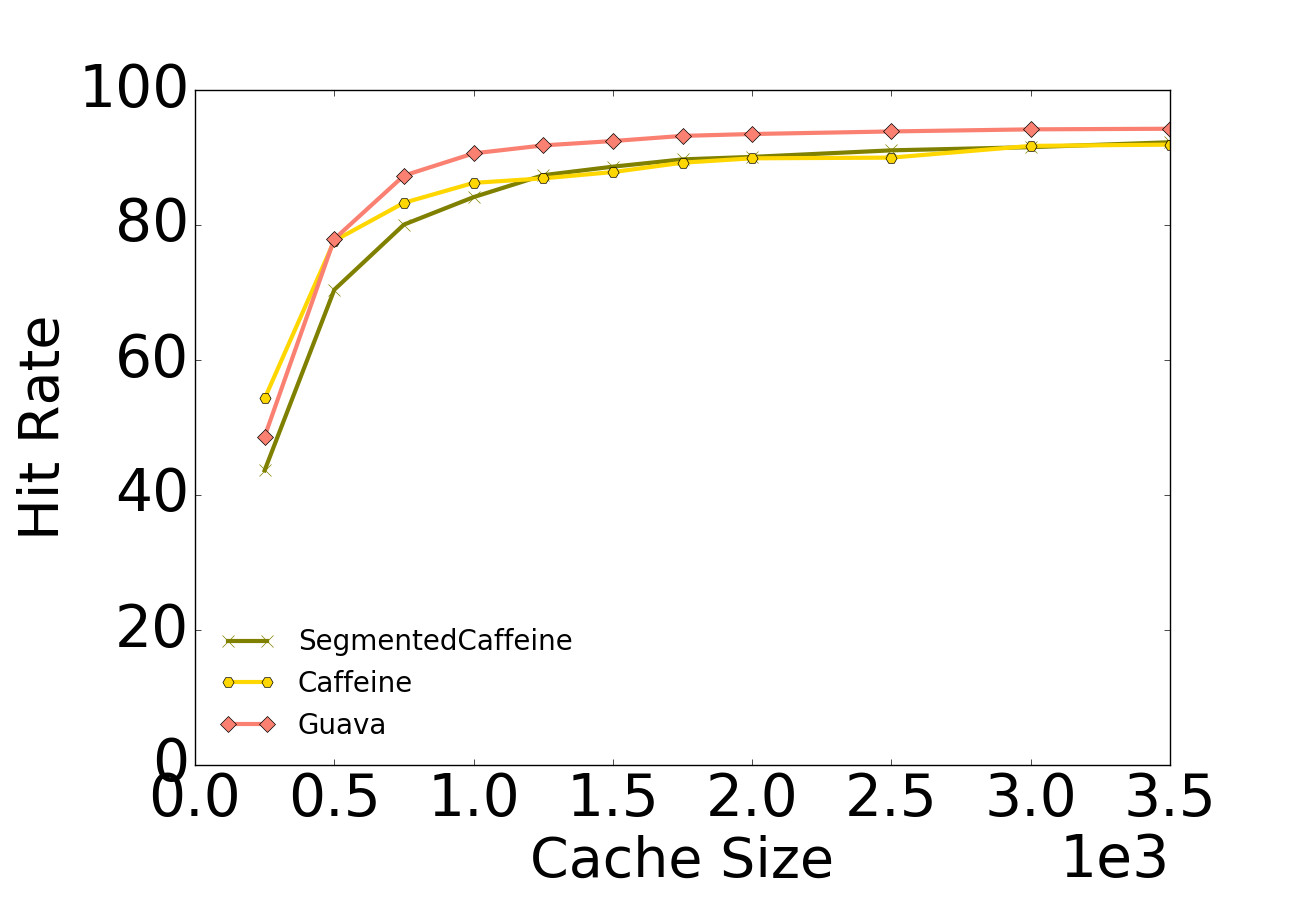}}			
		\subfigure[Hyperbolic]{\includegraphics[width=0.45\columnwidth]{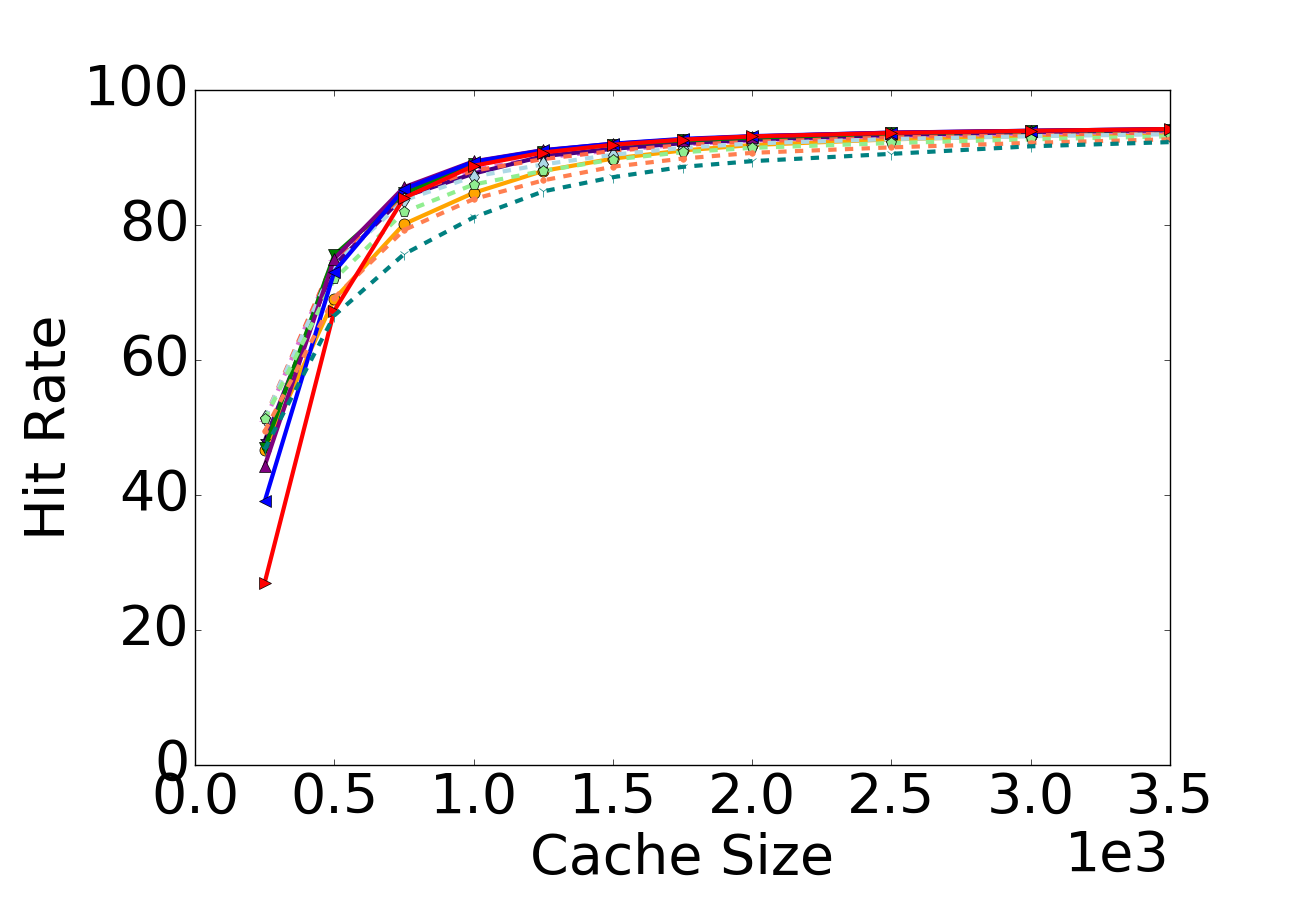}}
	\end{center}
	\vspace{-0.5cm}
	\caption{sprite.}
	\label{figsprite}
	\vspace{-0.5cm}
\end{figure*}

\nottoggle{MEDIUM}{
\begin{figure*}[t]
	\begin{center}
		\offinterlineskip
		\subfigure[LRU]{\includegraphics[width=0.45\columnwidth]{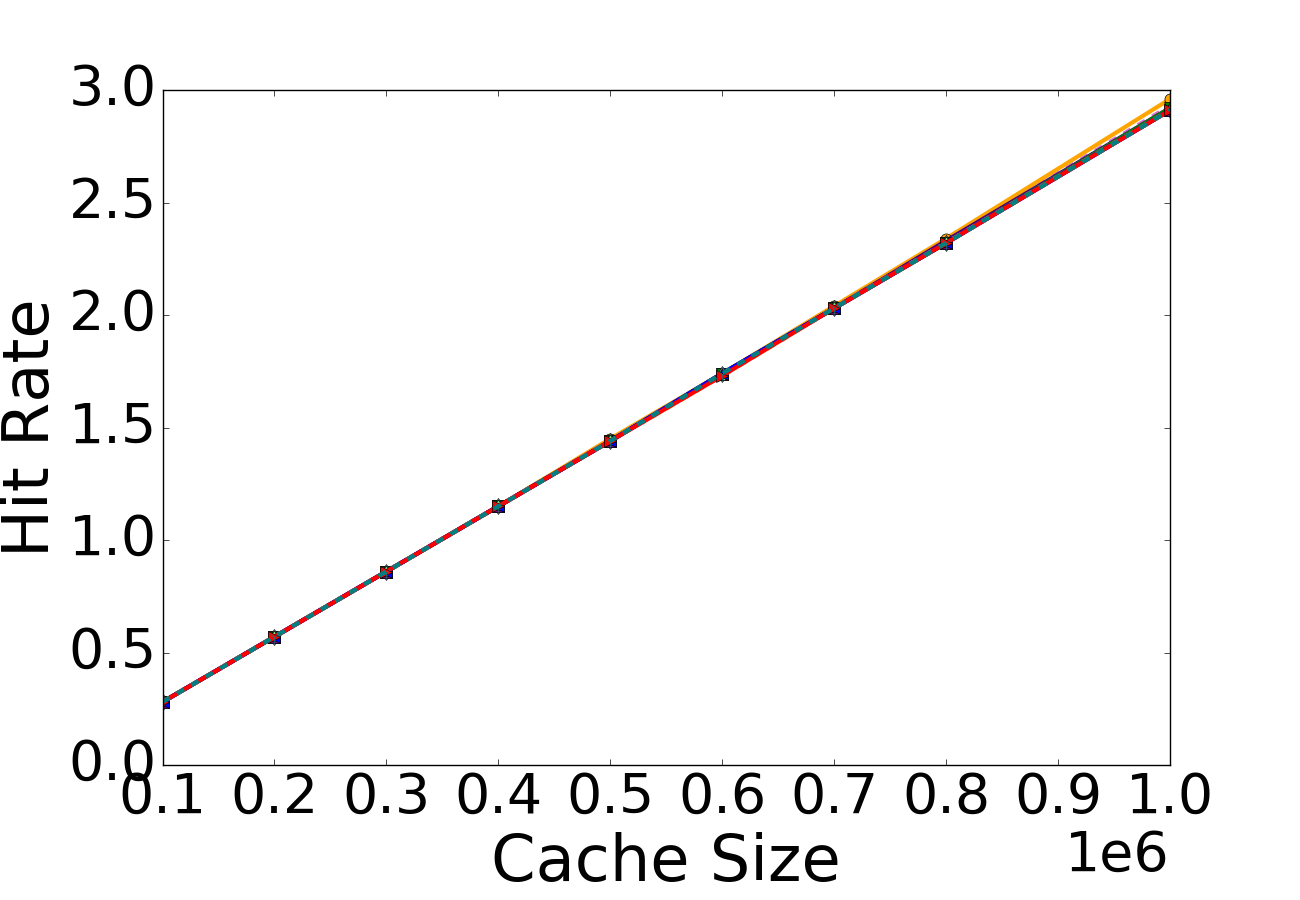}}
		\subfigure[[LFU +TinyLFU]{\includegraphics[width=0.45\columnwidth]{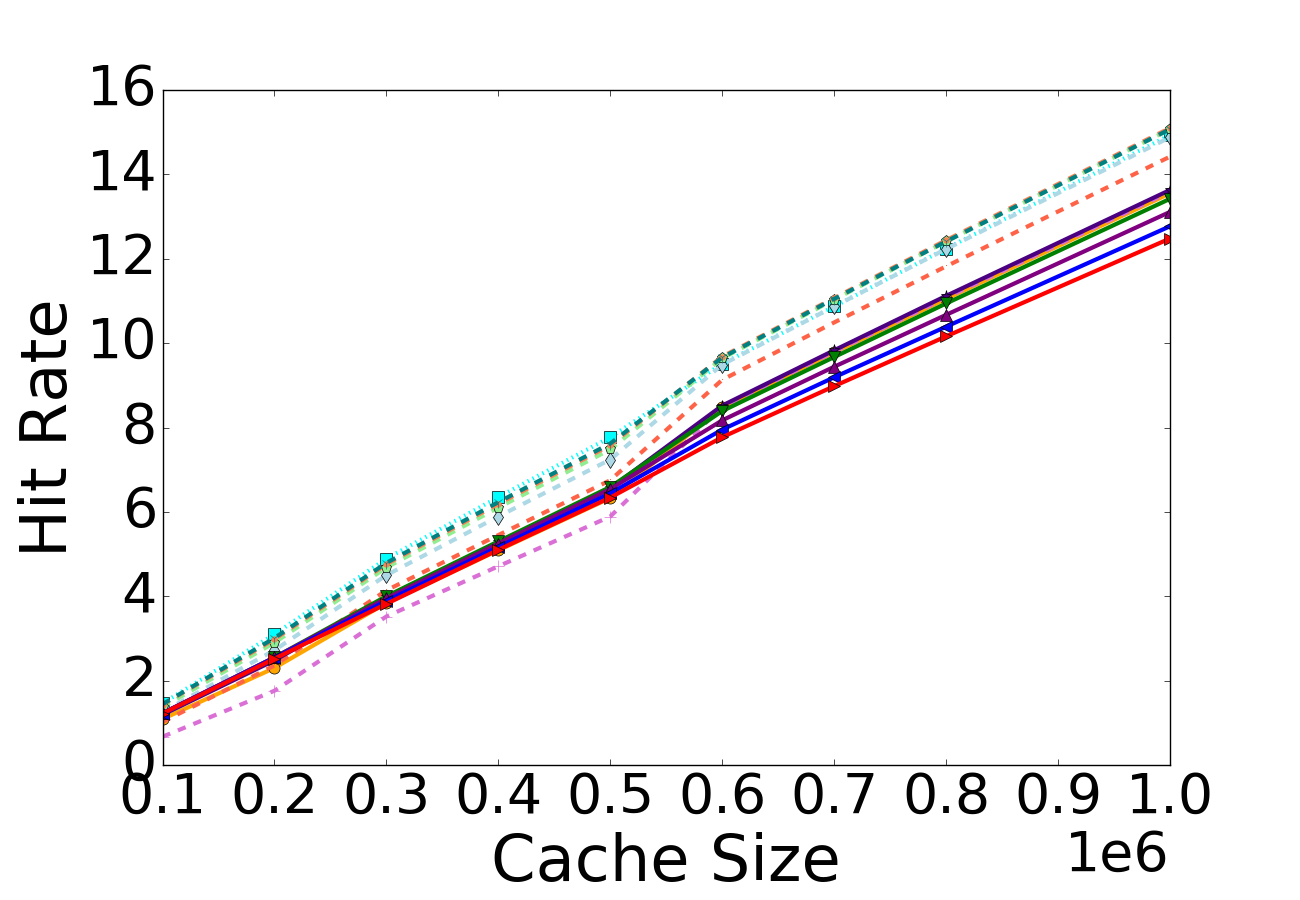}}
		\subfigure[Product]{\includegraphics[width=0.45\columnwidth]{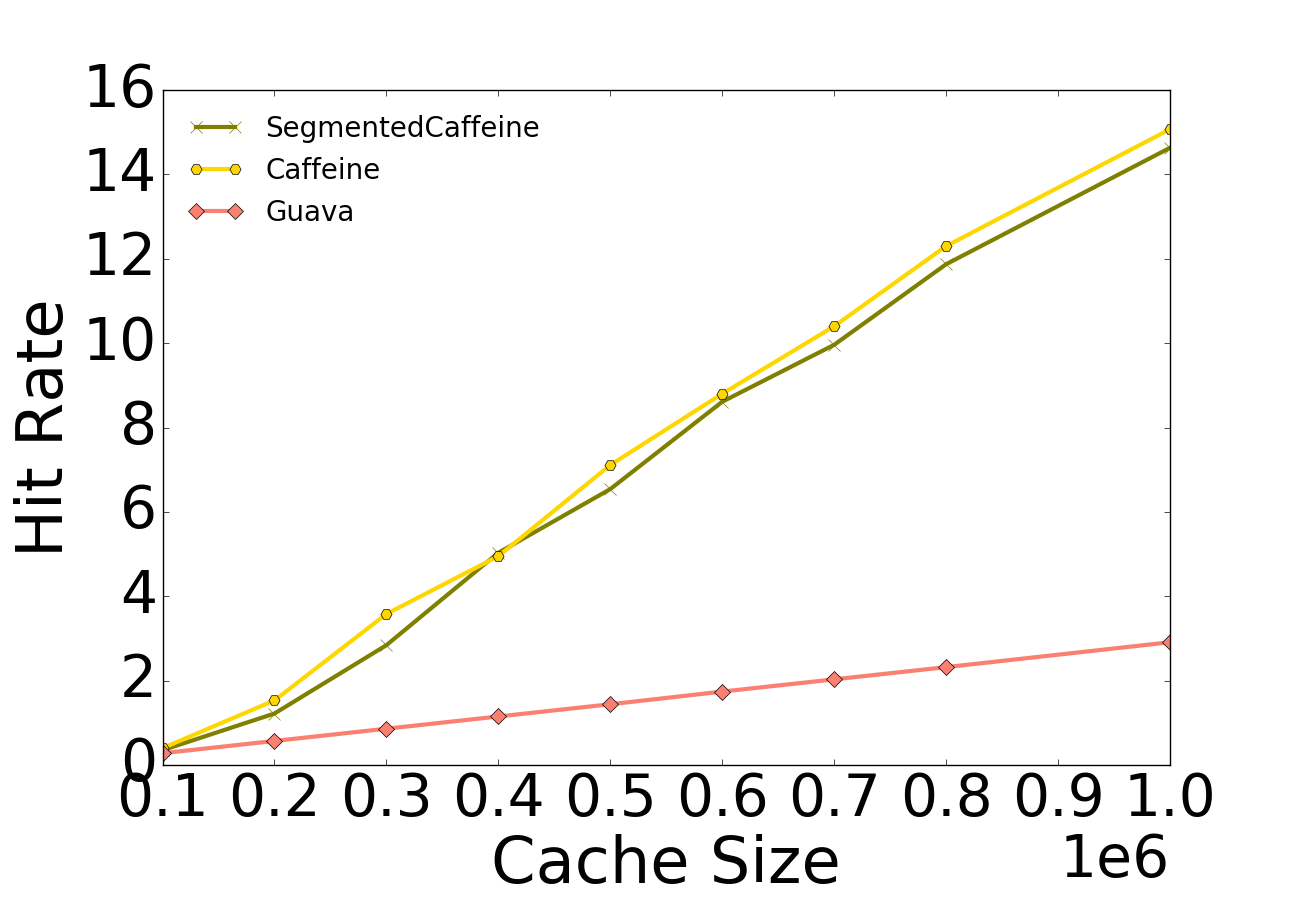}}	
		\subfigure[LFU]{\includegraphics[width=0.45\columnwidth]{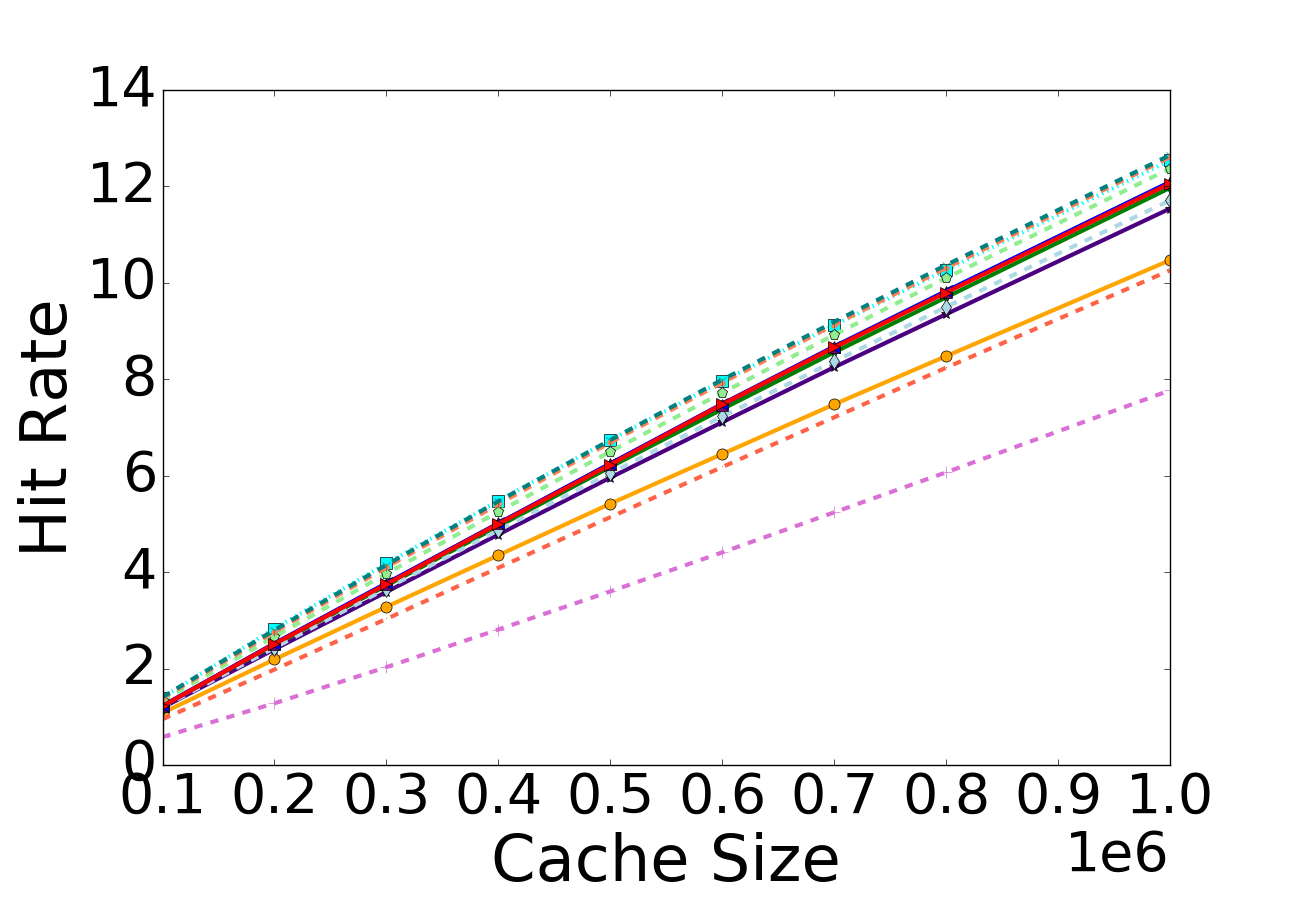}}
	\end{center}
	\vspace{-0.5cm}
	\caption{WebSearch2.}
	\label{figWebSearch2}
	\vspace{-0.5cm}
\end{figure*}
}{}

\nottoggle{SMALL}{
\begin{figure*}[t]
	\begin{center}
		\offinterlineskip
		\subfigure[LRU]{\includegraphics[width=0.45\columnwidth]{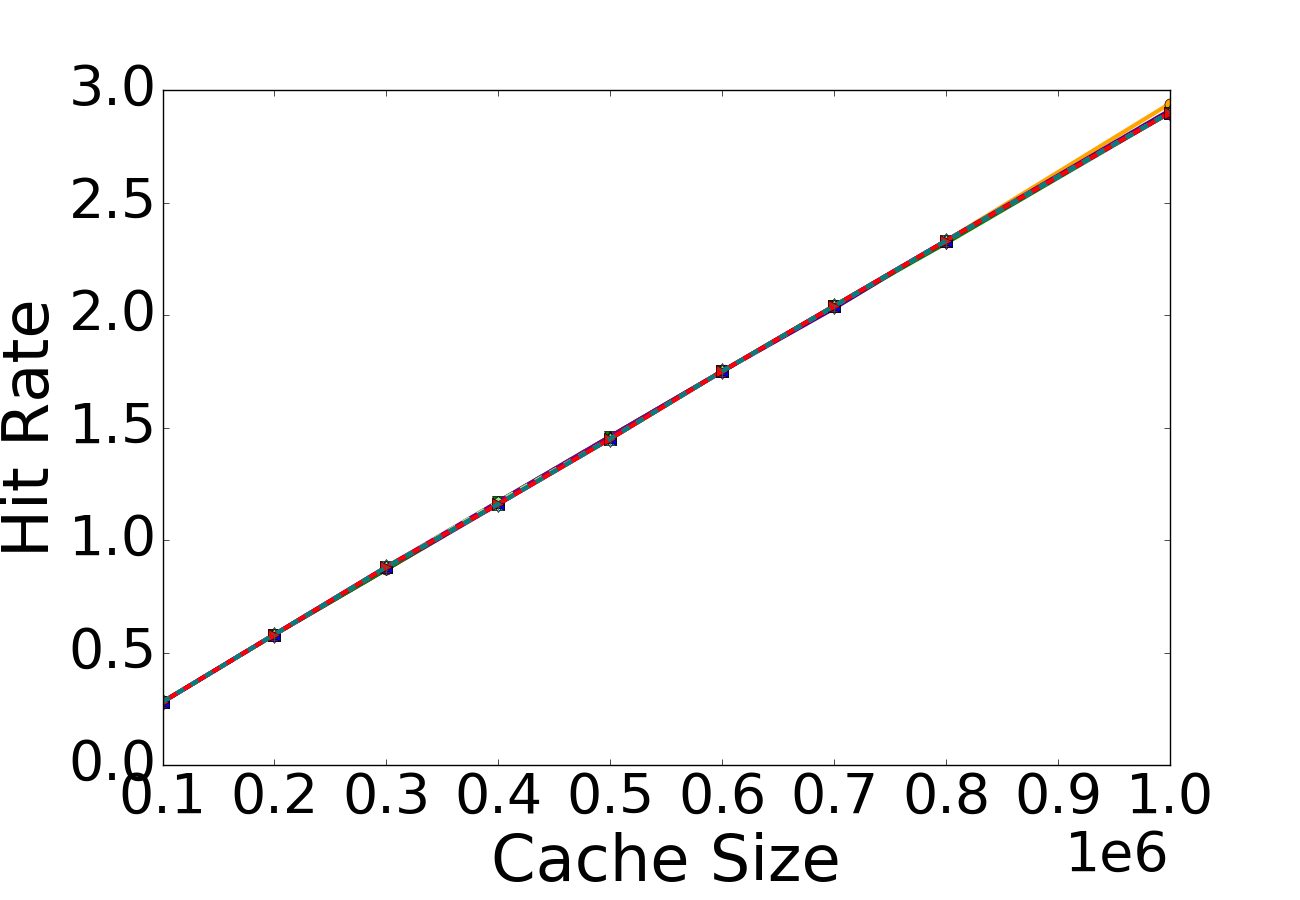}}
		\subfigure[[LFU +TinyLFU]{\includegraphics[width=0.45\columnwidth]{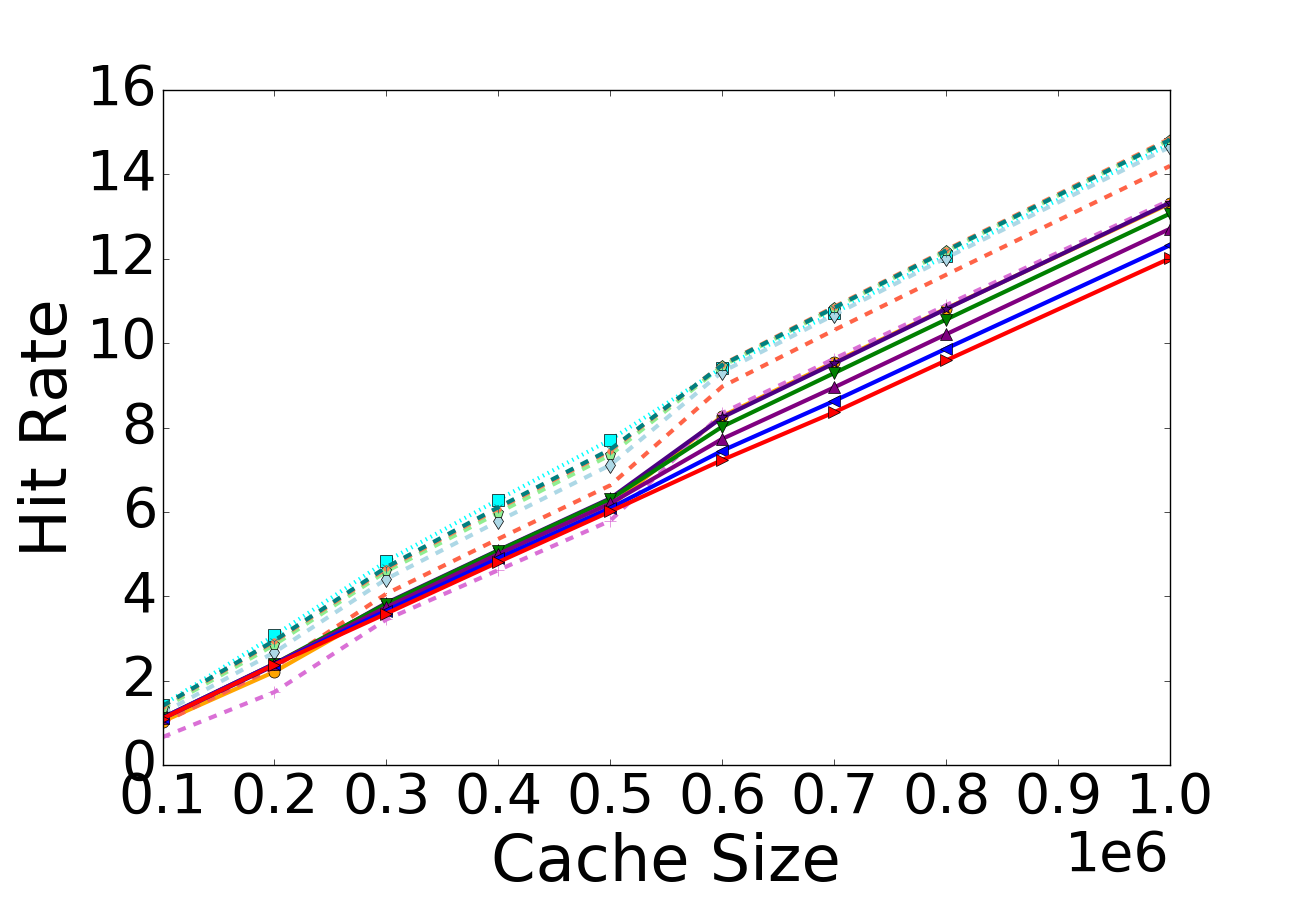}}
		\subfigure[Product]{\includegraphics[width=0.45\columnwidth]{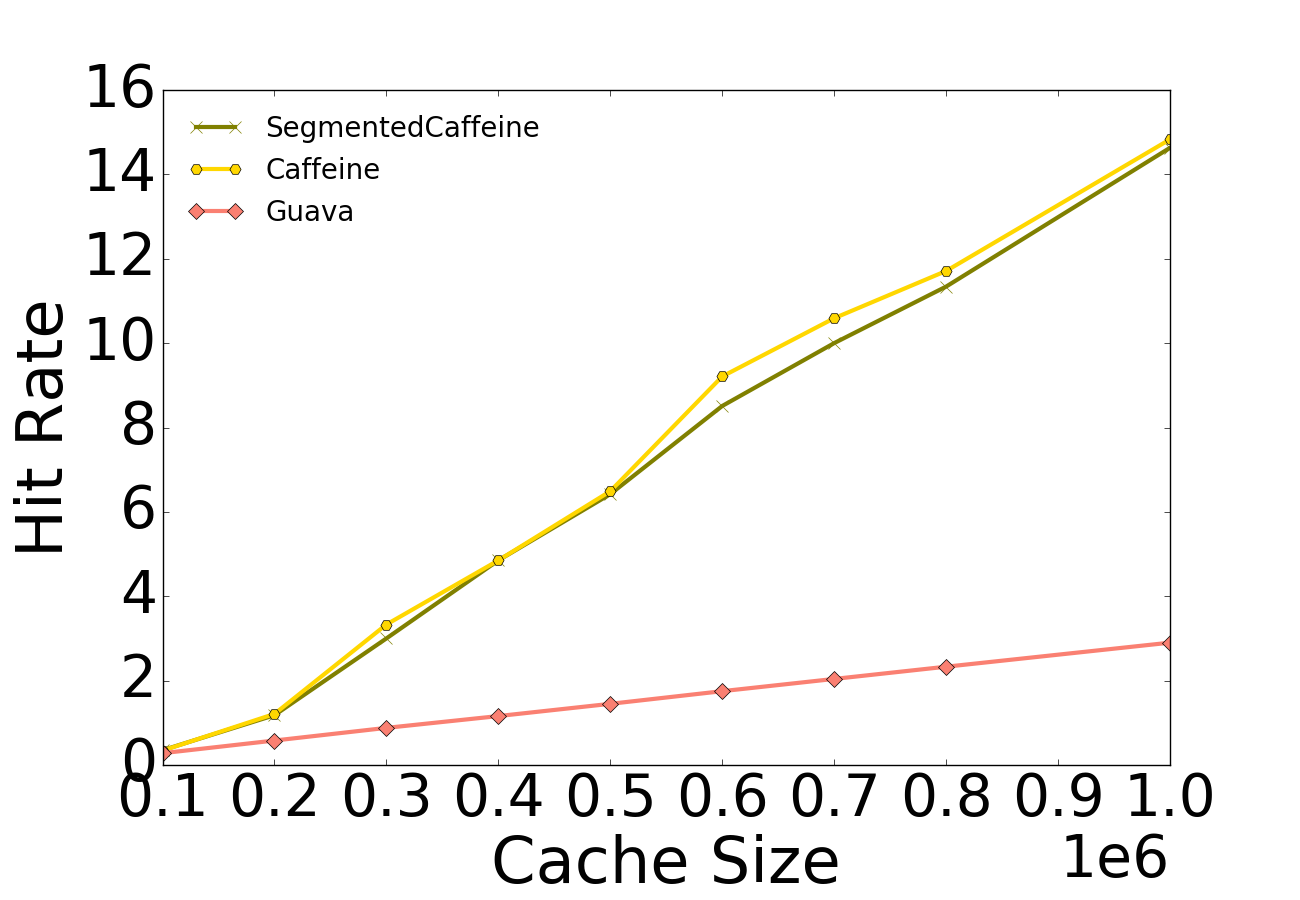}}	
		\subfigure[LFU]{\includegraphics[width=0.45\columnwidth]{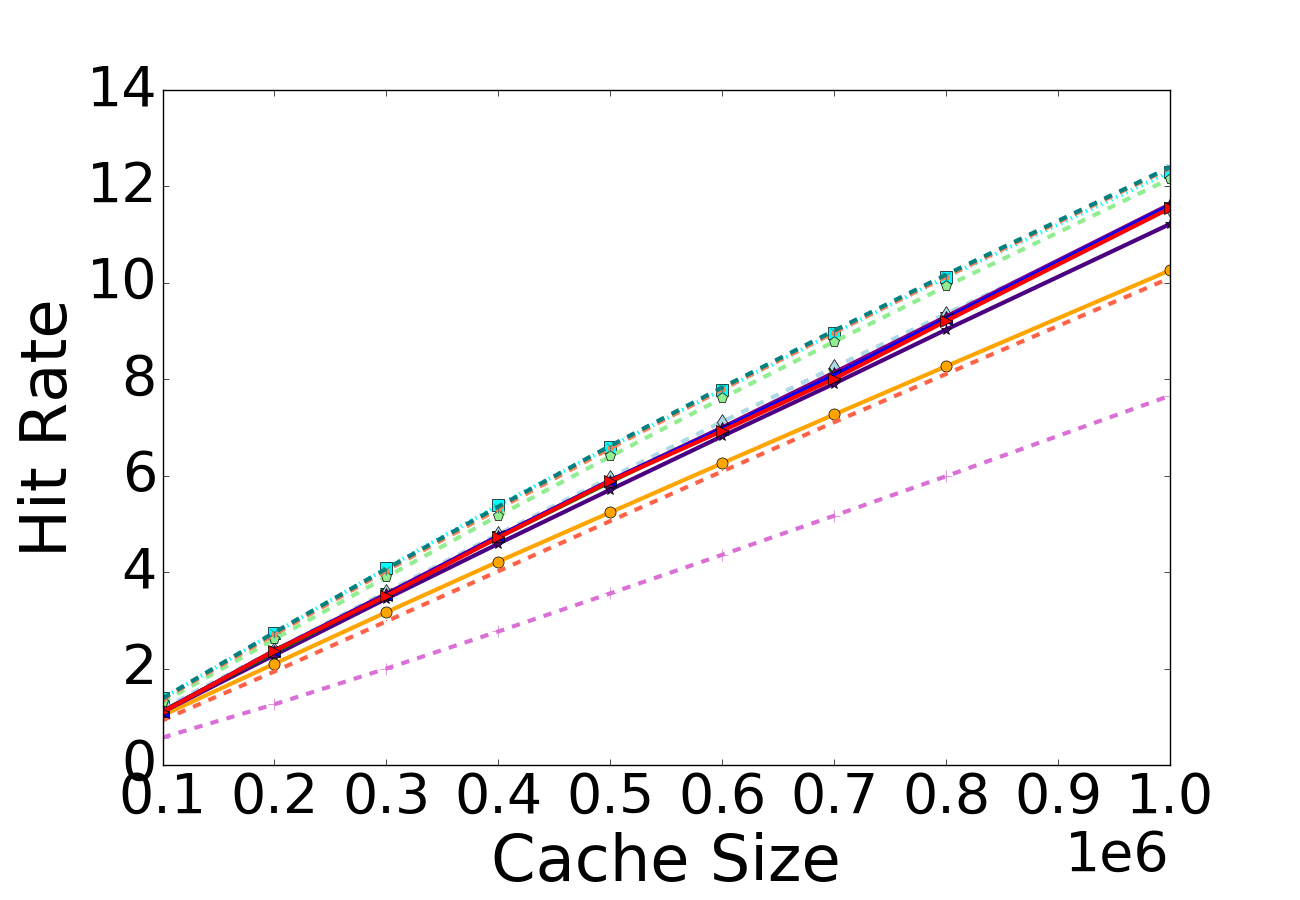}}
	\end{center}
	\vspace{-0.5cm}
	\caption{WebSearch3.}
	\label{figWebSearch3}
	\vspace{-0.5cm}
\end{figure*}
}{}
\subsection{Workloads and Setup}

We utilized the following real world workloads: 
\begin{description}[style=unboxed]
\item[wiki1190322952 and wiki1191277217:] two traces from Wikipedia containing 10\% of the traffic to Wikipedia during three months starting in September
2007~\cite{wikipedia}.

\item[sprite:] From the Sprite network file system, which contains requests to a file server from client workstations for a two-day period by~\cite{LIRS}.
\item[multi1:] This trace is obtained by executing two workloads, cs, an interactive C source program examination tool trace, and cpp, a GNU C compiler pre-processor trace  by~\cite{LIRS}.
\item[multi2:] This is obtained by executing three workloads, cs, cpp and postgres, a trace of join queries among four relations in a relational database system from the UC Berkeley, provided by~\cite{LIRS}.
\item[multi3:] This is obtained by executing four workloads, cs, cpp, glimpse, a text information retrieval utility, and postgres provided by~\cite{LIRS}.
\item[OLTP:] A file system trace of an OLTP server~\cite{ARC}.
\nottoggle{SMALL}{\item[DS1:] A database trace provided by~\cite{ARC}.}
\iftoggle{SMALL}{\item[S3:] Search engine trace provided by~\cite{ARC}.}{\item[S1 and S3:] Search engine traces provided by~\cite{ARC}.}
\iftoggle{SMALL}{\item[P12:] Windows server disc accesses~\cite{ARC}.}{\item[P8, P12 and P14:] Windows server disc accesses~\cite{ARC}.}
\item[F1 and F2:] Traces of transaction processing taken from large financial institution. The trace is provided by the UMass trace repository~\cite{umasstrace}.
\iftoggle{SMALL}{\item[W3:] A search engine trace provided by~\cite{umasstrace}.}{\item[W2 and W3:] Search engine traces provided by~\cite{umasstrace}.}
\end{description}

In all figures below, the ``$k$ ways'' and ``sampled'' lines are \NAMNECACHE{} with set/sample sizes of $4,8,16,32,64$ \& $128$. The  ``fully associative'' line stands for a  linked-list based fully associative implementation.
The ``product'' figures for each trace present Guava cache by Google~\cite{guava-cache}, Caffeine~\cite{CaffeineProject} and segmented Caffeine~\cite{Seg-Caffeine}, which is Caffeine divided into 64 segments to enable Caffeine to benefit from parallelism at the possible expense of reduced hit-ratios (plain Caffeine serves updates by a single thread).

In all throughput evaluations, we compare \NAMNECACHE{} with $k = 8$ to sampled implementations of corresponding methods with sample size = 8, Guava cache, Caffeine and segmented Caffeine.
All evaluated cache implementations are written in Java.
Our \NAMNECACHE{} cache uses xxHash~\cite{xxHash} as the hash function that distributes items to sets.

\subsubsection{Platform}
We ran on the following platforms:
\begin{itemize}
    \item AMD PowerEdge R7425 server with two AMD EPYC 7551 Processors, each with $32$ 2.00GHz/2.55GHz cores that multiplex $2$ hardware threads, so in total, this system supports $128$ hardware threads.
	The hardware caches include 32K L1 cache, 512K L2 cache and 8192K L3 cache.
	It has $128$GB DRAM organized in 8 NUMA nodes, 4 per processor.
	
	\item Intel Xeon E5-2667 v4 Processor including $8$ 3.20GHz cores with 2 hardware threads, so this system supports $16$ hardware threads.
	It has $128$GB DRAM.
	The hardware caches include 32K L1 cache, 256K L2 cache and 25600K L3 cache.
\end{itemize}

\subsubsection{Methodology}
During throughput evaluation, we employed the following typical cache behavior:
For each element in the trace, we first performed a read operation.
If the element was not in the cache, we initiated a write operation of that element into the cache.

We started with a warm-up for filling the cache with elements not in the trace. 
We did the warm-up first in the main thread that inserted elements up to the cache size and then in each thread by inserting $ size/\#threads$ non cached elements that are not in the trace.
All threads started the test after the warm-up simultaneously by waiting on a barrier.

We ran each test for a fixed period, counting the number of operations.
The time was calculated in comparison to the size of the trace and is between 1 and 4 seconds.
Each trace was run with a different cache size taking into account the size of the trace and the hit rate.
Each point in the graphs is the mean taken over 11 runs.

\subsection{Hit Ratio Evaluation}
Figure~\ref{fig:legend} shows the legend for all the hit ratio and throughput graphs shown in this work.
\Cref{fig:wiki1190322952,figP12,figS1,figS3,figOLTP,figmulti2,figmulti3,figsprite,figWebSearch3}
depict the results for the traces described above with different eviction and admission policies.
For brevity, not all traces and combinations are shown.
The ones removed essentially exhibit similar results.

The first graphs for all traces is LRU due to its popularity.
The second graph is LFU eviction with TinyLFU admission, first proposed in~\cite{EFM17}.
LFU is also a popular policy, but LFU alone is not a very good management policy as it lacks aging mechanisms and lacks data about non-cached objects, which are added by the TinyLFU admission mechanism.
The third graph includes the real caching products: Guava~\cite{guava-cache}, Caffeine~\cite{CaffeineProject} and the proof-of-concept segmented Caffeine (with number of  independent segments that match the number of threads tested)~\cite{Seg-Caffeine}.
For brevity, the fourth graph presents different (additional) policies for different traces.
We summarise the results below:

\paragraph{LRU eviction policy}
Recall that LRU evicts the least recently used item (and admits all items).
In the limited associativity model, we first assign a new item to one of the small-sized sets based on hashing its key and then evict the least recently used item of that set.
As can be seen from the graphs, for most traces, the level of associativity has minimal impact on the obtained hit ratio.

\paragraph{LFU eviction with LFU admission}
The results for this policy (in subfigures b) show that associativity has a minor impact on the hit ratio.
Also, observe that the performance of the sampled approach is similar to that of limited associativity.
In some cases one is slightly better than the other and in others it is the opposite.

\paragraph{Products}
In these graphs (subfigures c), we compared Guava cache~\cite{guava-cache}, Caffeine~\cite{CaffeineProject} and segmented Caffeine~\cite{Seg-Caffeine}.
In the latter, we used a hash function to map each element to one of the corresponding Caffeine managed segments.
Each such Caffeine instance was constructed with MAX SIZE/$\#threads$, to hold MAX SIZE elements in total.  
As can be observed, the hit-ratio of segmented Caffeine is nearly identical to that of Caffeine. 
Also, notice that Caffeine attains higher hit-ratio than Guava as it uses the climber policy from~\cite{EEFM18} compared to some version of LRU in Guava. 

\paragraph{Hyperbolic Caching}
We also show that limited associativity has little effect on Hyperbolic
\iftoggle{SMALL}{caching~(\cref{figP12}) and on Hyperbolic caching with the TinyLFU admission policy~(\cref{figS3}).}
{caching~(\cref{figP12,figsprite}) and on Hyperbolic caching with the TinyLFU admission policy~(\cref{figS3}).}
Notice that for Hyperbolic caching, there are slight differences between sampling and limited associativity.
When combining Hyperbolic with TinyLFU admission, these differences become barely noticeable.

\paragraph{Conclusions:}
This motivational study concludes that limited associativity only has a minor impact on the obtained hit ratio. 


\subsection{Throughput Evaluation on Real Traces}

\begin{figure}[t]
	\center{
		\includegraphics[width=0.75\columnwidth]{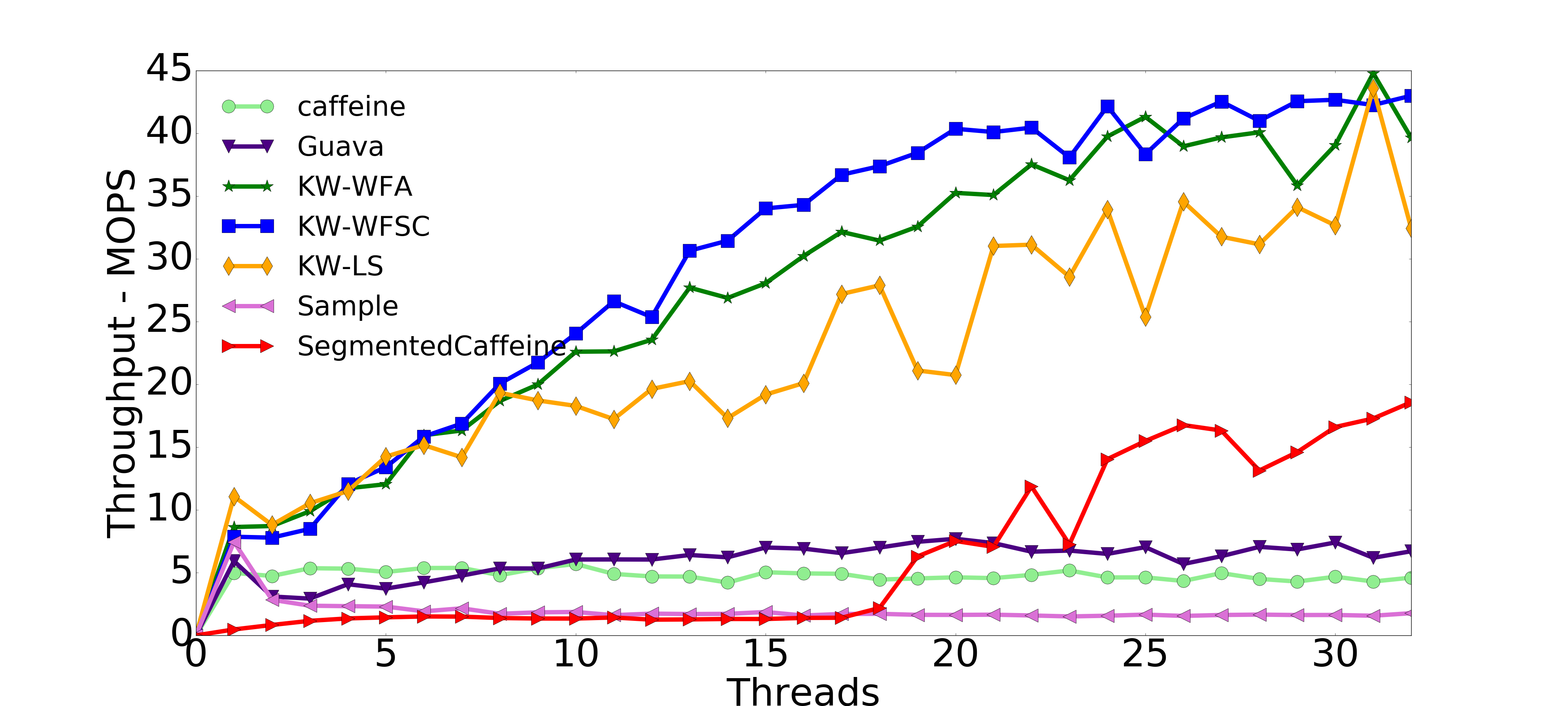}}
	\caption {F1 trace with cache size of $2^{11}$ elements and duration of run of 1 second, run on AMD PowerEdge R7425. }
	\label{fig:f1Throughput}
\end{figure}
\begin{figure}[t]
	\center{
		\includegraphics[width=0.75\columnwidth]{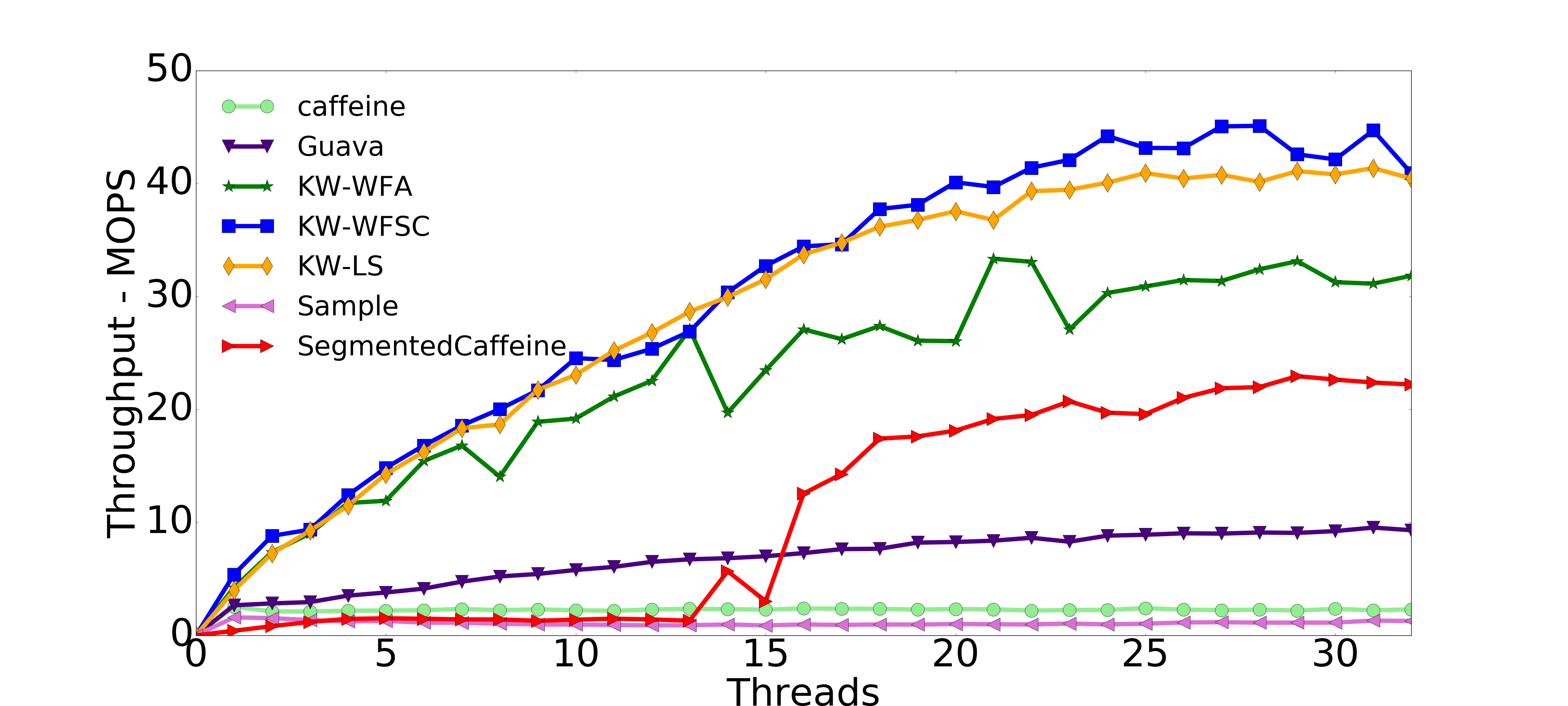}}
	\caption {S3 trace with cache size of $2^{19}$ elements and duration of run of 4 second, run on AMD PowerEdge R7425. }
	\label{fig:S3Throughput}
\end{figure}
\nottoggle{SMALL}{
\begin{figure}[t]
	\center{
		\includegraphics[width=0.75\columnwidth]{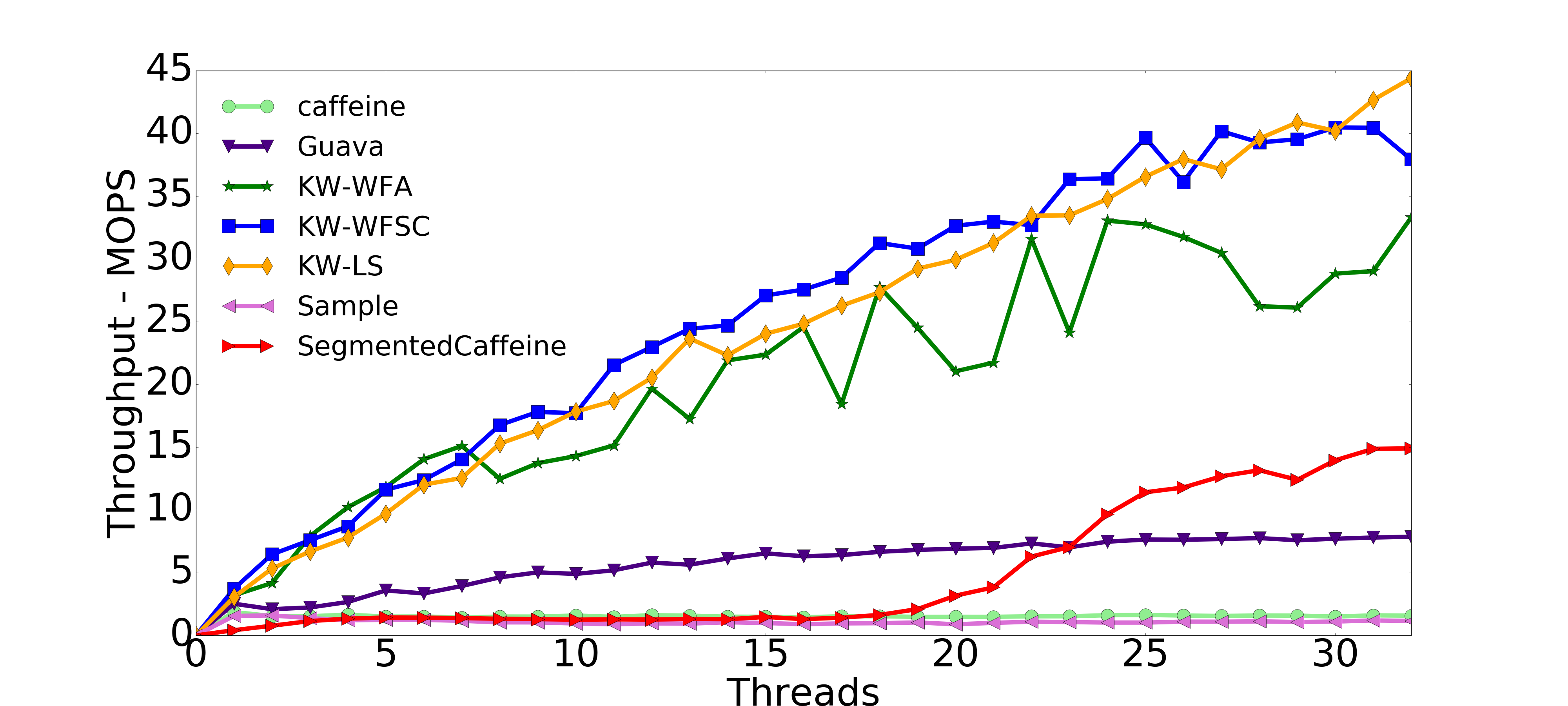}}
	\caption {S1 trace with cache size of $2^{19}$ elements and duration of run of 4 second, run on AMD PowerEdge R7425. }
	\label{fig:S1Throughput}
\end{figure}
}{}
\begin{figure}[t]
	\center{
		\includegraphics[width=0.75\columnwidth]{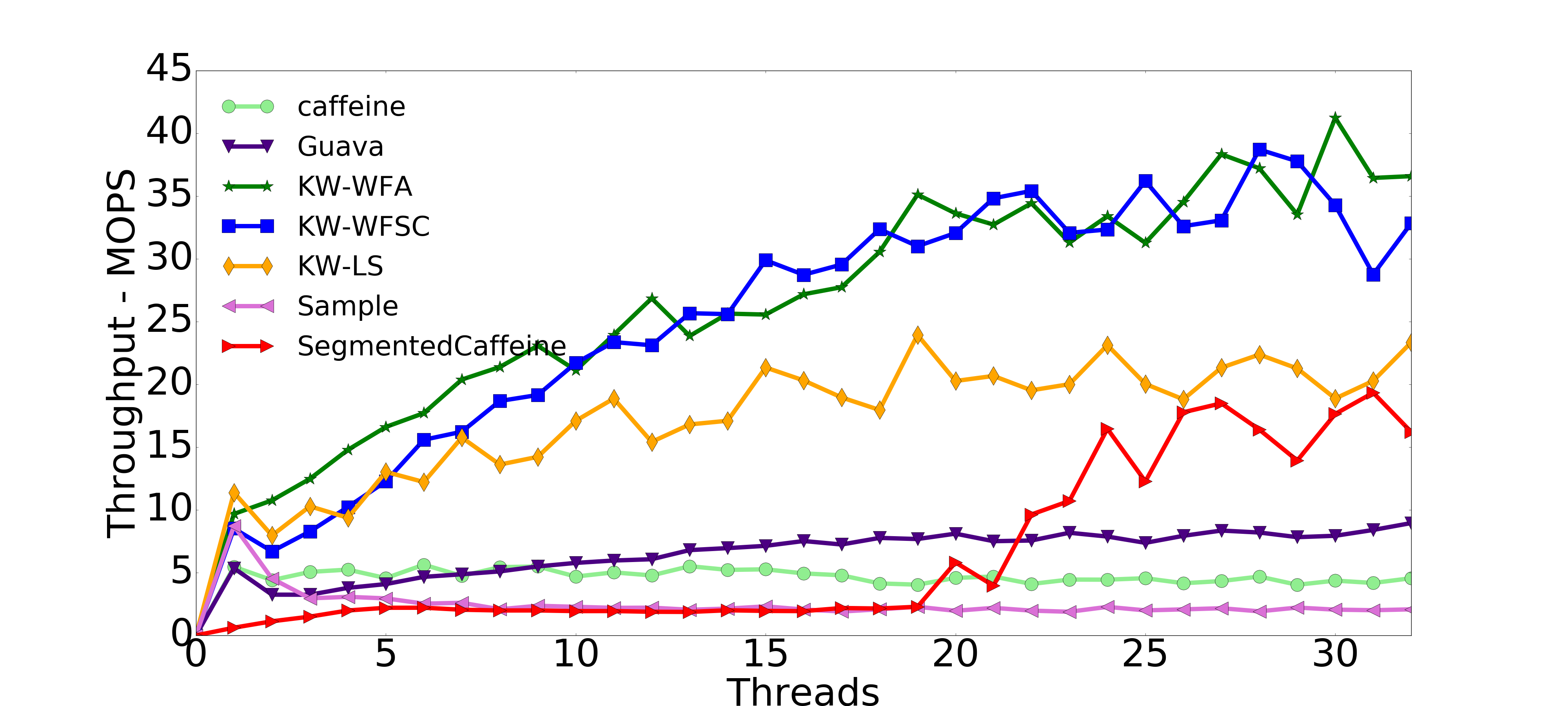}}
	\caption {wiki1190322952 trace with cache size of $2^{11}$ elements and duration of run of 1 second, run on AMD PowerEdge R7425. }
	\label{fig:wiki1190322952throughput}
\end{figure}

\begin{figure}[t]
	\center{
		\includegraphics[width=0.75\columnwidth]{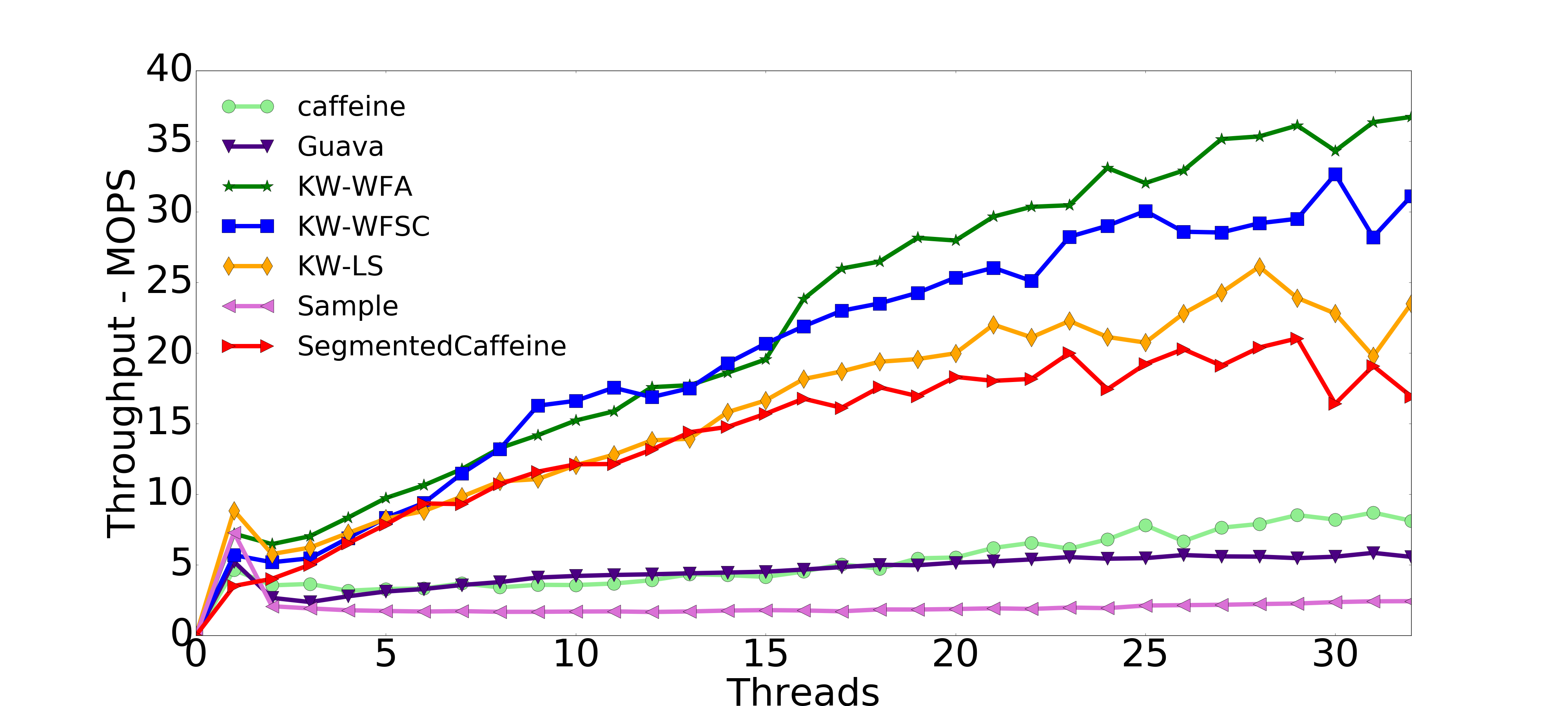}}
	\caption {OLTP trace with cache size of $2^{11}$ elements and duration of run of 1 second, run on AMD PowerEdge R7425. }
	\label{fig:OLTPthroughput}
\end{figure}
\begin{figure}[t]
	\center{
		\includegraphics[width=0.75\columnwidth]{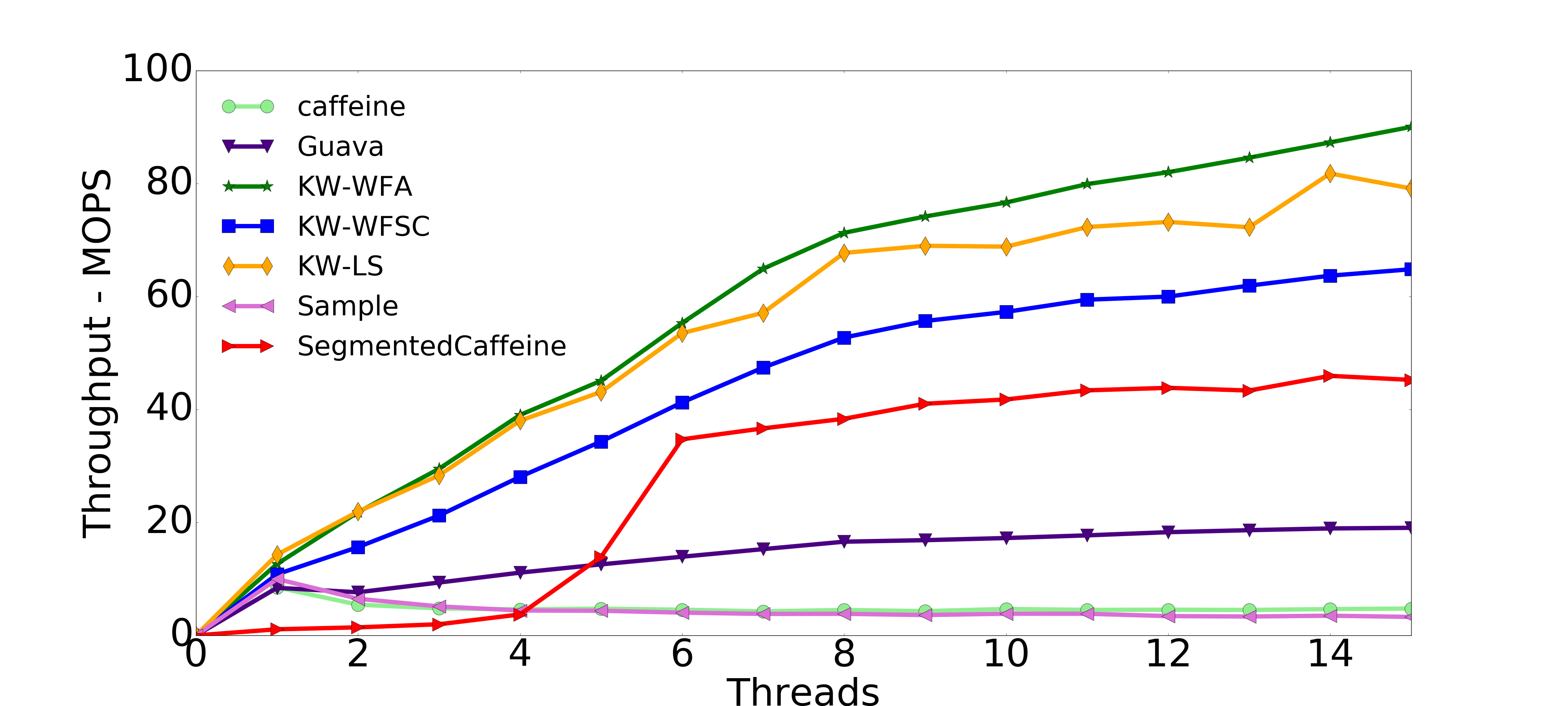}}
	\caption {F2 trace with cache size of $2^{11}$ elements and duration of run of 1 second, run on Intel Xeon E5-2667. }
	\label{fig:f2Throughput}
\end{figure}
\begin{figure}[t]
	\center{
		\includegraphics[width=0.75\columnwidth]{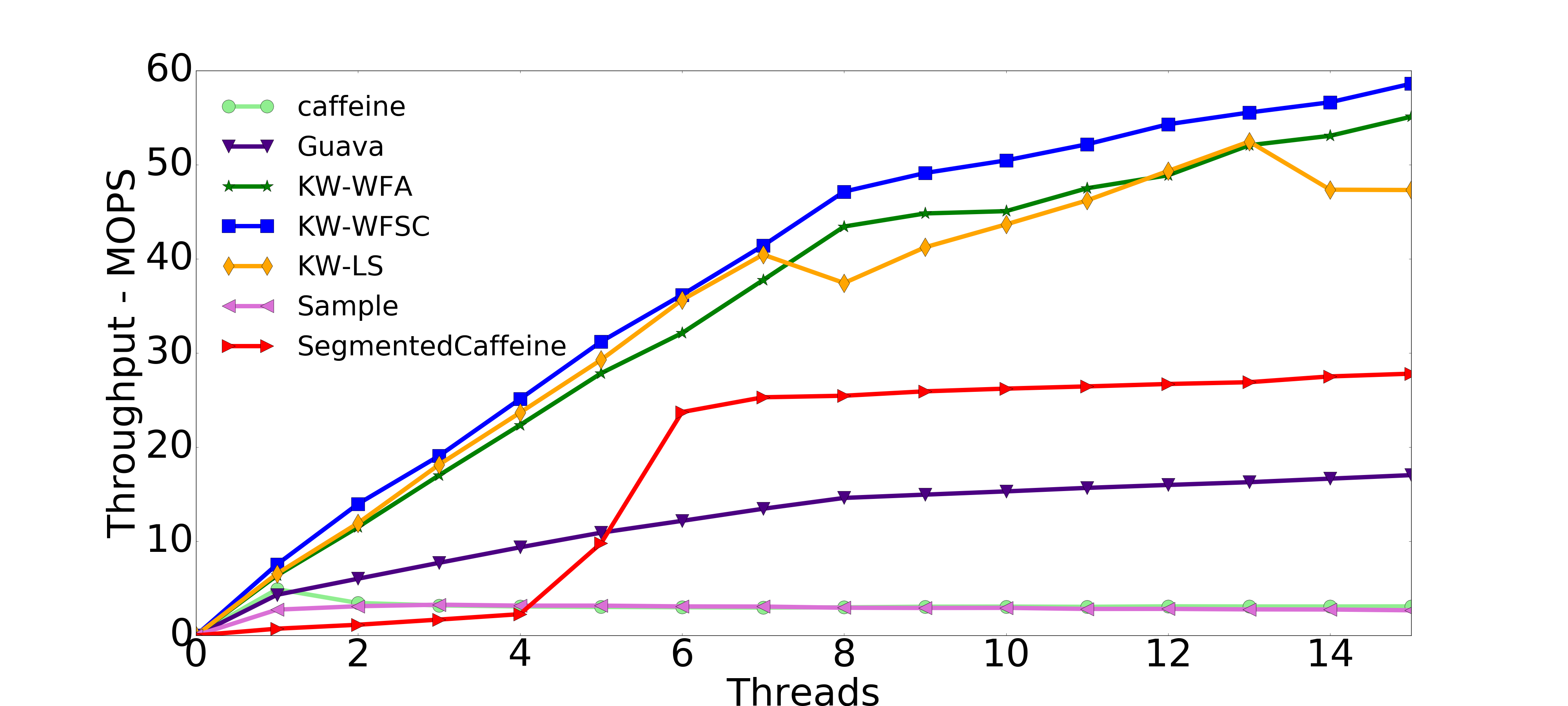}}
	\caption {W3 trace with cache size of $2^{19}$ elements and duration of run of 4 second, run on Intel Xeon E5-2667. }
	\label{fig:W3Throughput}
\end{figure}
\begin{figure}[h]
	\center{
		\includegraphics[width=0.75\columnwidth]{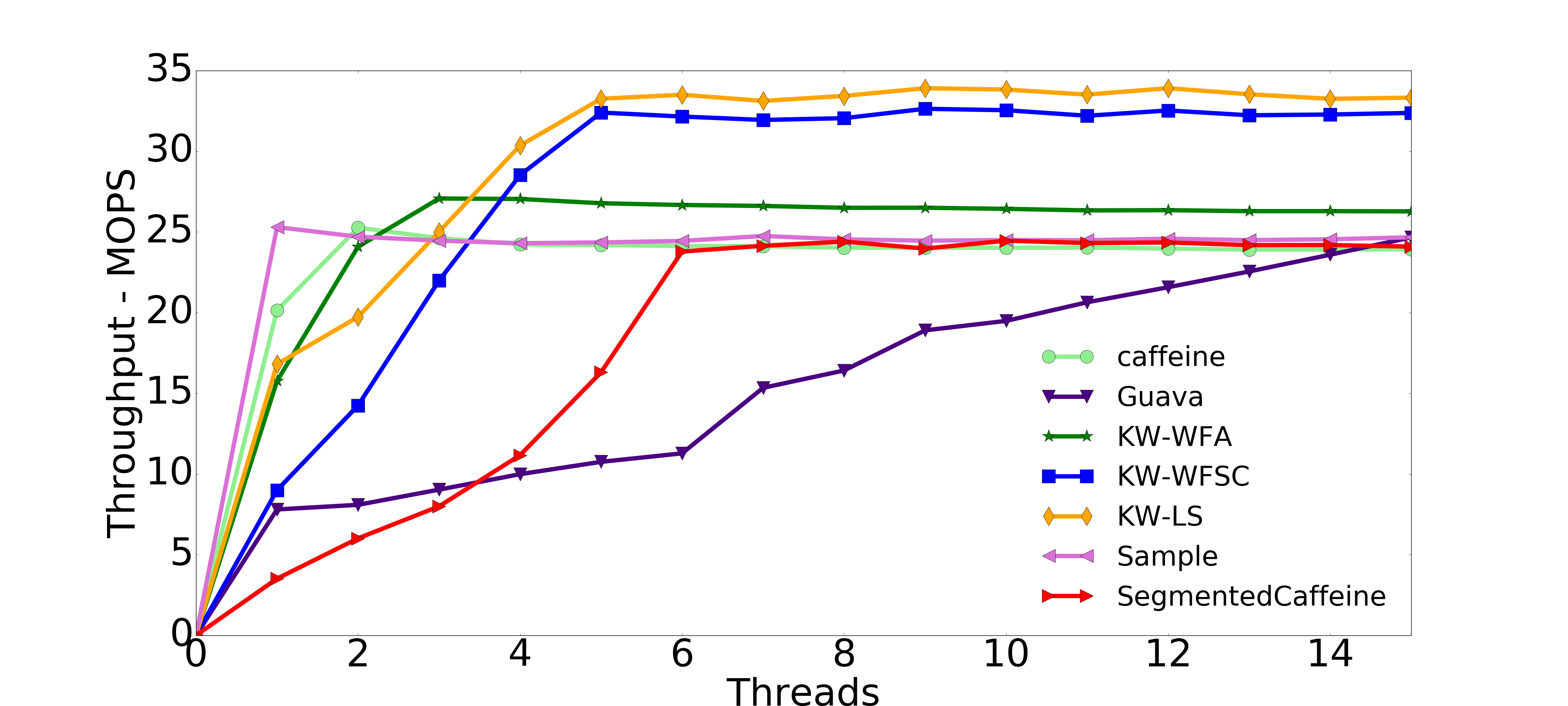}}
	\caption {multi1 trace with cache size of $2^{11}$ elements and duration of run of 1 second, run on Intel Xeon E5-2667. }
	\label{fig:multi1Throughput}
\end{figure}
\begin{figure}[t]
	\center{
		\includegraphics[width=0.75\columnwidth]{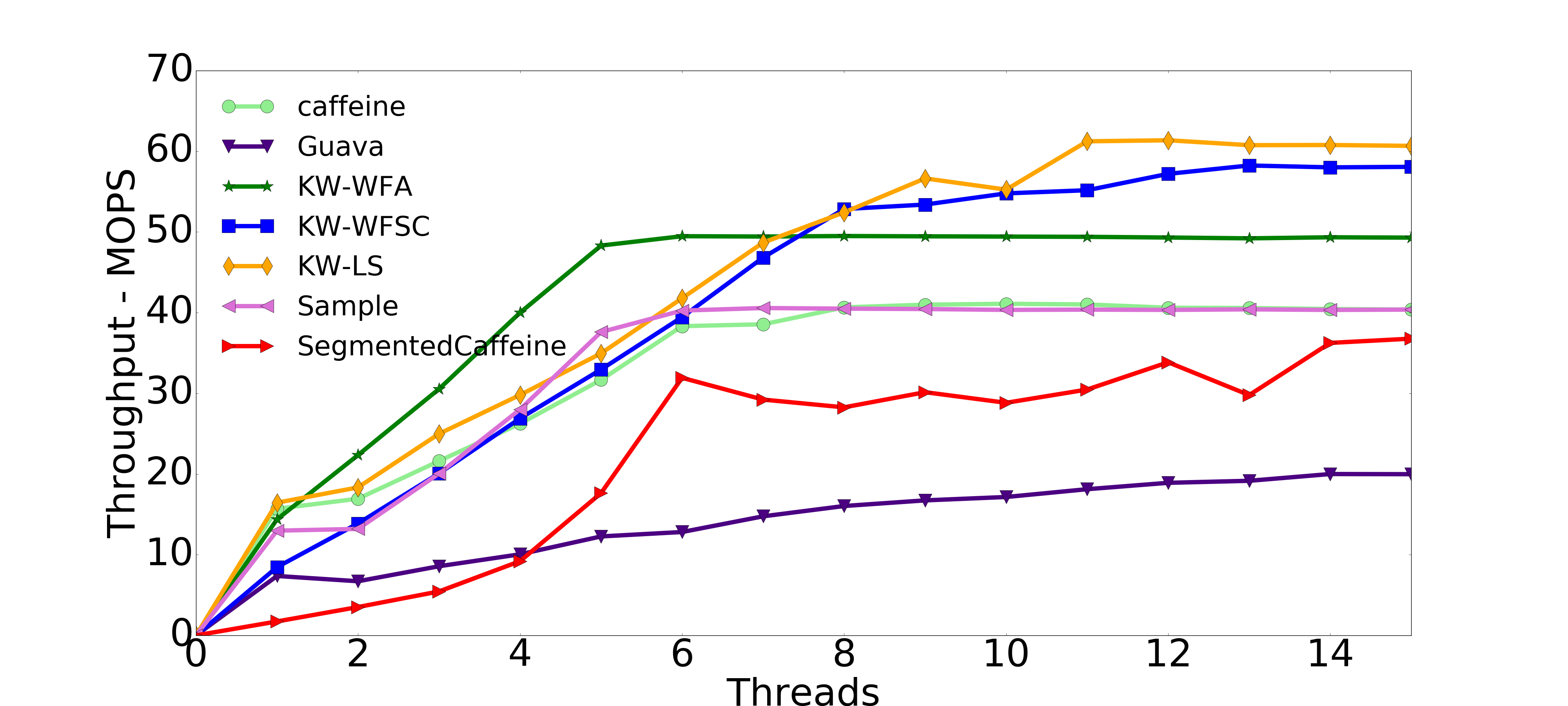}}
	\caption {multi2 trace with cache size of $2^{11}$ elements and duration of run of 1 second, run on Intel Xeon E5-2667. }
	\label{fig:multi2Throughput}
\end{figure}
\begin{figure}[t]
	\center{
		\includegraphics[width=0.75\columnwidth]{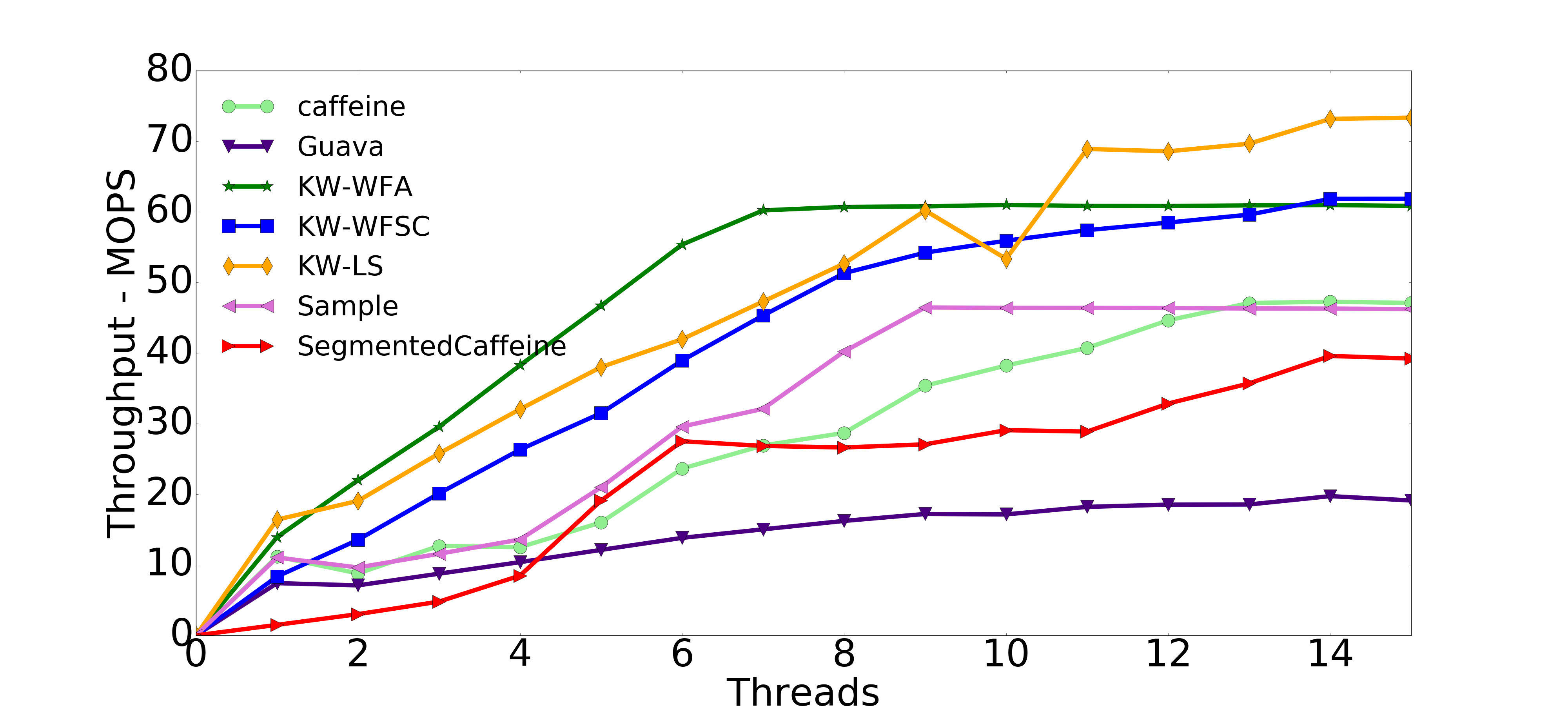}}
	\caption {multi3 trace with cache size of $2^{11}$ elements and duration of run of 1 second, run on Intel Xeon E5-2667. }
	\label{fig:multi3Throughput}
\end{figure}
 \begin{figure}[t]
 	\center{
 		\includegraphics[width=0.75\columnwidth]{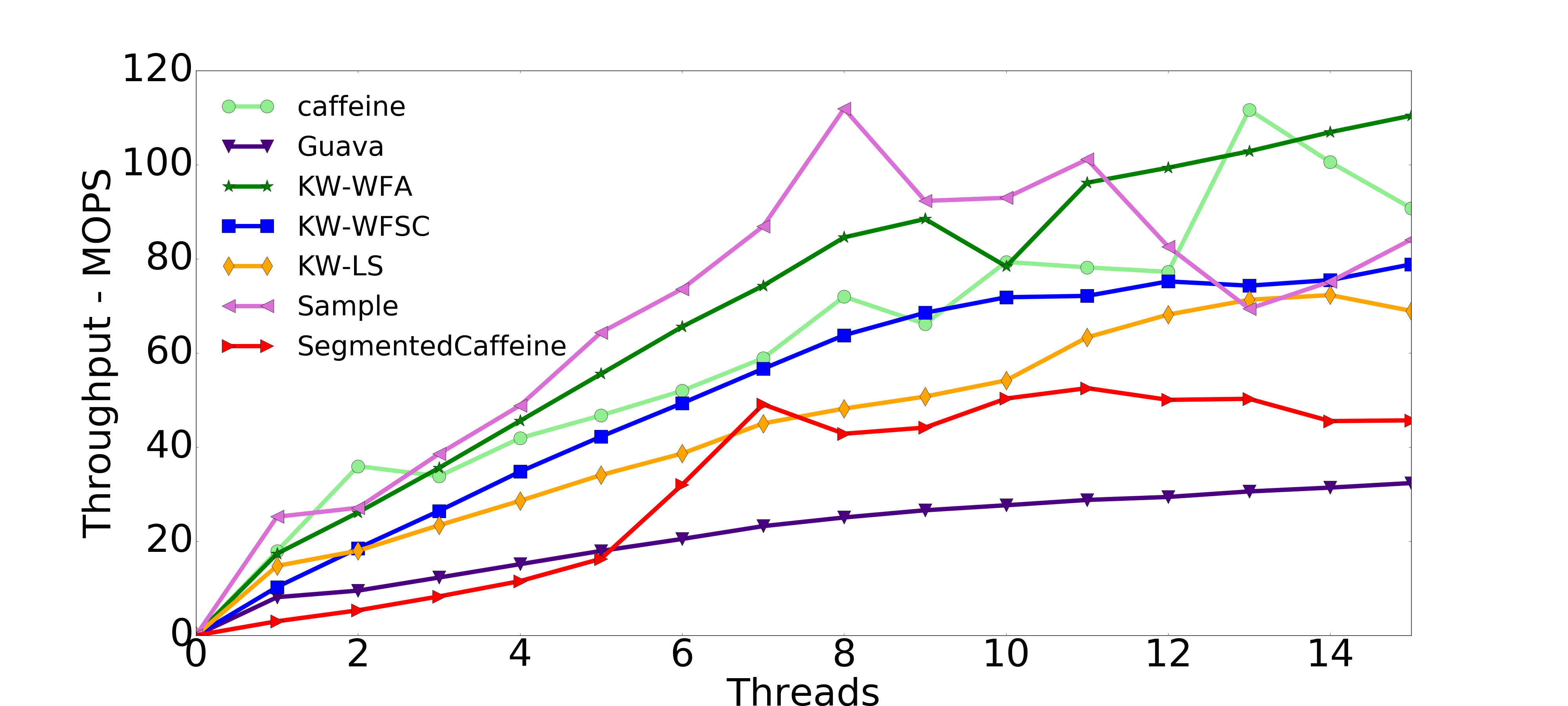}}
 	\caption {sprite trace with cache size of $2^{11}$ elements and duration of run of 1 second, run on Intel Xeon E5-2667.  }
 	\label{fig:spriteThroughput}
 \end{figure}
\begin{figure}[t]
	\center{
		\includegraphics[width=0.75\columnwidth]{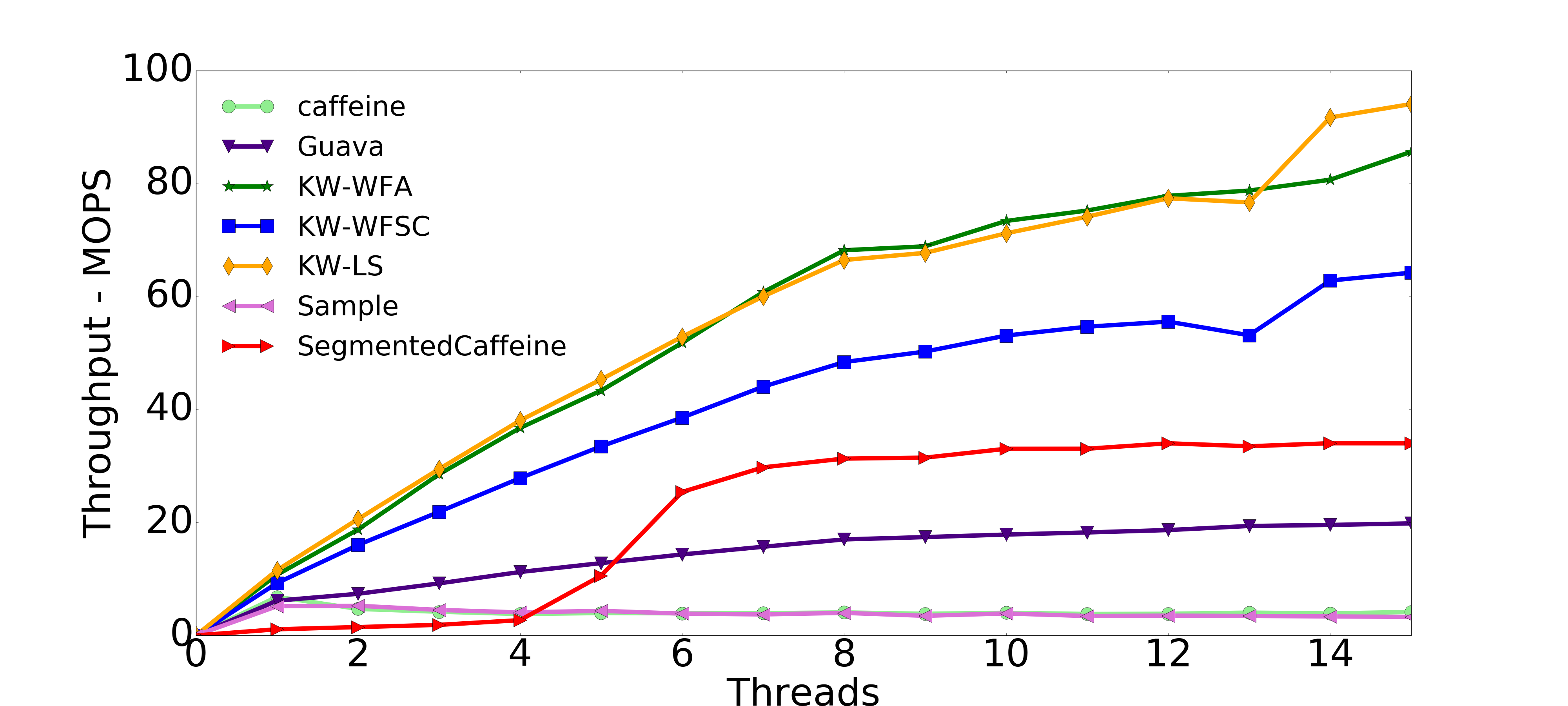}}
	\caption {P12 trace with cache size of $2^{17}$ elements and duration of run of 2 second, run on Intel Xeon E5-2667. }
	\label{fig:P12Throughput}
\end{figure}
\begin{figure}[t]
	\center{
		\includegraphics[width=0.75\columnwidth]{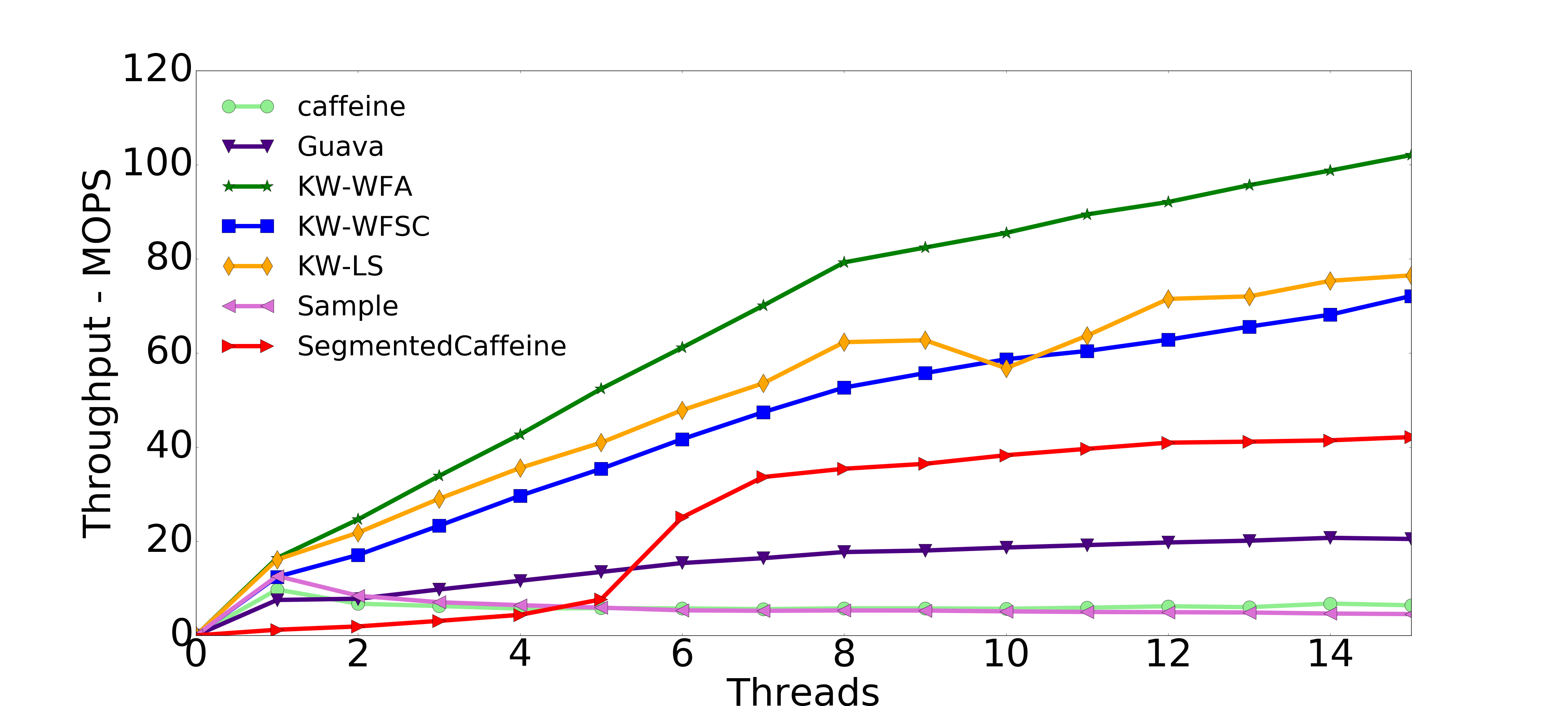}}
	\caption {wiki1191277217 trace with cache size of $2^{11}$ elements and duration of run of 1 second, run on Intel Xeon E5-2667. }
	\label{fig:wiki1191277217Throughput}
\end{figure}

The throughput measurements, in Millions of Get/Put operations per second, with different number of threads are shown in
\iftoggle{SMALL}{\Cref{fig:f1Throughput,fig:S3Throughput,fig:wiki1190322952throughput,fig:OLTPthroughput}}
{\Cref{fig:f1Throughput,fig:S3Throughput,fig:S1Throughput,fig:wiki1190322952throughput,fig:OLTPthroughput}}
for the AMD PowerEdge R7425 server and~\cref{fig:f2Throughput,fig:W3Throughput,fig:multi1Throughput,fig:multi2Throughput,fig:multi3Throughput,fig:spriteThroughput,fig:P12Throughput,fig:wiki1191277217Throughput} for the Intel Xeon E5-2667 server.

As can be observed, the \NAMNECACHE{} variations \FREEAR{}, \COUNTER{} and \LOCKSET{} gain the highest throughput.
Caffeine and sampled exhibited the worst worst-case performance, and it does not improve with additional threads for all traces except Sprite~(\cref{fig:spriteThroughput}), which will be addressed shortly.

For Caffeine, this is explained as put operations in Caffeine are performed in the background by a single thread. 
Guava is considerably faster than Caffeine in traces with significant number of misses because it performs put operations in the foreground in parallel.
As for the sampled approach, on every miss it needs to calculate $K$ random numbers and access these $K$ random locations to create the sample.
Even when $K$ is as small as 8, this already incurs a significant overhead.

Our limited associativity design therefore serves misses much faster than sampled (and Caffeine).
However, sampled may be quicker in serving a hit since it only accesses the (single) accessed object's meta-data, while our limited associativity design always accesses the entire set.
Further, the larger the cache is, the higher is the number of sets (since $K$ is independent of the cache size).
Hence, the hash function of the limited associativity approach is likely to yield better load balancing for larger caches.

For Sprite~(\cref{fig:spriteThroughput}), the hit ratio is high and the caches are small.
Hence, our limited associativity under-performs compared to sampled.
Yet, it still delivers tens of millions of operations per second.

\begin{figure}[t]
\center{
\ifdefined\ICPP
\includegraphics[width=0.8\columnwidth]{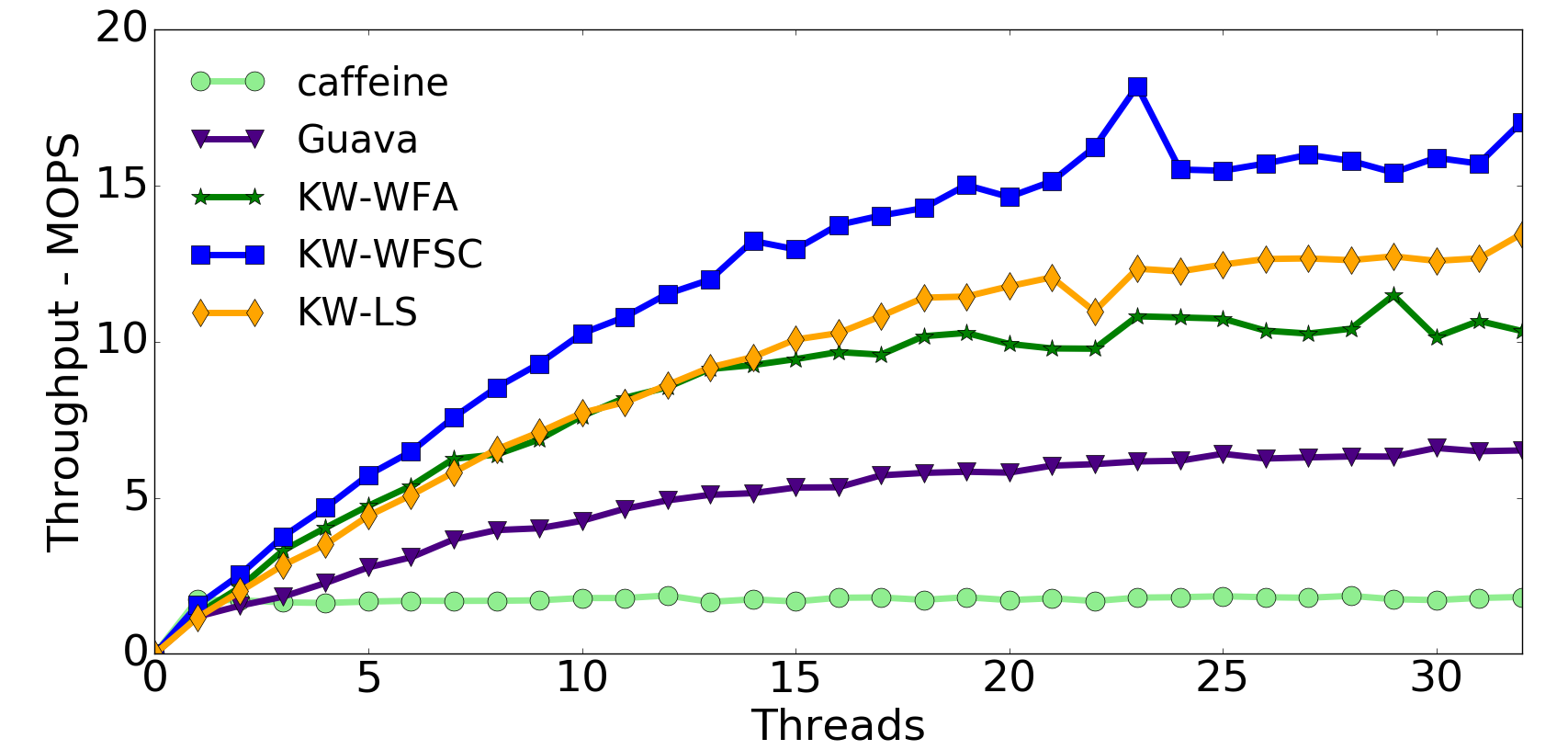}}
\else
\includegraphics[width=1\columnwidth]{graph/GetPut50.png}}
\fi
\caption {Synthetic trace of read and write, 100\% miss ratio, with cache size of $2^{21}$ elements, and duration of run of 5 second,   run on AMD PowerEdge R7425.}

\label{fig:miss}
\end{figure}

\subsection{Synthetic Throughput Evaluation}
\ifdefined\ICPP
\else
To better understand the factors that affect performance of our limited associativity solution and of Caffeine and Guava, we run synthetic traces that mimic various hit ratios.
We measure 5 seconds with a cache of $2^{21}$ elements on the AMD PowerEdge R7425 server.
The number of operations performed during this time interval varies between algorithms according to their speed.
\fi 

\paragraph{100\% miss ratio:}
The worst-case performance for any cache is the case of 100\% misses. 
In this evaluation, we perform a single get and a single put operation for each item and each item is requested only once. 
This extreme case helps us characterize the operational framework of various cache libraries. 

Figure~\ref{fig:miss} shows the attained throughput (measured in Millions of Get/Put operations per second). 
As can be observed, Caffeine has the worst worst-case performance and it does not improve with additional threads. 
This is explained as put operations in Caffeine are performed in the background by a single thread. 
Guava is considerably faster than Caffeine, because it performs put operations in the foreground, and next are our own three implementations of limited associativity caches. 
Here, the \COUNTER{} is the fastest of our algorithms because it is optimized towards fast replacement of the cache victim.

\paragraph{100\% hit ratio:}
Next, we repeat the last experiment but only perform Get operations to mimic having 100\% hit ratio. 
Our results are in Figure~\ref{fig:onlyGet}. 
Notice that here, Caffeine is considerably faster than all the alternatives since its read operations are simple hash table read operations. 
Next is Guava, which is very fast, and our algorithms are last. 
Interestingly, our algorithms attain very similar performance to the 100\% miss case, as we always scan the set - a full scan on a miss and half the set on average for a hit.
This implies that their performance is less sensitive to the workload characteristics. 
The performance of Caffeine and Guava varies greatly according to the cache effectiveness. 
We conclude that the break-even point between our approach vs. Caffeine and Guava depends on the hit ratio of the trace. 
So our next evaluation aims to determine at what hit ratios limited associativity caches are expected to outperform these established libraries. 

\begin{figure}[t]
	\center{
\ifdefined\ICPP
	\includegraphics[width=0.8\columnwidth]{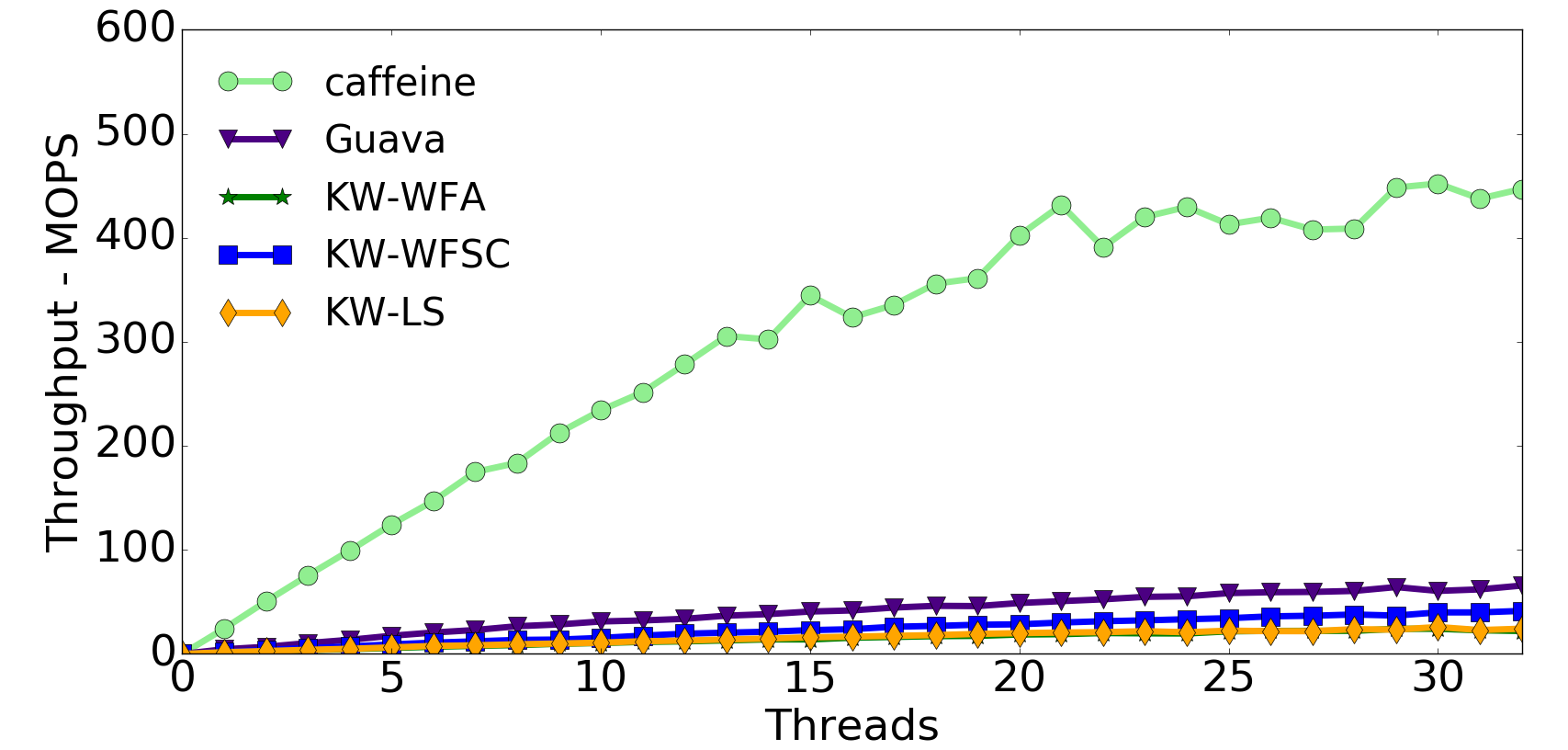}}
\else
	\includegraphics[width=1\columnwidth]{graph/onlyGet.png}}
\fi	
	\caption {Synthetic trace to mimic 100\% hit ratio with cache size of $2^{21}$ elements, and duration of run of 5 second,   run on AMD PowerEdge R7425. }
	
	\label{fig:onlyGet}
\end{figure}

\paragraph{95\% hit ratio:}
We continue and simulate 95\% hit ratio by performing one put operations for every 20 read operations. 
The results in Figure~\ref{fig:getput5} show that in 95\% hit ratio Guava is the fastest algorithm, while Caffeine is fastest for a small number of threads. 
Here, again Caffeine does not scale with the number of threads because the bottleneck still becomes the write buffer. 

\begin{figure}[h]
	\center{
\ifdefined\ICPP
	\includegraphics[width=0.8\columnwidth]{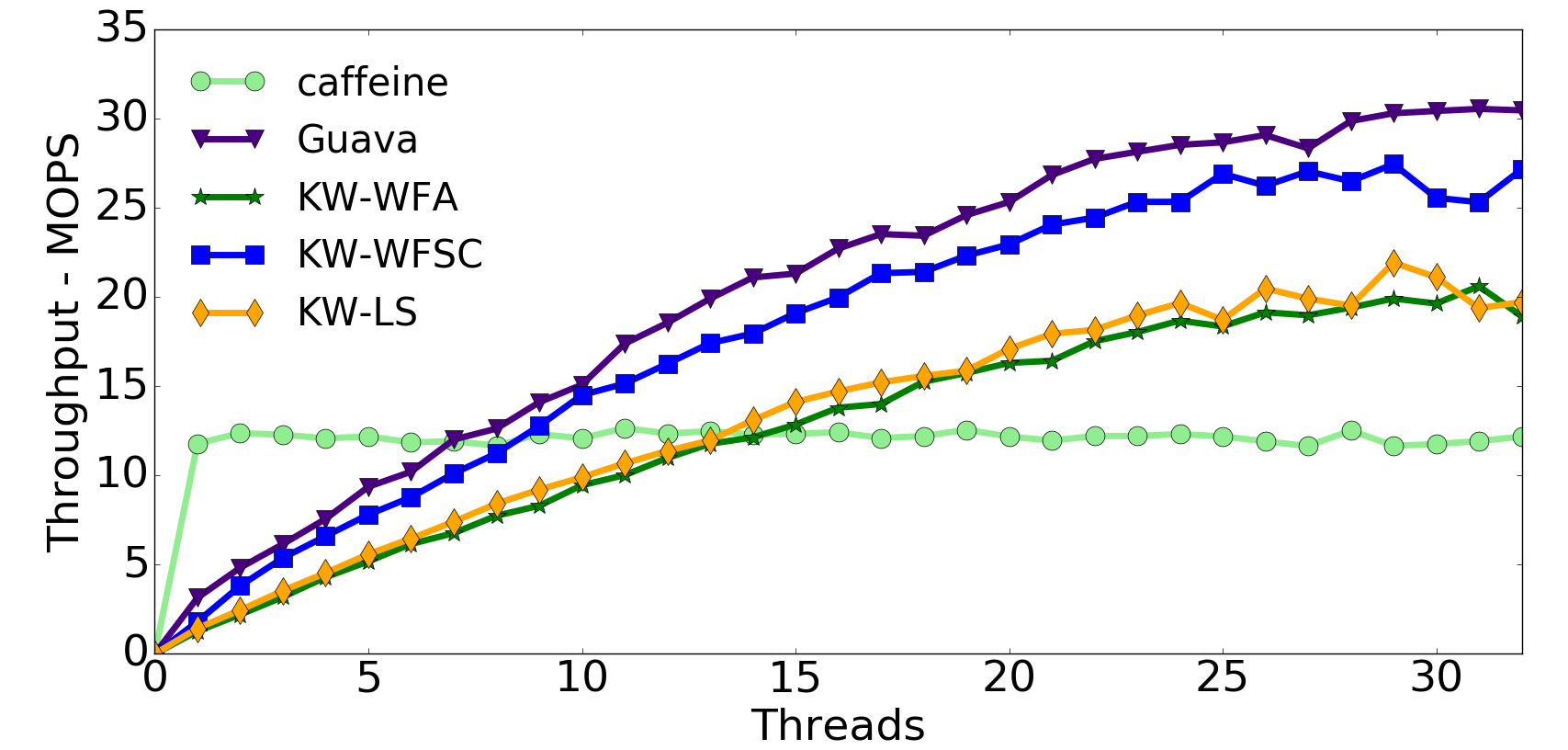}}
\else
		\includegraphics[width=1\columnwidth]{graph/getPut5.png}}
\fi	
	\caption {Synthetic trace to mimic 95\% hit ratio with cache size of $2^{21}$ elements, and duration of run of 5 second,   run on AMD PowerEdge R7425 }	
	\label{fig:getput5}
\end{figure}

\paragraph{90\% hit ratio:}
Next, we increase the number of writes to 1 in 10 to simulate 90\% hit ratio. 
The results in Figure~\ref{fig:getput10} show that \COUNTER{} is already faster than Guava and Caffeine when there are enough threads.  
That is, we conclude that our approach attains higher throughput than Caffeine and Guava as long as the hit ratio is less than 90\%, which implies our attractiveness for a very large variety of workloads and cache sizes. 

\begin{figure}[t]
	\center{
\ifdefined\ICPP
	\includegraphics[width=0.8\columnwidth]{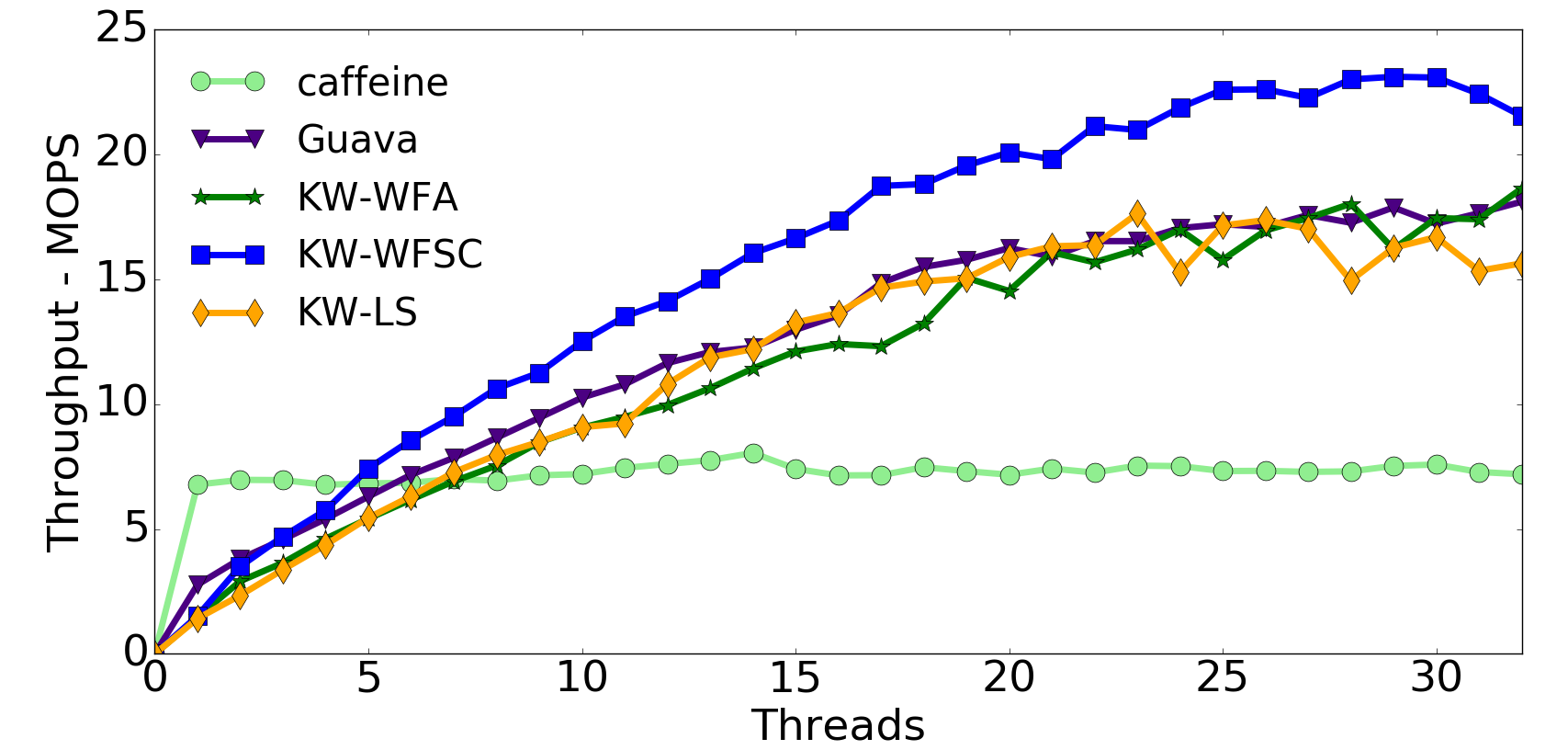}}
\else
	\includegraphics[width=1\columnwidth]{graph/getPut10.png}}
\fi	
	\caption {Synthetic trace to mimic 90\% hit ratio with cache size of $2^{21}$ elements, and duration of run of 5 second,   run on AMD PowerEdge R7425 }
	
	\label{fig:getput10}
\end{figure}

\paragraph{Conclusions:}
In our limited associativity design, get and put have similar performance.
In contrast, in sampled, Caffeine and Guava gets are fast and puts are slow.
Our performance is expected to be better than sampled, Caffeine and Guava when the hit ratio is below 90\%, which captures most real traces and cache sizes. 
Our approach scales better with the number of threads in almost all~scenarios.

\section{Discussion}
\label{sec:discussion}
\ifdefined\ICPP
\else
Non distributed software caches are almost always fully associative and often research papers that discuss such caches do not even explicitly mention this design choice. 
In hardware systems, the vast majority of caches are designed with limited associativity and the benefits of such a design are well understood.
\fi
Our work argues that software cache associativity is an important design choice that should be given consideration. 
We note that limited associativity caches are easier and simpler to implement than fully associative caches. 
They are embarrassingly parallel and have lower memory overheads compared to fully associative caches. 
On the other hand, limited associativity may negatively impact the hit ratio, which is an important quality measure for every cache. 

\ifdefined\ICPP
Sampled techniques, such as Sampled LRU of Redis~\cite{redis-lru} and the sampling performed by Hyperbolic caching~\cite{Hyperbolic}, reduce accuracy in favor of speed.
\else
Trading-off hit ratio for speed is a common compromise in cache design, e.g., MemC3~\cite{MemC3} shows that the performance of MemCacheD can be improved by using ``dumber caching'' that offers better performance.
Similarly, sampled techniques, such as Sampled LRU of Redis~\cite{redis-lru} and the sampling performed by Hyperbolic caching~\cite{Hyperbolic}, reduce accuracy in favor of speed.
\fi
This is because one can always improve the hit ratio by increasing the cache size, but there are no simple techniques to improve the operation speed of the cache. 
On the other hand, caches' utility is due to performance gaps between different ways to fetch data. 
\ifdefined\ICPP
For example, main memory is faster than secondary storage.
\else
For example, main memory is faster than secondary storage and accessing compressed data is slower than accessing uncompressed data.
\fi 
Therefore, the potential benefits from caching are based on the gap between the operation time of the cache and the operation time of the alternative method. 
Thus, improving the operation time of caches makes every cache hit more beneficial to the application (as long as we do not significantly reduce the hit~ratio). 

Our work evaluated limited associativity caches on diverse (real) application workloads over multiple popular cache management policies. 
We showed that limited associativity caches attain a similar (and often nearly identical) hit ratio compared to fully associative caches for many workloads and policies. 
Our work also suggested three different implementations of limited associativity caches.
We showed an improvement of up to x$5$ compared to the Caffeine and Guava libraries that together dominate the Java ecosystem, especially in multi-threaded settings.
Given these findings, we believe that such caching libraries would benefit from adopting limited associativity cache designs.

Among our different implementations, when the trace is uniformly separated without heavy hitters, so the insertions are usually to different buckets, it is better to use the KW-LS.
If there are many reads and not many writes, it is better to use the KW-WFSC as the reads are continuous in memory and hence faster than KW-WFA where the array is an array of pointers so the performance of the reads is compromised.
Otherwise, it is best to use the KW-WFA as it uses only one atomic operation in contrast to KW-WFC with three atomic operations, which slow down the cache updates. 

\nottoggle{SMALL}{
Interestingly, for hardware caches, the seminal work of Qureshi, Thompson and Patt~\cite{v-way} has found that to fully optimize the cache hit ratio, one should vary the associativity of a cache on a per-set basis in response to the demands of the program.
Supporting such dynamic associativity level in software is straight-forward and could help improve the hit-ratio gaps between $8$ ways and fully associative in traces like WS2.
Exploring this direction is left for future work.
}

\paragraph*{Acknowledgments:}
This work was partially	funded by ISF grant \#1505/16 and the Technion-HPI research school as well as the Cyber Security Research Center and the Data Science Research Center at Ben-Gurion University




 \bibliographystyle{abbrv}
	\bibliography{mybibfile}

\end{document}